\documentclass[a4paper,11pt]{amsart}
\usepackage[active]{srcltx}

\usepackage{graphicx}
\usepackage{amsmath}
\usepackage{amssymb}
\usepackage{hyperref}

\addtolength{\evensidemargin}{-15mm}
\addtolength{\oddsidemargin}{-15mm}
\addtolength{\textwidth}{30mm}
\addtolength{\textheight}{20mm}
\addtolength{\topmargin}{-10mm}

\newtheorem{thm}{Theorem}[section]
\newtheorem{lem}[thm]{Lemma}%
\newtheorem{prop}[thm]{Proposition}%
\newtheorem{cor}[thm]{Corollary}%
\theoremstyle{remark}
\newtheorem{remark}{Remark}[section] %

\theoremstyle{plain}

\numberwithin{equation}{section}

\def\CC{{\mathbb C}}

\def\RR{{\mathbb R}}

\def\ZZ{{\mathbb Z}}

\def\Ad{\operatorname{Ad}}
\def\Span{\operatorname{Span}}

\def\veca{{\text{\boldmath$a$}}}
\def\vecb{{\text{\boldmath$b$}}}

\def\vecc{{\text{\boldmath$c$}}}

\def\vece{{\text{\boldmath$e$}}}

\def\vecell{{\text{\boldmath$\ell$}}}
\def\vecm{{\text{\boldmath$m$}}}

\def\vecq{{\text{\boldmath$q$}}}

\def\vect{{\text{\boldmath$t$}}}
\def\vecu{{\text{\boldmath$u$}}}
\def\vecv{{\text{\boldmath$v$}}}

\def\vecw{{\text{\boldmath$w$}}}
\def\vecx{{\text{\boldmath$x$}}}

\def\vecy{{\text{\boldmath$y$}}}
\def\vecz{{\text{\boldmath$z$}}}

\def\vecbeta{{\text{\boldmath$\beta$}}}
\def\vecgamma{{\text{\boldmath$\gamma$}}}

\def\vecxi{{\text{\boldmath$\xi$}}}
\def\veckappa{{\text{\boldmath$\kappa$}}}

\def\vecnull{{\text{\boldmath$0$}}}

\def\scrA{{\mathcal A}}
\def\scrB{{\mathcal B}}

\def\scrD{{\mathcal D}}
\def\scrE{{\mathcal E}}
\def\scrF{{\mathcal F}}

\def\scrK{{\mathcal K}}
\def\scrL{{\mathcal L}}

\def\scrN{{\mathcal N}}
\def\scrO{{\mathcal O}}
\def\scrQ{{\mathcal Q}}
\def\scrP{{\mathcal P}}

\def\scrV{{\mathcal V}}
\def\scrW{{\mathcal W}}

\def\fC{{\mathfrak C}}
\def\fD{{\mathfrak D}}

\def\fQ{{\mathfrak Q}}

\def\fS{{\mathfrak S}}

\def\fU{{\mathfrak U}}

\def\fZ{{\mathfrak Z}}

\def\diam{\operatorname{diam}}

\def\Proj{\operatorname{Proj}}

\def\diag{\operatorname{diag}}

\def\intl{{\operatorname{int}}}

\def\C{\operatorname{C{}}}

\def\L{\operatorname{L{}}}

\def\GL{\operatorname{GL}}

\def\S{\operatorname{S{}}}

\def\SL{\operatorname{SL}}
\def\ASL{\operatorname{ASL}}

\def\SO{\operatorname{SO}}

\def\T{\operatorname{T{}}}
\def\tr{\operatorname{tr}}

\def\supp{\operatorname{supp}}

\def\vol{\operatorname{vol}}

\def\GamG{\Gamma\backslash G}

\def\SLSL{\SL(n,\ZZ)\backslash\SL(n,\RR)}

\def\trans{\,^\mathrm{t}\!}

\def\Onder#1#2#3#4#5{#1 \setbox0=\hbox{$#1$}\setbox1=\hbox{$#2$}
       \dimen0=.5\wd0 \dimen1=\dimen0 \dimen2=\dp0 \dimen3=\dimen2
       \advance\dimen0 by .5\wd1 \advance\dimen0 by -#4
       \advance\dimen1 by -.5\wd1 \advance\dimen1 by -#4
       \advance\dimen2 by -#3 \advance\dimen2 by \ht1
       \advance\dimen2 by 0.3ex \advance\dimen3 by #5
        \kern-\dimen0\raisebox{-\dimen2}[0ex][\dimen3]{\box1}
       \kern\dimen1}

\newcommand{\GaG}{\Gamma\backslash G}

\newcommand{\lasl}{\mathfrak{asl}}
\newcommand{\lsl}{\mathfrak{sl}}
\newcommand{\ig}{\mathfrak{g}}
\newcommand{\ih}{\mathfrak{h}}

\newcommand{\Q}{\mathbb{Q}}
\newcommand{\R}{\mathbb{R}}
\newcommand{\Z}{\mathbb{Z}}

\newcommand{\minmod}{\text{ mod }}

\newcommand{\col}{\: : \:}

\newcommand{\bn}{\mathbf{0}}

\newcommand{\ve}{\varepsilon}

\newcommand{\matr}[4]{\left( \begin{matrix} #1 & #2 \\ #3 & #4 \end{matrix} \right) }
\newcommand{\smatr}[4]{\bigr( \begin{smallmatrix} #1 & #2 \\ #3 & #4 \end{smallmatrix} \bigr) }

\title{Free path lengths in quasicrystals}
\author{Jens Marklof}
\author{Andreas Str\"ombergsson}
\address{School of Mathematics, University of Bristol,
Bristol BS8 1TW, U.K.\newline
\rule[0ex]{0ex}{0ex} \hspace{8pt}{\tt j.marklof@bristol.ac.uk}}
\address{Department of Mathematics, Box 480, Uppsala University,
SE-75106 Uppsala, Sweden\newline
\rule[0ex]{0ex}{0ex} \hspace{8pt}{\tt astrombe@math.uu.se}}
\date{\today}
\thanks{J.M.\ is supported by a Royal Society Wolfson Research Merit Award, a Leverhulme Trust Research Fellowship and ERC Advanced Grant HFAKT. A.S.\ is a Royal Swedish Academy of Sciences Research Fellow supported by
a grant from the Knut and Alice Wallenberg Foundation.}

\begin{document}

\begin{abstract}
Previous studies of kinetic transport in the Lorentz gas have been limited to cases where 
the scatterers are distributed at random (e.g.\ at the points of a spatial Poisson process)
or at the vertices of a Euclidean lattice. 
In the present paper we investigate quasicrystalline scatterer configurations, which are non-periodic, yet strongly correlated. A famous example is the vertex set of the Penrose tiling. Our main result proves the existence of a limit distribution of the free path length, which answers a question of Wennberg. The limit distribution is characterised by a certain random variable on the space of higher dimensional lattices, and is distinctly different from the exponential distribution observed for random scatterer configurations. The key ingredients in the proofs are equidistribution theorems on homogeneous spaces, which follow from Ratner's measure classification.  
\end{abstract}

\maketitle

\section{Introduction}

\subsection{The setting}

The Lorentz gas is defined as an ensemble of non-interacting point particles moving in an array of spherical scatterers placed at the elements of a given point set $\scrP\subset\RR^d$ ($d\geq2$, and we assume that the scatterers do not overlap). Each particle travels with constant velocity along straight lines  until it collides with a scatterer, and is then reflected elastically. We denote by $\vecq(t),\vecv(t)$ the position and velocity of a particle at time $t$. Since the reflection is elastic, speed is a constant of motion; we may assume without loss of generality that $\|\vecv(t)\|=1$. The ``phase space'' is then the unit tangent bundle $\T^1(\scrK_\rho)$
where $\scrK_\rho\subset\RR^d$ is the complement of the set $\scrB^d_\rho + \scrP$ (the ``billiard domain''), and  $\scrB^d_\rho$ denotes the open ball of radius $\rho$, centered at the origin. We parametrize $\T^1(\scrK_\rho)$ by $(\vecq,\vecv)\in\scrK_\rho\times\S_1^{d-1}$, where we use the convention that for $\vecq\in\partial\scrK_\rho$ the vector $\vecv$ points away from the scatterer (so that $\vecv$ describes the velocity {\em after} the collision). 
The Liouville measure on $\T^1(\scrK_\rho)$ is 
\begin{equation} \label{LIOUVILLEDEF}
	d\nu(\vecq,\vecv)=d\!\vol_{\RR^d}(\vecq)\, d\!\vol_{\S_1^{d-1}}(\vecv)
\end{equation}
where $\vol_{\RR^d}$ and $\vol_{\S_1^{d-1}}$ refer to the Lebesgue measures on $\RR^d$ %
and $\S_1^{d-1}$, respectively.

The first collision time corresponding to the initial condition $(\vecq,\vecv)\in\T^1(\scrK_\rho)$ is 
\begin{equation} \label{TAU1DEF0}
	\tau_1(\vecq,\vecv;\rho) = \inf\{ t>0 : \vecq+t\vecv \notin\scrK_\rho \}. 
\end{equation}
Since all particles are moving with unit speed, we may also refer to $\tau_1(\vecq,\vecv;\rho)$ as the free path length. The distribution of free path lengths in the limit of small scatterer density (Boltzmann-Grad limit) has been studied extensively when $\scrP$ is a fixed realisation of a random point process (such as a spatial Poisson process) \cite{Boldrighini83,Gallavotti69,Polya18,Spohn78} and when $\scrP$ is a Euclidean lattice \cite{Boca03,Boca07,Bourgain98,Caglioti03,Dahlqvist97,Dettmann12,Golse00,partI,Nandori12,Polya18}. 
In the Boltzmann-Grad limit, the Lorentz process in fact converges to a random flight process, see \cite{Gallavotti69,Spohn78,Boldrighini83} for the case of random $\scrP$ and \cite{CagliotiGolse,partII,partIII,partIV} for periodic $\scrP$.

\subsection{Cut and project}
In the present work, we consider the Lorentz gas for scatterer configurations $\scrP$ given by regular cut-and-project sets; cf.\ \cite{Kraemer12,Wennberg12}. Examples of such $\scrP$ include large classes of quasicrystals, for instance the vertex set of the classical Penrose tiling. Further examples include all locally finite periodic point sets such as graphene's honeycomb lattice \cite{Boca09,Boca10}. 

To give a precise definition of cut-and-project sets in $\RR^d$, denote by $\pi$ and $\pi_\intl$ the orthogonal projection of $\RR^n=\RR^d\times\RR^m$ onto the first $d$ and last $m$ coordinates, and refer to $\RR^d$ and $\RR^m$ as the {\em physical space} and {\em internal space}, respectively. 
Let $\scrL\subset\RR^n$ be a lattice of full rank.
Then the closure of the set $\pi_\intl(\scrL)$ is an abelian subgroup $\scrA$ of $\RR^m$.
We denote by $\scrA^\circ$ the connected subgroup of $\scrA$ containing $\bn$;
then $\scrA^\circ$ is a linear subspace of $\R^m$,
say of dimension $m_1$, and there exist $\vecb_1,\ldots,\vecb_{m_2}\in\scrL$
($m=m_1+m_2$) such that $\pi_\intl(\vecb_1),\ldots,\pi_\intl(\vecb_{m_2})$ are linearly independent in
$\R^m/\scrA^\circ$ and 
\begin{align}
\scrA=\scrA^\circ+\Z\pi_\intl(\vecb_1)+\ldots+\Z\pi_\intl(\vecb_{m_2}).
\end{align}
We denote by $\mu_\scrA$ the Haar measure of $\scrA$, normalized so that its restriction to $\scrA^\circ$
is the standard $m_1$-dimensional Lebesgue measure.
We also set $\scrV=\R^d\times\scrA^\circ$, and note that $\scrL\cap\scrV$ is a lattice of full rank in $\scrV$.

Given $\scrL$ and a bounded subset $\scrW\subset\scrA$ with non-empty interior, we define
\begin{equation}\label{CUTPROJDEF}
	\scrP(\scrW,\scrL) = \{ \pi(\vecy) : \vecy\in\scrL, \; \pi_\intl(\vecy)\in\scrW \} \subset \RR^d .
\end{equation}
We will call $\scrP=\scrP(\scrW,\scrL)$ a {\em cut-and-project set}, and $\scrW$ the {\em window}. If $\scrW$ has boundary of measure zero with respect to $\mu_\scrA$, we will say $\scrP(\scrW,\scrL)$ is {\em regular}. 
It follows from Weyl equidistribution (see \cite{Hof98}; also Prop.\ \ref{HOFWEYLEXPLPROP} below) 
that for any regular cut-and-project set $\scrP$ and 
any bounded $\scrD\subset\RR^d$ with boundary of measure zero with respect to Lebesgue measure, 
\begin{equation}\label{WEYLCOUNTING1}
\lim_{T\to\infty} \frac{\#\{ \vecb \in \scrL\col \pi(\vecb)\in\scrP\cap T \scrD\}}{T^d} 
= \frac{\vol(\scrD) \mu_\scrA(\scrW)}{\vol(\scrV/(\scrL\cap\scrV))}.
\end{equation}
A further condition often imposed in the quasicrystal literature is that $\pi|_\scrL$ is injective (i.e., the map $\scrL\to \pi(\scrL)$ is one-to-one); we will not require this here. To avoid coincidences in $\scrP$, we simply assume in the following that the window is appropriately chosen so that the map $\pi_\scrW: \{ \vecy\in\scrL : \pi_\intl(\vecy)\in\scrW \}\to \scrP$ is bijective.
Then \eqref{WEYLCOUNTING1} implies
\begin{equation}\label{density000}
\lim_{T\to\infty} \frac{ \#(\scrP\cap T \scrD)}{T^d} = \frac{\vol(\scrD) \mu_\scrA(\scrW)}{\vol(\scrV/(\scrL\cap\scrV))} .
\end{equation}

Under the above assumptions $\scrP(\scrW,\scrL)$ is a Delone set, i.e., uniformly discrete and relatively dense in $\RR^d$.

We may obviously extend the definition of cut-and-project sets $\scrP(\scrW,\widetilde\scrL)$ to affine lattices $\widetilde\scrL=\scrL+\vecx$, for any $\vecx\in\RR^n$; note that $\scrP(\scrW,\scrL+\vecx)=\scrP(\scrW-\pi_\intl(\vecx),\scrL)+\pi(\vecx)$.

\subsection{The distribution of free path lengths in the Boltzmann-Grad limit}

In order to study the distribution of the free path length for random initial data $(\vecq,\vecv)$ we need to specify a probability measure on $\T^1(\scrK_\rho)$. A canonical choice is of course any Borel probability measure which is absolutely continuous with respect to the Liouville measure $\nu$. Given $s>0$ and a Borel probability measure $\Lambda$ on $\T^1(\R^d)$,
we define the family of Borel probability measures $\Lambda^{(s)}$ on $\T^1(\R^d)$ by
\begin{equation}
\Lambda^{(s)}(E)=\Lambda\bigl(\bigl\{(s^{-1}\vecq,\vecv)\col(\vecq,\vecv)\in E\bigr\}\bigr).
\end{equation}

\begin{thm}\label{Thm0}
Given any regular cut-and-project set $\scrP$ there is a non-increasing continuous function $F_\scrP: [0,\infty]\to [0,1]$ with $F_\scrP(0)=1$, $F_\scrP(\infty)=0$, such that for any Borel probability measure $\Lambda$ on $\T^1(\R^d)$ which is absolutely continuous with respect to Lebesgue measure, and any $s_0>0$, $\xi>0$,
we have
\begin{equation}
\Lambda^{(s)}(\{ (\vecq,\vecv)\in \T^1(\scrK_\rho) : \rho^{d-1} \tau_1(\vecq,\vecv;\rho) \geq \xi \})
\to F_\scrP(\xi),
\end{equation}
as $\rho\to0$, uniformly over all $s\geq s_0$.
\end{thm}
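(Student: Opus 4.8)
The plan is to follow the strategy developed for the periodic Lorentz gas in the Marklof--Strömbergsson series \cite{partI,partII}, replacing the space of unimodular lattices $\SLZ\backslash\SLR$ by the homogeneous space that parametrises the relevant family of cut-and-project sets, and to deduce the limiting distribution from an equidistribution theorem on that space proved via Ratner's theorems. Concretely, one fixes the velocity $\vecv$ (say $\vecv = \vece_1$ after a rotation) and writes the event $\{\tau_1(\vecq,\vecv;\rho)\ge\xi\rho^{1-d}\}$ as the condition that the cylinder $\mathfrak{Z}(\xi,\rho) = \{\vecx\in\RR^d : 0<x_1<\xi\rho^{1-d},\ \|(x_2,\ldots,x_d)\|<\rho\}$ based at $\vecq$ contains no point of $\scrP$. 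After rescaling by the linear map that shrinks the cylinder to a fixed region, this becomes a statement about whether a $\rho$-dependent affine copy of the lattice $\scrL$ (intersected with the window constraint $\pi_\intl(\cdot)\in\scrW$) avoids a fixed target set in $\RR^d$. The key point is that as $\rho\to0$ and as $\vecq$ varies over a set of positive Lebesgue measure, the relevant rescaled configuration $g(\rho)\,(\scrL + \text{translation})$ equidistributes in the homogeneous space $X = \Gamma\backslash G$ (with $G$ an appropriate subgroup of $\ASL(n,\RR)$ and $\Gamma$ a lattice in it) with respect to a natural Haar-type measure $\mu_X$; this is exactly the kind of statement proved earlier in the paper using Ratner, and I would cite it directly.

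Granting the equidistribution theorem, the argument proceeds as follows. First, I would define the candidate limit function by
\begin{equation}\label{FPDEF}
F_\scrP(\xi) = \mu_X\bigl(\bigl\{ (\Delta,\vecx)\in X : (\Delta+\vecx)\cap\bigl(\mathfrak{Z}_0(\xi)\bigr) \cap \{\pi_\intl \in \scrW\} = \emptyset \bigr\}\bigr),
\end{equation}
where $\mathfrak{Z}_0(\xi)$ is the fixed limiting cylinder and $\Delta$ ranges over the lattices in the support of $\mu_X$; here the window constraint on the internal part survives in the limit because $\scrW$ is a fixed bounded set. Monotonicity in $\xi$ is immediate since $\mathfrak{Z}_0(\xi)$ is increasing. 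Continuity and the boundary values $F_\scrP(0)=1$, $F_\scrP(\infty)=0$ require that the boundary of the relevant set has $\mu_X$-measure zero (so that the regularity hypothesis on $\scrP$, i.e.\ $\mu_\scrA(\partial\scrW)=0$, feeds in) and that $\mathfrak{Z}_0(\xi)$ eventually captures a point of almost every configuration; the latter follows from the density formula \eqref{density000} together with a Borel--Cantelli / Siegel-transform type estimate on $X$. Second, since any absolutely continuous $\Lambda$ can be approximated by measures with smooth compactly supported densities, and since the event depends on $(\vecq,\vecv)$ only through the rescaled configuration, the equidistribution theorem (applied with test functions approximating the indicator of the ``empty cylinder'' set from inside and outside) yields pointwise convergence $\Lambda^{(s)}(\cdots)\to F_\scrP(\xi)$ for each fixed $s$. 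Third, uniformity over $s\ge s_0$: because the map $s\mapsto\Lambda^{(s)}$ is nothing more than a dilation of the base point, and the equidistribution statement I invoke is itself uniform over starting points in compact families (or can be made so by a standard argument splitting the $\vecq$-integral into pieces), the convergence is uniform; alternatively one notes $\Lambda^{(s)}$ for $s\ge s_0$ is the pushforward of a single fixed family under contractions and uses a compactness/equicontinuity argument for the finitely many relevant test functions.

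The main obstacle, I expect, is controlling the escape of mass to the cusp of $X$ and the associated non-uniformity in the internal-space (window) direction. Unlike the compact situation, the set of configurations with an ``unusually long free path'' corresponds to lattices $\Delta$ venturing deep into the cusp, and one must show these contribute a genuinely continuous, non-vanishing tail rather than an atom at $\xi=\infty$ or a discontinuity; this is where a quantitative non-divergence estimate (in the spirit of Dani--Margulis, or the tail bounds for the Siegel transform on $X$) is needed, and where the precise structure of $G$ and $\Gamma$ coming from the cut-and-project data matters. A secondary technical point is that the window $\scrW$ lives in the possibly disconnected group $\scrA$, so the limiting object is not simply a random lattice but a random lattice together with a constraint in a product of a torus and a Euclidean factor; one must check that the equidistribution theorem is stated on the right space (the one incorporating $\scrA^\circ$ and the finitely many translates $\pi_\intl(\vecb_j)$) so that $F_\scrP$ genuinely depends on $\scrP$ and not just on $d$. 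Once these points are in place, the remaining steps — approximation of $\Lambda$, passage to the limit, and the elementary properties of $F_\scrP$ — are routine.
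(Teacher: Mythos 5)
Your overall route is the paper's route: sandwich the event $\{\rho^{d-1}\tau_1\geq\xi\}$ between two empty-cylinder events, move the cylinder to a fixed $\fZ_\xi$ by applying $(1_d,-\vecq)K(\vecv)\Phi^{-\log\rho}$, sandwich the indicator of $\bigl(\fZ_\xi\times\scrW\bigr)\cap\Z^nhg=\emptyset$ between continuous functions on $\Gamma\backslash\Gamma\widetilde H_g$ (using Jordan measurability of $\fZ_\xi\times\scrW$, i.e.\ regularity of $\scrW$), apply a Ratner-based equidistribution theorem, and control the discrepancy between the upper and lower test functions, as well as the continuity of $F_\scrP$, by the Siegel integral formula; the limit is exactly \eqref{Thm3eq}. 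In the paper this is the proof of Theorem \ref{Thm1}, and Theorem \ref{Thm0} is obtained by the same argument with the equidistribution input upgraded from Theorem \ref{SLEQASLFORGENERICXPROPCOR1} to Theorem \ref{equi2}.

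The genuine gap is precisely the part that distinguishes Theorem \ref{Thm0} from Theorem \ref{Thm1}: the uniformity over $s\geq s_0$. Your justification — that the equidistribution statement is ``uniform over starting points in compact families'' — does not apply, because under $\Lambda^{(s)}$ the base points are $s\vecq$ with $\vecq\in\supp\Lambda$, and as $s\to\infty$ these leave every compact set; likewise your alternative (view $\Lambda^{(s)}$ as pushforwards of a fixed family and invoke compactness/equicontinuity) fails at the same point, since the family $\{\Lambda^{(s)}\col s\geq s_0\}$ is not tight. The mechanism the paper uses (proof of Theorem \ref{equi2}) is a periodization: one folds $s\vecq$ back into a fixed fundamental cell $C$ of $\scrV/(\scrL\cap\scrV)$ by subtracting a lattice vector. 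In the quasicrystal setting this translation has a nontrivial internal component $(\bn,\vecc_2)$, which does \emph{not} come from $\Gamma$; the key observation is that pure internal translations commute with $\varphi_1(\ASL(d,\R))$, so this component can be pushed past the flow and absorbed into the test function, producing a family $f_\vecw$, $\vecw\in\{\bn\}\times\pi_\intl(C)$, of translated test functions, all with the same $\mu_{\widetilde H_g}$-average. Uniformity in $s$ then follows from compactness (Arzel\`a--Ascoli) of this family of test functions and of the family of folded densities $\nu_{\vecc,\vecm,s}$ in $\L^1$, applied to the fixed-$s$ equidistribution statement. Without this reduction your argument proves the pointwise statement for each fixed $s$ (essentially Theorem \ref{Thm1} averaged in $\vecq$) but not the claimed uniform convergence. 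The rest of your sketch — the definition of $F_\scrP$, monotonicity, continuity via the Siegel transform, and the vanishing at $\xi=\infty$ via the density/counting argument — matches the paper and is fine.
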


We highlight the fact that the limit distribution is independent of $\Lambda$. 
Our techniques will allow us to prove limit theorems for more singular measures. A natural example is to fix a generic point $\vecq\not\in\scrP$ and take $\vecv$ random:

\begin{thm}\label{Thm1}
Given any regular cut-and-project set $\scrP$ there is a subset $\fS\subset\R^d$ of Lebesgue measure zero such that
for any $\vecq\in\R^d\setminus\fS$, any $\xi>0$ and any
Borel probability measure $\lambda$ on $\S_1^{d-1}$ which is absolutely continuous with respect to Lebesgue measure,
we have
\begin{equation}\label{Thm1eq}
\lim_{\rho\to 0} \lambda(\{ \vecv\in\S_1^{d-1} : \rho^{d-1} \tau_1(\vecq,\vecv;\rho) \geq \xi \})
= F_\scrP(\xi) ,
\end{equation}
with $F_\scrP(\xi)$ as in Theorem \ref{Thm0}.
\end{thm}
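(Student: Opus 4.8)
The plan is to deduce Theorem \ref{Thm1} from Theorem \ref{Thm0} by an averaging-and-localization argument. The key observation is that Theorem \ref{Thm0} already gives a limit distribution for measures absolutely continuous in \emph{both} $\vecq$ and $\vecv$; we want to peel off the $\vecq$-average and replace it by evaluation at a fixed generic point $\vecq$. So first I would note that, since the problem is essentially translation-invariant in $\vecq$ once $\scrP$ is replaced by its translate $\scrP-\vecq$ (and the relevant equidistribution theorems on the space of lattices hold for the full-measure set of translates), it is natural to expect that for Lebesgue-a.e.\ $\vecq$ the limit in \eqref{Thm1eq} exists and is independent of $\vecq$; the content is then to identify that limit with $F_\scrP(\xi)$.

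The main technical tool I would use is a monotonicity/continuity estimate in $\rho$ and $\vecq$: the free path length $\tau_1(\vecq,\vecv;\rho)$ is, for fixed $\vecv$ and $\rho$, a function of $\vecq$ that changes in a controlled way as $\vecq$ is perturbed, because moving $\vecq$ perpendicular to $\vecv$ by a distance $\le\rho$ and along $\vecv$ by a bounded amount only perturbs the set $\{\vecv : \rho^{d-1}\tau_1\ge\xi\}$ in a way whose measure tends to $0$. Concretely, I would show that for fixed $\vecv$,
\begin{equation}
\bigl|\mathbf{1}[\rho^{d-1}\tau_1(\vecq,\vecv;\rho)\ge\xi] - \mathbf{1}[\rho^{d-1}\tau_1(\vecq',\vecv;\rho)\ge\xi]\bigr|
\end{equation}
is, after integrating over $\vecv$ against $\lambda$ and over $\vecq'$ in a small ball, negligible as $\rho\to0$ — this is a ``no concentration of mass on the level set'' statement, which follows from the continuity of $F_\scrP$ asserted in Theorem \ref{Thm0} together with the explicit description of the limit via a random variable on a space of lattices. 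Averaging the a.c.\ statement of Theorem \ref{Thm0} over a small physical-space ball $B$ centered at a putative good point $\vecq$ (with $\Lambda = (\vol B)^{-1}\mathbf{1}_B\,d\vol \otimes \lambda$), I get that $(\vol B)^{-1}\int_B \lambda(\{\vecv : \rho^{d-1}\tau_1(\vecq',\vecv;\rho)\ge\xi\})\,d\vol(\vecq')\to F_\scrP(\xi)$; combining with the perturbation estimate and letting $B$ shrink identifies the pointwise limit for a.e.\ $\vecq$.

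The step I expect to be the main obstacle is making the perturbation/no-concentration estimate uniform and rigorous: one must rule out that, for a positive-measure set of $\vecq$ and a positive-measure set of velocities, the rescaled path length sits exactly at the threshold $\xi$ in the limit. I would handle this by going through the same equidistribution-on-homogeneous-spaces machinery that underlies Theorem \ref{Thm0} (the limiting random variable on the space of higher-dimensional lattices has a distribution with no atoms, which is precisely why $F_\scrP$ is continuous), and by a Borel–Cantelli / Lebesgue-density argument: take $\rho\to0$ along a sequence, use Fubini to exchange the $\vecq$-integral with the limit, and invoke the continuity of $F_\scrP$ to upgrade a.e.-convergence along the sequence to the full limit, with the null set $\fS$ arising as the union of the bad sets over a countable dense set of parameters $\xi$. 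The uniformity over $s\ge s_0$ in Theorem \ref{Thm0} is not needed here since $\vecq$ is fixed, but it is convenient for passing between the $\Lambda^{(s)}$ formulation and the fixed-$\vecq$ formulation.
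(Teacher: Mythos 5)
There is a genuine gap at the central step: the passage from the $\vecq$-averaged statement of Theorem \ref{Thm0} to the pointwise statement at a fixed generic $\vecq$. Your perturbation estimate only controls displacements of $\vecq$ of size $O(\rho)$ transverse to $\vecv$ (plus bounded displacements along $\vecv$), but the averaging ball $B$ has radius fixed as $\rho\to0$, and for displacements of fixed size no such estimate can hold: the limiting value of $\lambda(\{\vecv : \rho^{d-1}\tau_1(\vecq,\vecv;\rho)\geq\xi\})$ is genuinely discontinuous in $\vecq$ --- it equals $F_\scrP(\xi)$ off $\fS$ but takes different values on the dense null set $\fS$ (for instance for $\vecq\in\scrP$, where Theorem \ref{Thm2} gives the different limit $F_{\scrP,\vecq}$). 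The soft measure-theoretic part does not close the gap either: Theorem \ref{Thm0} only yields weak convergence of the functions $g_\rho(\vecq'):=\lambda(\{\vecv : \rho^{d-1}\tau_1(\vecq',\vecv;\rho)\geq\xi\})$, i.e.\ convergence of their averages over macroscopic regions, and bounded functions can converge weakly to a constant while oscillating at scale $\rho$, so no Fubini/Borel--Cantelli/Lebesgue-density argument can produce a.e.\ pointwise convergence from it; the non-atomicity of the limit law (continuity of $F_\scrP$ in $\xi$) addresses only the threshold issue, not the $\vecq$-dependence. Nor can you let the ball shrink with $\rho$ inside Theorem \ref{Thm0}: the uniformity there is only over $s\geq s_0>0$, i.e.\ over measures spread out on a fixed macroscopic scale, which is the opposite of localization.

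What the statement actually requires --- and what the paper uses --- is a pointwise-in-$\vecq$ equidistribution input: for fixed $\vecq$ the relevant object is the expanding translate $\varphi_g((1_d,\vecq)K(\vecv)\Phi^t)$, and its equidistribution in $\Gamma\backslash\Gamma\widetilde H_g$ (Theorem \ref{SLEQASLFORGENERICXPROPCOR1}) holds precisely when the Ratner subgroup attached to $g(1_n,(\vecq,\bn))$ equals $\widetilde H_g$. That this holds for Lebesgue-a.e.\ $\vecq$ is Proposition \ref{SLEQASLFORGENERICXPROP}, proved via Ratner's theorems (countability of the possible subgroups, each exceptional set being a proper affine subspace); this is where $\fS$ comes from, and it cannot be seen from the averaged statement. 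Indeed the paper's logic runs in the opposite direction to yours: the fixed-$\vecq$ equidistribution comes first, Theorem \ref{Thm1} follows by sandwiching the free-path event between cylinder conditions and applying the Siegel formula to continuous approximations of the cylinder indicator, and Theorem \ref{Thm0} is then obtained by averaging over $\vecq$ (Theorem \ref{equi2}). So while your reduction of the event to emptiness of thin cylinders and your use of the continuity of $F_\scrP$ are in the right spirit, the core of Theorem \ref{Thm1} --- identifying the full-measure set of good $\vecq$ and the pointwise limit --- needs the homogeneous-dynamics argument and is not obtainable by localizing Theorem \ref{Thm0}.
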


In fact our proof shows that the limit in \eqref{Thm1eq} exists for \textit{every} $\vecq\in\R^d$;
however for $\vecq\in\fS$ the limit in general depends on $\vecq$.

Another possibility is to specify the location $\vecq\in\scrP$ of a scatterer and consider the initial data $(\vecq+\rho\vecbeta(\vecv),\vecv)$ on (or near) the scatterer's boundary, where $\vecbeta:\S_1^{d-1}\to\RR^{d}$ is some fixed continuous function and $\vecv$ is again chosen at random on $\S_1^{d-1}$. To avoid pathologies, we assume that
$(\vecbeta(\vecv)+\R_{>0}\vecv)\cap\scrB_1^d=\emptyset$
for all $\vecv\in\S_1^{d-1}$. We fix a map $K:\S_1^{d-1}\to\SO(d)$ such that
$\vecv K(\vecv)=\vece_1=(1,0,\ldots,0)$ for all $\vecv\in\S_1^{d-1}$;
we assume that $K$ is smooth when restricted to $\S_1^{d-1}$ minus
one point, cf.~\cite[Footnote 3, p.\ 1968]{partI}. We denote by $\vecx_\perp$ the orthogonal projection of $\vecx\in\RR^n$ onto $\{\vecnull\}\times\RR^{n-1}$, which is identified with $\RR^{n-1}$.

\begin{thm}\label{Thm2}
Given any regular cut-and-project set $\scrP$ and $\vecq\in\scrP$,  there is  a continuous function $F_{\scrP,\vecq}: [0,\infty]\times\RR_{\geq 0} \to [0,1]$ with $F_{\scrP,\vecq}(\,\cdot\,,r)$ non-increasing, $F_{\scrP,\vecq}(0,r)=1$, $F_{\scrP,\vecq}(\infty,r)=0$ for all $r\in\RR_{\geq 0}$,  such that
for any $\xi>0$ and any
Borel probability measure $\lambda$ on $\S_1^{d-1}$ which is absolutely continuous with respect to Lebesgue measure,
we have
\begin{equation}\label{Thm2eq}
\lim_{\rho\to 0} \lambda(\{ \vecv\in\S_1^{d-1} : \rho^{d-1} \tau_1(\vecq+\rho \vecbeta(\vecv),\vecv;\rho) \geq \xi \}) 
= \int_{\S_1^{d-1}}  F_{\scrP,\vecq}(\xi , \|(\vecbeta(\vecv)K(\vecv))_\perp\|)\,d\lambda(\vecv) .
\end{equation}
The convergence in \eqref{Thm2eq} is uniform over all $\vecq\in\scrP$.
\end{thm}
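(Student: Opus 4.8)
The plan is to reduce the statement to an equidistribution theorem on a homogeneous space and then apply it, following the strategy that underlies Theorems \ref{Thm0} and \ref{Thm1}. First I would fix $\vecq\in\scrP$ and write $\vecq=\pi(\vecy_0)$ for the unique $\vecy_0\in\scrL$ with $\pi_\intl(\vecy_0)\in\scrW$. The free path length $\tau_1(\vecq+\rho\vecbeta(\vecv),\vecv;\rho)$ depends on whether the cylinder of radius $\rho$ around the ray $\vecq+\rho\vecbeta(\vecv)+\R_{>0}\vecv$ meets $\scrB^d_\rho+(\scrP\setminus\{\vecq\})$. Rescaling by the Boltzmann–Grad scaling $\rho^{d-1}$, this becomes a question about whether a fixed (in $\rho$) cylinder $\mathfrak Z(\xi,r)$ — of the form $\{(x_1,\vecx_\perp):0<x_1<\xi,\ \|\vecx_\perp+r\vece\|<1\}$ for appropriate $r=\|(\vecbeta(\vecv)K(\vecv))_\perp\|$ and unit vector $\vece$ — contains a point of the rescaled, rotated point set $(\scrP-\vecq)\,\rho^{-(d-1)}K(\vecv)$, intersected with a bounded region. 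The assumption $(\vecbeta(\vecv)+\R_{>0}\vecv)\cap\scrB^d_1=\emptyset$ guarantees the scatterer at $\vecq$ itself does not block the ray, so removing the point $\vecq$ is harmless.

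The second step is to recognise the rescaled configuration $(\scrP-\vecq)\,\rho^{-(d-1)}K(\vecv)$, together with the window data, as the image of an affine point in a homogeneous space $X=\Gamma\backslash G$ (here $G$ an affine special linear group acting on $\R^n$, or a product thereof, as set up in the later sections of the paper), under a one-parameter diagonal ``expanding'' flow $\Phi^t$ with $t=-(d-1)\log\rho\to+\infty$. Concretely, choosing a lattice $g_0$ representing $\scrL$ and translating so that $\vecy_0$ goes to the origin, the data $(\vecq,\vecv,\rho)$ corresponds to a point $\Gamma g_0\, n(\vecbeta,\vecv)\, a(\rho)$ for suitable unipotent/rotation elements, and the cylinder condition is a condition on the orbit point that is continuous and compactly supported in the relevant coordinates away from a null set. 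The key input is then the equidistribution theorem (proved earlier in the paper via Ratner's measure classification, of the same type as Prop.\ \ref{HOFWEYLEXPLPROP} and its dynamical strengthenings) stating that, for $\vecv$ distributed according to an absolutely continuous $\lambda$, the points $\Gamma g_0\, n(\vecbeta(\vecv),\vecv)\, a(\rho)$ equidistribute as $\rho\to0$ with respect to an explicit $G$-invariant (or homogeneous-subgroup-invariant) measure $\mu$ on a suitable subspace of $X$. Applying this to the indicator (suitably approximated above and below by continuous functions, using regularity of $\scrP$ to control the boundary) yields
\begin{equation}
\lim_{\rho\to0}\lambda(\{\vecv:\rho^{d-1}\tau_1(\vecq+\rho\vecbeta(\vecv),\vecv;\rho)\ge\xi\})
=\int_{\S_1^{d-1}}\mu\bigl(\{\text{no point of the configuration in }\mathfrak Z(\xi,\|(\vecbeta(\vecv)K(\vecv))_\perp\|)\}\bigr)\,d\lambda(\vecv),
\end{equation}
and one defines $F_{\scrP,\vecq}(\xi,r)$ to be the inner $\mu$-measure. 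Continuity of $F_{\scrP,\vecq}$ in $(\xi,r)$, monotonicity in $\xi$, and the boundary values $F_{\scrP,\vecq}(0,r)=1$, $F_{\scrP,\vecq}(\infty,r)=0$ then follow from standard properties of the limit measure $\mu$ (no atoms on the relevant boundary hypersurfaces, and the fact that a random higher-dimensional affine lattice point set is locally finite and relatively dense), exactly as for $F_\scrP$ in Theorem \ref{Thm0}.

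The main obstacle, and the one requiring genuine care, is the uniformity over all $\vecq\in\scrP$. Different scatterers $\vecq$ correspond to different base points $\vecy_0\in\scrL$ and hence, after translation, to a \emph{family} of starting points $\Gamma g_0 \tau(\vecy_0)$ in $X$ parametrised by $\scrP$; one must show the equidistribution is uniform over this family. The resolution is that all these starting points lie in a single compact piece of $X$: translating the affine lattice $\scrL-\vecy_0$ amounts to multiplying $g_0$ by a unipotent element $\tau(\vecy_0)$ whose internal-space components are constrained by $\pi_\intl(\vecy_0)\in\scrW$ (a bounded window), so the relevant translates range over a compact set. Uniform equidistribution over a compact family of initial points is exactly the kind of statement delivered by the quantitative/uniform versions of the Ratner-type equidistribution theorems established earlier in the paper (the same mechanism that gives uniformity over $s\ge s_0$ in Theorem \ref{Thm0}); so the argument is to invoke that uniform version, check that the cut-off to a compact subset of $X$ can be done uniformly in $\vecq$ (using that $\scrP$ is uniformly discrete and relatively dense, so the ``first'' scatterer hit is at bounded rescaled distance with overwhelming probability, uniformly), and conclude. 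A secondary technical point is handling the non-compactness of $X$ and the tails of the window/configuration; this is dealt with by the standard truncation estimates (non-escape of mass), again available from the earlier sections.
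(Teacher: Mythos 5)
Your outline follows the same overall route as the paper (cylinder sandwich for the free path length, reduction to a flow on a homogeneous space, Ratner/Shah equidistribution applied to upper/lower continuous approximants, a Siegel-type bound to control the discrepancy set, compactness for the uniformity), but it leaves a genuine gap at the decisive point of the theorem: the identification of the limit object. For $\vecq\in\scrP$ you cannot invoke the equidistribution used for generic $\vecq$ (convergence of $\Gamma\varphi_{g(1_n,(\vecq,\bn))}(K(\vecv)\Phi^t)$ towards $\mu_{\widetilde H_g}$), because points of $\pi(\scrL)$ lie precisely in the exceptional null set $\fS$ where that statement may fail — this is why Theorem \ref{Thm2} has a different limit from Theorem \ref{Thm1}. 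The step your sketch is missing is the algebraic reduction: write $\vecq=\pi(\vecy)$ with $\vecy=\vecm g\in\scrL$, $\vecm\in\Z^n$, absorb $(1_n,-\vecm)$ into $\Gamma$, and use that $(1_n,(\bn,\pi_\intl(\vecy)))$ commutes with $\varphi_1(\ASL(d,\R))$, so that the physical translation by $\vecq$ becomes a shift $\scrW-\pi_\intl(\vecy)$ of the \emph{window} while the orbit becomes the fixed orbit $\Gamma\varphi_g(K(\vecv)\Phi^t)$, which equidistributes towards $\mu_g$ on $\Gamma\backslash\Gamma H_g$ (the joint-in-$(\vecv,p)$ version, Theorem \ref{equi1}, is needed since your cylinder depends on $\vecv$ through $(\vecbeta(\vecv)K(\vecv))_\perp$). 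Your unspecified measure ``$\mu$'' hides exactly this; without it one does not obtain formula \eqref{Thm3eqA}, nor the fact that the limit depends on $\vecbeta(\vecv)$ only through $r=\|(\vecbeta(\vecv)K(\vecv))_\perp\|$ — the latter requires in addition the right $\varphi_{g'}(\SO(d))$-invariance of $\mu_{g'}$ to rotate the shifted cylinder $\fZ_\xi+(\vecbeta(\vecv)K(\vecv))_\perp$ to $\fZ_\xi+r\vece_d$, a step you gloss over.

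The second gap is the uniformity over $\vecq\in\scrP$. You propose to ``invoke a uniform version of the equidistribution theorem over a compact family of initial points'', but no such statement is available (the uniformity in Theorem \ref{equi2} is in the scaling parameter $s$, not in the base point), and varying the base point by physical translates $(1_d,\vecq)$ is exactly the direction in which equidistribution is delicate. After the reduction above the base orbit is the \emph{same} for all $\vecq\in\scrP$; what varies is only the window shift $\pi_\intl(\vecy)\in\scrW$, over a bounded set. The paper's uniformity argument then strengthens the approximants so that $a^-\leq\chi_{\fZ_\xi\times\scrW_\eta^-}$ and $\chi_{\fZ_\xi\times\scrW_\eta^+}\leq a^+$ for some $\eta>0$, covers the bounded set of internal shifts by finitely many $\eta$-balls around points $\vecz_1,\ldots,\vecz_s$, and applies Theorem \ref{equi1} to the finitely many resulting test functions $f_j^\pm$, giving a single $\rho_0$ valid for all $\vecq$. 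Your compactness idea points in this direction, but as stated it rests on an unproved uniform equidistribution theorem rather than on this finite-covering argument; to complete the proof you must either supply such a uniform theorem or carry out the covering argument with fattened/thinned windows as above.
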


We remark that the proof actually shows that \eqref{Thm2eq} holds for any fixed $\vecq\in\pi(\scrL)$,
and uniformly over all $\vecq$ in any set of the form
$\pi(\scrL\cap\pi_\intl^{-1}(B))$ with $B$ a bounded subset of $\scrA$.

\subsection{Spaces of quasicrystals}\label{secQuasi}

We will now characterise the limit distributions in Theorems \ref{Thm1} and \ref{Thm2} in terms of  a certain homogeneous space $(\Gamma\cap H_g)\backslash H_g$ equipped with a translation-invariant probability measure $\mu_g$. In analogy with the space of Euclidean lattices of covolume one, $\SLSL$,  we will call such a space a {\em space of quasicrystals}. 

Set $G=\ASL(n,\R)=\SL(n,\RR)\ltimes\RR^n$, $\Gamma=\ASL(n,\Z)$. The multiplication law in $G$ is defined by
\begin{equation}
(M,\vecxi)(M',\vecxi')=(MM',\vecxi M'+\vecxi').
\end{equation}

For $g\in G$ we define an embedding of $\ASL(d,\RR)$ in $G$ by
\begin{equation}
\varphi_g: \ASL(d,\RR) \to G,\quad (A,\vecx) \mapsto g \left( \begin{pmatrix} A  &  0 \\ 0 & 1_m \end{pmatrix},(\vecx,\vecnull) \right) g^{-1} .
\end{equation}
We also set $G^1=\SL(n,\R)$ and $\Gamma^1=\SL(n,\Z)$, and identify $G^1$ with a subgroup of $G$ in the standard way;
similarly we identify $\SL(d,\R)$ with a subgroup of $\ASL(d,\R)$.
It follows from Ratner's theorems \cite{Ratner91a}, \cite[Cor.\ B]{Ratner91b} that there exists a (unique) closed connected subgroup $H_g$ of $G$ such that $\Gamma\cap H_g$ is a lattice in $H_g$, $\varphi_g(\SL(d,\R))\subset H_g$,
and the closure of $\Gamma\backslash\Gamma\varphi_g(\SL(d,\RR))$ in $\GamG$ is given by $\Gamma\backslash\Gamma H_g$. 
Note that $\Gamma\backslash\Gamma H_g$ can be naturally identified with the homogeneous space $(\Gamma\cap H_g)\backslash H_g$. 
We denote the unique right-$H_g$ invariant probability measure on either of these spaces by $\mu_g=\mu_{H_g}$.
Similarly, there exists a unique closed connected subgroup $\widetilde H_g$ of $G$ such that $\Gamma\cap \widetilde H_g$ is a lattice in $\widetilde H_g$, $\varphi_g(\ASL(d,\R))\subset\widetilde H_g$,
and the closure of $\Gamma\backslash\Gamma\varphi_g(\ASL(d,\RR))$ in $\GamG$ is given by $\Gamma\backslash\Gamma\widetilde H_g$. 
Note that $\Gamma\backslash\Gamma\widetilde H_g$ can be naturally identified with the homogeneous space $(\Gamma\cap\widetilde H_g)\backslash\widetilde H_g$. 
We denote the unique right-$\widetilde H_g$ invariant probability measure on either of these spaces by $\mu_{\widetilde H_g}$.
Of course, $H_g\subset \widetilde H_g$, and $\widetilde H_g=\widetilde H_{g(1_n,\vecx)}$ for any $\vecx\in\R^d\times\{\bn\}$.

Note that if $g\in G^1$ then $H_g\subset G^1$; in fact in this case $H_g$ is the unique 
closed connected subgroup
of $G^1$ such that $\Gamma^1\cap H_g$ is a lattice in $H_g$, $\varphi_g(\SL(d,\R))\subset H_g$,
and the closure of $\Gamma^1\backslash\Gamma^1\varphi_g(\SL(d,\RR))$ in $\Gamma^1\backslash G^1$ is given by 
$\Gamma^1\backslash\Gamma^1 H_g$.

Given $g\in G$ and $\delta>0$ we set $\scrL=\delta^{1/n}(\Z^ng)$ and let $\scrA=\overline{\pi_\intl(\scrL)}$ as before.
Then $\overline{\pi_\intl(\delta^{1/n}(\Z^nhg))}\subset\scrA$ for all $h\in\widetilde H_g$ and
$\overline{\pi_\intl(\delta^{1/n}(\Z^nhg))}=\scrA$ for $\mu_{\widetilde H_g}$-almost all $h\in\widetilde H_g$
and also for $\mu_g$-almost all $h\in H_g$; cf.\ Prop.\ \ref{SCRAPROP} and Prop.\ \ref{SLEQASLFORGENERICXPROP} below.
We fix $\delta>0$ and a window $\scrW\subset\scrA$, and consider the map from $\Gamma\backslash\Gamma\widetilde H_g$ to the set of point sets in $\RR^d$,
\begin{equation}\label{GGHQMAP}
\Gamma\backslash\Gamma h \mapsto \scrP(\scrW,\delta^{1/n}(\ZZ^n h g)).
\end{equation}
We denote the image of this map by $\widetilde\fQ_g=\widetilde\fQ_g(\scrW,\delta)$, and define a probability measure on $\widetilde\fQ_g$ as the push-forward of $\mu_{\widetilde H_g}$ (for which we will use the same symbol). 
This defines a random point process in $\R^d$ which is invariant under the natural action of $\ASL(d,\R)$ on $\RR^d$.
Similarly we denote by $\fQ_g=\fQ_g(\scrW,\delta)$ the image of $\Gamma\backslash\Gamma H_g$ under the map \eqref{GGHQMAP}, and define a probability measure on $\fQ_g$ as the push-forward of $\mu_g$; this again defines a random point process in $\R^d$, invariant under the natural action of $\SL(d,\R)$ on $\RR^d$.

We let $\fZ_\xi$ be the cylinder in $\R^d$ defined by
\begin{equation}\label{zyl}
	\fZ_\xi =\big\{(x_1,\ldots,x_d) \in\RR^d : 0<x_1<\xi, \; x_2^2+\ldots+x_d^2<1 \big\} .
\end{equation}
The following theorem provides formulas for the limit distributions in Theorems \ref{Thm0}, \ref{Thm1} and \ref{Thm2} in terms of $\widetilde H_g$ and $H_g$.

\begin{thm}\label{THM4}
Let $\scrP=\scrP(\scrL,\scrW)$ be a regular cut-and-project set, and $\vecq\in\RR^d$. Choose $g\in G$ and $\delta>0$ so that $\scrL-(\vecq,\bn)=\delta^{1/n}(\ZZ^n g)$. 
Then the function $F_\scrP(\xi)$ in Theorems \ref{Thm0} and  \ref{Thm1} is given by
\begin{equation}\label{Thm3eq}
F_\scrP(\xi) = \mu_{\widetilde H_g}(\{ \scrP'\in\widetilde\fQ_g \col \fZ_\xi \cap \scrP'  =\emptyset \}) .
\end{equation}
In fact if $\vecq\in\RR^d\setminus\fS$ (as in Theorem \ref{Thm1}), then $H_g=\widetilde H_g$ and this group is independent of the choice of $\vecq$.
On the other hand if $\vecq\in\scrP$, then the function $F_{\scrP,\vecq}(\xi,r)$ in Theorem \ref{Thm2} is given by
\begin{equation}\label{Thm3eqA}
F_{\scrP,\vecq}(\xi , r) = \mu_g(\{ \scrP'\in\fQ_g \col (\fZ_\xi+r\vece_d)\cap \scrP'  =\emptyset \}) 
\end{equation}
with $\vece_d=(0,\ldots,0,1)$.
\end{thm}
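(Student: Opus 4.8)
The plan is to deduce Theorem~\ref{THM4} from an equidistribution statement for translates of $\varphi_g(\SL(d,\R))$-orbits (and $\varphi_g(\ASL(d,\R))$-orbits), combined with a geometric reformulation of the event ``$\fZ_\xi\cap\scrP'=\emptyset$''. First I would reduce everything to the affine lattice $\scrL-(\vecq,\bn)=\delta^{1/n}(\ZZ^n g)$, so that the scatterer $\vecq$ sits at the origin of physical space, and recall the standard dictionary: after rescaling by $\rho$ and conjugating by an element that sends $\vecv$ to $\vece_1$, the condition $\rho^{d-1}\tau_1(\vecq,\vecv;\rho)\geq\xi$ becomes the condition that the dilated-and-rotated lattice $\delta^{1/n}(\ZZ^n \,a(\rho)\, K(\vecv)\,g)$ (with $a(\rho)$ the appropriate diagonal element of $\SL(d,\R)\subset G$, acting only on the physical coordinates) has no point whose physical part lies in the fixed cylinder $\fZ_\xi$ and whose internal part lies in $\scrW$ — i.e.\ that $\fZ_\xi\cap\scrP(\scrW,\delta^{1/n}(\ZZ^n\, a(\rho)K(\vecv)g))=\emptyset$. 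The matrices $a(\rho)K(\vecv)$ range (as $\rho\to0$, $\vecv$ random) over a family of translates of the $\SL(d,\R)$-orbit whose equidistribution in $\GamG$ is governed precisely by $H_g$: this is exactly the kind of statement one extracts from \cite{partI,partIII} via Ratner, and I would cite/prove an equidistribution theorem of the form ``for $f$ continuous and compactly supported on $\GamG$, $\int_{\S_1^{d-1}} f(\Gamma a(\rho)K(\vecv)g)\,d\lambda(\vecv)\to\int_{\GamG\varphi_g(\SL(d,\R))\text{-closure}} f\,d\mu_g$''. Applying this with $f$ an approximation to the indicator of the set of $h$ with $\fZ_\xi\cap\scrP(\scrW,\delta^{1/n}(\ZZ^n h g))=\emptyset$ yields \eqref{Thm3eq} when $H_g=\widetilde H_g$, and \eqref{Thm3eqA} for $\vecq\in\scrP$ once one tracks the extra translation $r\vece_d$ coming from the offset $\vecbeta(\vecv)$ via $\|(\vecbeta(\vecv)K(\vecv))_\perp\|$.

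The key intermediate steps, in order, are: (i) establish that the map \eqref{GGHQMAP} is well-defined and that the event $E_\xi=\{\Gamma h\col \fZ_\xi\cap\scrP(\scrW,\delta^{1/n}(\ZZ^nhg))=\emptyset\}$ is a $\mu_{\widetilde H_g}$-continuity set (boundary of measure zero), using regularity of $\scrP$ (window boundary null) together with the fact that $\fZ_\xi$ is an open cylinder with null boundary and that the ``number of lattice points in a fixed bounded region with internal part in $\scrW$'' is, for $\mu_{\widetilde H_g}$-a.e.\ $h$, a locally constant function of $h$ away from a null set; (ii) prove the relevant equidistribution of $\{\Gamma a(\rho)K(\vecv)g : \vecv\sim\lambda\}$ towards $\mu_g$ (resp.\ of the $\ASL(d,\R)$-extended family towards $\mu_{\widetilde H_g}$) — this is where Ratner's measure classification and the explicit determination of $H_g$, $\widetilde H_g$ enter, and it is the technical heart; (iii) handle non-compactness of $\GamG$ by a uniform integrability / no-escape-of-mass argument, so that testing against indicators (not just compactly supported continuous functions) is legitimate — equivalently, prove tightness of the family of measures $\lambda\circ(\vecv\mapsto\Gamma a(\rho)K(\vecv)g)^{-1}$; (iv) identify the limit $F_\scrP(\xi)=\mu_{\widetilde H_g}(E_\xi)$ with the quantities in Theorems~\ref{Thm0}--\ref{Thm2}, including the claim that for $\vecq\notin\fS$ one has $H_g=\widetilde H_g$ independent of $\vecq$ (here $\fS$ is precisely the null set of $\vecq$ for which the $\SL(d,\R)$-orbit fails to equidistribute the same way as the $\ASL(d,\R)$-orbit, i.e.\ where the translation part of $g$ is ``resonant'').

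I expect the main obstacle to be step~(ii)--(iii) together: namely, upgrading Ratner-type equidistribution from the abstract statement (the orbit closure is $\Gamma\backslash\Gamma H_g$) to a quantitative-enough statement about the specific translates $a(\rho)K(\vecv)g$ with $\rho\to0$, uniformly in the relevant parameters, and simultaneously controlling escape of mass in the non-compact space $\GamG$ so that the indicator of $E_\xi$ can be integrated against the limiting measure. The equidistribution of such ``expanding translates of unipotent-type orbits'' is exactly the mechanism developed in \cite{partI,partII,partIII,partIV} for the lattice Lorentz gas; the new feature here is that the relevant homogeneous subgroup is not all of $G$ but the possibly-proper subgroup $H_g$ determined by the cut-and-project data, so one must verify that the same machinery applies with $G$ replaced by the (intermediate) closed subgroup $H_g$, and that the test region $\fZ_\xi\times\scrW$ interacts well with the geometry of $H_g$. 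Once this equidistribution-with-tightness package is in place, the passage to $F_\scrP$ and $F_{\scrP,\vecq}$ via continuity sets, and the reduction of Theorems~\ref{Thm0},~\ref{Thm1},~\ref{Thm2} to the single formula for $\Gamma\backslash\Gamma h\mapsto\scrP'$, is essentially bookkeeping (albeit careful bookkeeping with the conjugation $\varphi_g$, the rotation $K(\vecv)$, and the $r\vece_d$-shift).
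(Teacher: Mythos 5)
Your overall strategy does coincide with the paper's: Theorem \ref{THM4} is not proved in isolation but emerges from the proofs of Theorems \ref{Thm0}, \ref{Thm1} and \ref{Thm2} exactly along the lines you describe --- rewrite $\rho^{d-1}\tau_1\geq\xi$ as the avoidance of $\fZ_\xi\times\scrW$ (resp.\ $\fZ_{\xi,\vecv}\times\scrW_\vecy$) by $\Z^n\varphi_g((1_d,-\vecq)K(\vecv)\Phi^{-\log\rho})g$, feed this into the equidistribution of the expanding translates $K(\vecv)\Phi^t$ on $\Gamma\backslash\Gamma H_g$, resp.\ $\Gamma\backslash\Gamma\widetilde H_g$ (Theorems \ref{equi1}, \ref{SLEQASLFORGENERICXPROPCOR1}, \ref{equi2}, built on Shah and Ratner as in \cite{partI}), and use Proposition \ref{SLEQASLFORGENERICXPROP} for the claim that $H_g=\widetilde H_g$ off a null set $\fS$.

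The genuine gap is in your step (i). You assert that the avoidance event is a $\mu_{\widetilde H_g}$- (resp.\ $\mu_g$-) continuity set because the relevant counting function is ``locally constant away from a null set'', but that is precisely what has to be proved, and it does not follow from regularity of $\scrW$ and nullity of $\partial\fZ_\xi$ alone: a priori the random affine lattice $\Z^nhg$ with law $\mu_g$ could place points on $\partial(\fZ_\xi\times\scrW)$ with positive probability, and since $\mu_g$ is only $\varphi_g(\SL(d,\R))$-invariant (no invariance under physical translations) there is no naive Fubini argument available. The paper's missing tool here is the Siegel integral formula for the space of quasicrystals (Theorem \ref{SIEGELRDTHM}, Theorem \ref{SIEGELTHM}, Corollary \ref{SIEGELTILDECOR}): one sandwiches the indicator of the avoidance event between bounded continuous functions $f^\pm$ built from continuous $a^\pm$ with $a^-\leq\chi_{\fZ_\xi\times\scrW}\leq a^+$, and the first-moment formula shows that the set of $\Gamma h$ where the sandwich is not tight has measure $<\ve$; this formula is a substantial separate ingredient (proved via Corollary \ref{PARTICOR5P4GEN}, Proposition \ref{HOFWEYLEXPLPROP} and an $\SL(d,\R)$-invariance argument), and it also yields the continuity of $F_\scrP$ and, through Proposition \ref{INJECTIVITYPRESERVEDPROP}, the identification of $\Z^nhg$-points in $\fZ_\xi\times\scrW$ with points of $\scrP(\scrW,\Z^nhg)$ in $\fZ_\xi$, which is needed to state \eqref{Thm3eq} and \eqref{Thm3eqA} in terms of $\widetilde\fQ_g$ and $\fQ_g$. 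By contrast, your step (iii) is a red herring: no tightness or escape-of-mass control is required, because the sandwiching test functions are bounded by $1$ and the equidistribution theorems apply to all bounded continuous functions. Finally, for $\vecq\in\scrP$ the bookkeeping is slightly more delicate than tracking the $r\vece_d$-shift: one must translate the window, $\scrW_\vecy=\scrW-\pi_\intl(\vecy)$, and conjugate by $(1_n,-\vecm)$ to pass from the fixed $g$ with $\scrL=\Z^ng$ to the base point $g'=g(1_n,(-\vecq,\bn))$ appearing in the statement of Theorem \ref{THM4}, using that $(1_n,(\bn,\vecz))$ commutes with $\varphi_1(\SL(d,\R))$.
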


\subsection{The Siegel integral formula for quasicrystals}\label{SIEGELANNSEC}

The Siegel integral formula is a fundamental identity in the geometry of numbers \cite{siegel45,siegel}. 
We will prove an analogue for the space of quasicrystals. Let $f\in\L^1(\RR^d)$. Define for every $\scrP\in\fQ_g$ the Siegel transform 
\begin{equation}
\widehat f(\scrP) = \sum_{\vecq\in\scrP\setminus\{\vecnull\}} f(\vecq) .
\end{equation}
If $\scrL=\delta^{1/n}(\Z^ng)$ is a lattice then we set
\begin{align}
\delta_{d,m}(\scrL):=\frac1{\vol(\scrV/(\scrL\cap\scrV))}.
\end{align}
More generally if $\scrL$ is an affine lattice then we set $\delta_{d,m}(\scrL):=\delta_{d,m}(\scrL-\scrL)$;
note that $\scrL-\scrL$ is the lattice in $\R^n$ of which $\scrL$ is a translate.
\begin{thm}\label{SIEGELRDTHM}
Let $\scrL=\delta^{1/n}(\Z^ng)$ and $\fQ_g=\fQ_g(\scrW,\delta)$ as above,
and assume that $\scrP=\scrP(\scrW,\scrL)$ is regular and 
the map $\pi_\scrW:\{ \vecy\in\scrL : \pi_\intl(\vecy)\in\scrW \}\to \scrP$ is bijective.
Then for any $f\in\L^1(\R^d)$ we have
\begin{equation}\label{SIF}
\int_{\fQ_g}  \widehat f(\scrP) \,d\mu_g(\scrP) = \delta_{d,m}(\scrL)\mu_{\scrA}(\scrW) \int_{\RR^d} f(\vecx)\, d\!\vol_{\RR^d}(\vecx) .
\end{equation}
\end{thm}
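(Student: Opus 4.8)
The plan is to reduce the Siegel integral formula for quasicrystals to the classical Siegel formula on the space of affine unimodular lattices $\Gamma\backslash G$, using the ergodic-theoretic characterization of $H_g$ together with the window-averaging implicit in the cut-and-project construction. First I would rewrite the left-hand side of \eqref{SIF} as an integral over $\Gamma\backslash\Gamma H_g$ of the function
\begin{equation}\label{PLANSIEGELTRANSFORM}
\Phi(h)=\sum_{\substack{\vecy\in\ZZ^n hg\\ \pi_\intl(\delta^{1/n}\vecy)\in\scrW}} f\bigl(\pi(\delta^{1/n}\vecy)\bigr)=\widehat f\bigl(\scrP(\scrW,\delta^{1/n}(\ZZ^n hg))\bigr),
\end{equation}
pushed forward from $\mu_g$; here I am using that $\pi_\scrW$ is bijective for $\mu_g$-a.e.\ $h$ (Prop.\ \ref{SCRAPROP} and the remark following it, which also guarantee $\scrA$ is the correct internal closure $\mu_g$-a.e.). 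Thus $\Phi$ is the Siegel transform on $\ZZ^n$, relative to $G$, of the function $F:\RR^n\to\RR$ given by $F(\vecx)=\mathbf{1}_\scrW\bigl(\pi_\intl(\delta^{1/n}\vecx)\bigr)\,f\bigl(\pi(\delta^{1/n}\vecx)\bigr)$, a function in $\L^1(\RR^n)$ because $\scrW$ is bounded and $f\in\L^1(\RR^d)$.

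Next I would invoke the classical affine Siegel formula: $\int_{\Gamma\backslash G}\widehat F(g')\,d\mu_G(g')=\int_{\RR^n}F(\vecx)\,d\vol_{\RR^n}(\vecx)$ for the $G$-invariant probability measure $\mu_G$ on $\Gamma\backslash G$. The point is that $H_g$ is precisely the group whose $\mu_g$-average captures the ergodic average of the $\varphi_g(\SL(d,\RR))$-orbit (this is exactly the Ratner-theoretic defining property of $H_g$ recalled before Theorem \ref{THM4}), so the $\mu_g$-integral of $\Phi$ should coincide with the $\mu_G$-integral of the same Siegel transform — but I cannot quote Siegel for $H_g$ directly, since $H_g$ is not $G$. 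Instead I would argue as follows: both sides of \eqref{SIF}, viewed as functionals of $f$, are $\SL(d,\RR)$-invariant (the right side obviously, the left side because $\fQ_g$ carries an $\SL(d,\RR)$-action under which $\mu_g$ is invariant and $\widehat{f\circ A}(\scrP)=\widehat f(\scrP A^{-1})$). One then shows that the left-hand functional, extended to a suitable dense class of test functions, is a translation-invariant Radon measure on $\RR^d$ of finite total mass on compacta, hence a constant multiple of Lebesgue measure; the constant is pinned down by testing against one convenient $f$ (e.g.\ an indicator of a large box, where \eqref{density000}/\eqref{WEYLCOUNTING1} gives the expected count $\delta_{d,m}(\scrL)\mu_\scrA(\scrW)\vol(\scrD)$ for a.e.\ $\scrP$, and dominated convergence plus the uniform bound supplied by the classical Siegel formula on $\Gamma\backslash G$ lets one pass to the $\mu_g$-average).

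Concretely the steps are: (i) identify $\widehat f(\scrP)$ for $\scrP\in\fQ_g$ with the $\ZZ^n$-Siegel transform of $F$ as in \eqref{PLANSIEGELTRANSFORM}; (ii) establish that $f\mapsto\int_{\fQ_g}\widehat f\,d\mu_g$ is well-defined and finite for $f\in\L^1(\RR^d)$ — the domination needed here comes from $\widehat F\in\L^1(\Gamma\backslash G)$ together with the fact that $\Gamma\backslash\Gamma H_g\subset\Gamma\backslash G$ and $\widehat F|_{H_g}$ is $\mu_g$-integrable because $\widehat F$ is dominated by the Siegel transform of a product of a compactly supported function on $\RR^m$ and an $\L^1$ function on $\RR^d$, to which Rogers/Schmidt-type $\L^1$ bounds apply uniformly on the relevant orbit closure; (iii) check $\SL(d,\RR)$-invariance and translation-invariance of the resulting functional to conclude it is $c_g\cdot\vol_{\RR^d}$; (iv) evaluate $c_g$ using one explicit test function and \eqref{WEYLCOUNTING1}. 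The main obstacle I expect is step (ii): proving that the Siegel transform $\widehat F$ restricted to the homogeneous subspace $\Gamma\backslash\Gamma H_g$ is genuinely $\mu_g$-integrable (not merely $\mu_G$-integrable), since $H_g$ can be a proper subgroup and the classical $\L^1$ estimates for Siegel transforms are stated for the full lattice space; handling this likely requires either a direct geometry-of-numbers estimate using the lattice structure of $\scrL\cap\scrV$ and the boundedness of $\scrW$, or an approximation of $f$ by compactly supported functions combined with the regularity of $\scrP$ to control the tail contributions uniformly.
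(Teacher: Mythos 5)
Your overall skeleton is the same as the paper's: interpret $\int_{\fQ_g}\widehat f\,d\mu_g$ as a Borel measure in the test function, use invariance to force proportionality to Lebesgue measure, and pin down the constant by a counting argument (the paper does this in the more general form of Theorem \ref{SIEGELTHM} on $\R^d\times\scrA$, with the window absorbed into the test function and Proposition \ref{INJECTIVITYPRESERVEDPROP} supplying the a.e.\ bijectivity of $\pi_\scrW$, which your step (i) mirrors). However, two points in your plan are genuinely problematic. First, the functional $f\mapsto\int_{\fQ_g}\widehat f\,d\mu_g$ is \emph{not} translation-invariant: $\mu_g$ is only invariant under $\varphi_g(\SL(d,\R))\subset H_g$, not under the translations $\varphi_g((1_d,\vecx))$, which in general lie only in $\widetilde H_g$ (this is exactly the distinction between $\fQ_g$ and $\widetilde\fQ_g$; e.g.\ when $\bn\in\scrW$ every $\scrP'\in\fQ_g$ has a forced point at the origin). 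This is repairable: $\SL(d,\R)$-invariance plus uniqueness of the invariant measure on $\R^d\setminus\{\bn\}$ (the paper quotes Raghunathan, Lemma 1.4) gives $\kappa(\scrW)\vol_{\R^d}$ possibly plus an atom at $\bn$, which one excludes by the upper bound with small balls.

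The serious gap is the one you flagged yourself and then relied on anyway: both the finiteness of the $\mu_g$-integral and the evaluation of the constant cannot be obtained from the classical Siegel formula or Rogers/Schmidt $\L^1$ bounds on $\Gamma\backslash G$, because $\Gamma\backslash\Gamma H_g$ is a $\mu_G$-null set, so no $\L^1(\mu_G)$ bound dominates anything $\mu_g$-a.e.; in particular your step (iv), ``dominated convergence plus the uniform bound supplied by the classical Siegel formula,'' has no dominating function and would fail. The paper's resolution is two one-sided arguments, neither of which appears in your plan: (a) an \emph{upper} bound for $\int\widehat\chi_E\,d\mu_g$ with $E=\scrB_r^d\times\scrW$, obtained by combining the equidistribution of the expanding spheres $\varphi_g(k\Phi^{\log R})$, $k\in\SO(d)$, inside $\Gamma\backslash\Gamma H_g$ (Corollary \ref{PARTICOR5P4GEN}, resting on Shah/Ratner) with the Portmanteau inequality for the lower semicontinuous function $\widehat\chi_E$, and then an explicit Riemann--Stieltjes computation of the sphere average using the uniform counting asymptotics of Proposition \ref{HOFWEYLEXPLPROP}; and (b) a matching \emph{lower} bound obtained by restricting to a compact set $K\subset\Gamma\backslash\Gamma H_g$ with $\mu_g(K)>1-\ve$, where uniform discreteness of the point sets gives the bound $\widehat{\chi_{\scrB_R^d\times\scrW}}\leq C(1+R)^d$ needed to apply dominated convergence to the pointwise limit \eqref{SIEGELTHMPF7}, and letting $\ve\to0$. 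Without some substitute for these two steps (equidistribution on the orbit closure itself, and the compact-restriction trick replacing a global dominating function), your argument does not close.
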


The continuity for $\xi<\infty$ of the limit distributions $F_\scrP$ and $F_{\scrP,\vecq}$ in
Theorems \ref{Thm0}, \ref{Thm1} and \ref{Thm2} is an immediate consequence of Theorem \ref{SIEGELRDTHM} 
and the formulas in Theorem \ref{THM4}; for
$F_\scrP$ one uses also the fact that each $\widetilde\fQ_g$ can be obtained as $\fQ_{g'}$ for an
appropriate $g'$; cf.\ Proposition \ref{SLEQASLFORGENERICXPROP} and Corollary \ref{SIEGELTILDECOR} below.
We give a proof of the continuity at $\xi=\infty$ in Remark \ref{XIINFTYLIMREM}.

\subsection{Plan of the paper}

In Section \ref{EXsec} we give several examples of standard constructions of quasicrystals and discuss the corresponding
Ratner subgroups $\widetilde H_g$ and $H_g$ appearing in Theorem~\ref{THM4}.
In Section \ref{BASICSEC} we give some fundamental facts regarding the %
cut-and-project construction.
The key ingredient in the proofs of our main results are equidistribution theorems on the homogeneous space
$\Gamma\backslash\Gamma H_g$; these are established in Section \ref{DYNAMICSsec}.
In Section \ref{secSiegel} we prove the Siegel integral formula, Theorem \ref{SIEGELRDTHM}, in a slightly more general form,
and in Section~ \ref{MAINPROOFSSEC}, building on the results from previous sections, 
we prove %
Theorems \ref{Thm0}, \ref{Thm1}, \ref{Thm2} and \ref{THM4}.

Finally in an appendix we outline how the same methods can also be applied to understand the fine-scale statistics of directions in a cut-and-project set.

\section{Examples}\label{EXsec}

\subsection{Quasicrystals with low-dimensional internal spaces}

The following result holds:
\begin{prop}\label{GENERICPROP}
Assume $d>m$. Let $g\in G^1$ be such that for the lattice $\scrL=\Z^ng$, the map $\pi_{|\scrL}$ is injective and
$\scrA=\overline{\pi_\intl(\scrL)}=\R^m$.
Then $H_g=G^1$.
\end{prop}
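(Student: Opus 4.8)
The plan is to show that $H_g$ is the unique closed connected subgroup of $G^1=\SL(n,\R)$ with $\Gamma^1\cap H_g$ a lattice, $\varphi_g(\SL(d,\R))\subset H_g$, and $\overline{\Gamma^1\backslash\Gamma^1\varphi_g(\SL(d,\R))}=\Gamma^1\backslash\Gamma^1 H_g$; the claim is that under the stated hypotheses this forces $H_g=G^1$. First I would reduce to the case $g=1_n$, i.e. $\scrL=\Z^n$: indeed $H_g=g^{-1}\tilde H_{1_n}g$ is not quite legitimate since the defining data ($\varphi_g$) is conjugated by $g$, but the point is that $\varphi_g(\SL(d,\R))=g\,\varphi_{1_n}(\SL(d,\R))\,g^{-1}$, so $H_g=g\,H'\,g^{-1}$ where $H'$ is the closed connected subgroup attached to the embedding $\varphi_{1_n}$ with respect to the conjugated lattice $\Z^n g$. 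Since $G^1=\SL(n,\R)$ is its own conjugate, it suffices to show that whenever $\scrL\subset\R^n$ is a unimodular lattice with $\pi|_\scrL$ injective and $\overline{\pi_\intl(\scrL)}=\R^m$, the orbit closure of $\{M\in G^1 : M$ acts on $\scrL$ through $\varphi(\SL(d,\R))\}$ is all of $\Gamma^1\backslash G^1$. Concretely: letting $g$ be such that $\scrL=\Z^n g$, the orbit is $\Gamma^1\backslash\Gamma^1\varphi_g(\SL(d,\R))$ and we want its closure to be everything.

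The heart of the matter is Ratner's theorem together with the subgroup structure of $\SL(n,\R)$. The group $H_g$ is generated by one-parameter unipotent subgroups (it is the Ratner subgroup), it contains the conjugate $\varphi_g(\SL(d,\R))$ of $\SL(d,\R)$ embedded block-diagonally as $\mathrm{diag}(A,1_m)$, and $\Gamma^1\cap H_g$ is a lattice. The strategy is to analyze which connected subgroups $H$ of $\SL(n,\R)$ can contain a conjugate of $\SL(d,\R)$ (embedded in the top-left $d\times d$ block) and admit a lattice. Because $d>m$, hence $d>n/2$, the representation of $\SL(d,\R)$ on $\R^n=\R^d\oplus\R^m$ decomposes as the standard representation plus $m$ trivial summands, and any intermediate subgroup $\SL(d,\R)\subset H\subset\SL(n,\R)$ has Lie algebra that is a $\SL(d,\R)$-submodule of $\lsl(n,\R)$; decomposing $\lsl(n,\R)$ under this $\SL(d,\R)$ one finds the possible $H$ are built from $\SL(d,\R)$, the $m$ copies of the standard representation $\R^d$ (as off-diagonal unipotent blocks) and its dual, a torus acting on the $\R^m$ block, and $\SL(m,\R)$ or subgroups thereof acting on the $\R^m$ block. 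I would enumerate these and use the two remaining constraints to eliminate all but $G^1$: the lattice condition rules out subgroups whose unipotent radical is not "rational enough" (e.g. a partial torus in the $\R^m$ block has no lattice unless it is trivial or all of the relevant $\SL$), and — crucially — the condition $\overline{\pi_\intl(\Z^n g)}=\R^m$ together with $\pi|_\scrL$ injective forces the orbit to be dense in the directions projecting to internal space. More precisely, $\overline{\pi_\intl(\scrL)}=\R^m$ says the image of $\scrL$ in $\R^m$ is dense, which (via Prop.~\ref{SCRAPROP}/the Weyl equidistribution input already cited) means the $\varphi_g(\SL(d,\R))$-orbit cannot be trapped in a proper subgroup $H$ whose projection to the relevant quotient would be a closed non-dense set; combined with $d>m$, which prevents $\SL(d,\R)$ from being "balanced" against the internal block, the only possibility left is $H_g=G^1$.

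The main obstacle I anticipate is the case analysis for intermediate subgroups together with correctly using the hypotheses: one must show rigorously that no proper $H$ with $\varphi_g(\SL(d,\R))\subset H\subsetneq\SL(n,\R)$ can simultaneously contain a lattice and have $\Z^n g$-orbit with internal projection dense in $\R^m$. A clean way to organize this is: (i) by Ratner, $H_g$ is generated by unipotents and $\Gamma^1\cap H_g$ is a lattice, so $H_g$ is a $\Q$-subgroup after conjugation — more precisely, there is $\gamma\in\SL(n,\Q)$ (coming from the rationality of $\Z^n g$ relative to $H_g$) making $\gamma H_g\gamma^{-1}$ defined over $\Q$; (ii) $\varphi_g(\SL(d,\R))\subset H_g$ forces the standard $\R^d$ to sit inside $H_g$'s defining module; (iii) decompose and conclude $H_g$ contains all of $\SL(n,\R)$'s unipotents $\{1+t\,E_{ij}\}$ — the ones with $i,j\le d$ from $\SL(d,\R)$ itself, the ones with $i\le d<j$ and $i>d\ge j$ by closing up the orbit using density of $\pi_\intl(\scrL)$ in $\R^m$ (this is where $d>m$ and the window/internal-density hypothesis enter, preventing a block-triangular or block-diagonal trap), and the ones with $i,j>d$ then follow from commutators — whence $H_g=G^1$. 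Step (iii), turning the internal-density hypothesis into the statement that the off-diagonal unipotents lie in $H_g$, is the real content and the step I would spend the most care on.
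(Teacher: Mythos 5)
First, a point of reference: the paper does not actually prove Proposition \ref{GENERICPROP} --- it states ``The proof will be presented elsewhere'' --- so there is no in-paper argument to compare yours against; your proposal has to stand on its own. Its skeleton is reasonable: since $\Gamma^1\cap H_g$ is a lattice in $H_g$ and $H_g$ is generated by unipotent one-parameter subgroups, $H_g$ is (the identity component of) the real points of a $\Q$-subgroup --- no auxiliary conjugation by some $\gamma\in\SL(n,\Q)$ is needed for this, by the way, Borel density applied to the Zariski-dense lattice $\Gamma^1\cap H_g$ suffices --- and one can then try to classify the closed connected subgroups containing the conjugated block embedding $\varphi_g(\SL(d,\R))$ via the $\lsl(d,\R)$-module decomposition of $\lsl(n,\R)$ and eliminate the proper ones using the lattice condition together with the hypotheses on $\scrL$.

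The genuine gap is that your step (iii) --- the only place where injectivity of $\pi|_\scrL$, density of $\pi_\intl(\scrL)$ in $\R^m$, and the inequality $d>m$ actually enter --- is precisely the step you leave unexecuted, and the heuristic you offer for it (``density of $\pi_\intl(\scrL)$ prevents a block-triangular or block-diagonal trap, forcing the off-diagonal unipotents into $H_g$'') cannot be correct as stated. The paper's own remark immediately after the proposition is the acid test: for $d\mid m$ (e.g.\ $d=m$) the number field lattices $\scrL_K^d$ of Section \ref{NUMFIELDsec} satisfy \emph{both} hypotheses (injectivity of $\pi|_\scrL$ and $\overline{\pi_\intl(\scrL)}=\R^m$), yet $H_g=g\,\SL(d,\R)^N g^{-1}\subsetneq G^1$, a conjugated block-diagonal group which contains $\varphi_g(\SL(d,\R))$, meets $\Gamma^1$ in an (irreducible) lattice, and contains none of the off-diagonal root groups you want to produce. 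So internal density alone does not force any off-diagonal unipotents into $H_g$; the argument must exploit $d>m$ in a concrete structural way (you flag this, but give no mechanism), and any correct execution has to explain why the exclusion fails exactly when $d\le m$ with $d\mid m$. In addition, the intermediate-subgroup analysis is heavier than your enumeration suggests: the reductive part normalizing the $\SL(d)$-block need not be a torus or $\SL(m,\R)$ (the $\SL(d,\R)^N$-type groups above are the relevant danger), for $d=2$ the standard representation is isomorphic to its dual so twisted submodules occur, and the rationality/lattice conditions must be formulated with respect to the $\Q$-structure attached to $\Z^n g$ in the $g$-conjugated coordinates, not to $\Z^n$. As it stands the proposal is a plausible outline whose decisive step --- the one carrying all of the arithmetic content of the hypotheses --- is missing.
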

The proof will be presented elsewhere.
The two assumptions on $\scrL$ %
(injectivity of $\pi_{|\scrL}$ and
density of $\overline{\pi_\intl(\scrL)}$) are standard in the quasicrystal literature.
It is important to note that the assumption $d>m$ in Proposition \ref{GENERICPROP} cannot be removed entirely;
indeed the number field construction to which we turn next can be used to give counterexamples for any $d,m$ with $d\mid m$.

In this vein, let us note that for arbitrary $d,m$, if $H_g=G^1$ then $\widetilde H_g=G$:
\begin{lem}\label{EASYTILDEHGLEM}
Let $g\in G^1$ be such that $H_g=G^1$. Then $\widetilde H_g=G$.
\end{lem}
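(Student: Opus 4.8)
The plan is to argue via closures of orbits directly. Recall that $\widetilde H_g$ is characterized as the unique closed connected subgroup of $G$ with $\Gamma\cap\widetilde H_g$ a lattice in it, $\varphi_g(\ASL(d,\R))\subset\widetilde H_g$, and $\overline{\Gamma\backslash\Gamma\varphi_g(\ASL(d,\R))}=\Gamma\backslash\Gamma\widetilde H_g$. Since the full group $G$ trivially satisfies the first requirement (as $\Gamma=\ASL(n,\Z)$ is a lattice in $G$) and the third (with equality of sets), and since $\varphi_g(\ASL(d,\R))\subset\widetilde H_g\subset G$ always, it suffices to prove that $\overline{\Gamma\backslash\Gamma\varphi_g(\ASL(d,\R))}=\Gamma\backslash G$; then uniqueness of $\widetilde H_g$ forces $\widetilde H_g=G$.

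First I would use the hypothesis $H_g=G^1$, which by definition gives $\overline{\Gamma^1\backslash\Gamma^1\varphi_g(\SL(d,\R))}=\Gamma^1\backslash G^1$, equivalently (lifting to $\Gamma\backslash G$ and using $\varphi_g(\SL(d,\R))\subset\varphi_g(\ASL(d,\R))$) that $\overline{\Gamma\backslash\Gamma\varphi_g(\ASL(d,\R))}\supseteq \Gamma\backslash\Gamma G^1$. Next I would bring in the translation directions: for $\vecx\in\R^d$, $\varphi_g(1_d,\vecx)=g\bigl((1_n,(\vecx,\bn))\bigr)g^{-1}$ lies in $\varphi_g(\ASL(d,\R))$, and conjugating the $\R^d$-translation subgroup by the (already-dense) $G^1$-part, one generates translations in all of $\R^n$: indeed the $\SL(n,\R)$-orbit of any nonzero vector in $\R^n$ is all of $\R^n\setminus\{\bn\}$, and $G^1=\SL(n,\R)$ acts on the normal subgroup $\R^n\trianglelefteq G$ in exactly this linear way. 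So the closure of $\Gamma\backslash\Gamma\varphi_g(\ASL(d,\R))$, being closed and containing $\Gamma\backslash\Gamma G^1$ together with a coset moving in a nonzero translation direction, must contain $\Gamma\backslash\Gamma(G^1\ltimes\R^n)=\Gamma\backslash G$.

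To make the last step rigorous I would work on $\Gamma\backslash G$ directly: let $X$ denote the closure of $\Gamma\backslash\Gamma\varphi_g(\ASL(d,\R))$, a closed $\varphi_g(\ASL(d,\R))$-invariant subset containing $\Gamma\backslash\Gamma G^1$. Pick any $h=(1_n,\vecxi)$ with $\vecxi=g^{-1}$-conjugate of $(\vecx,\bn)$, $\vecx\ne\bn$; then $X\supseteq \Gamma\backslash\Gamma G^1 h$. Acting on the right by $G^1\subset\varphi_g$-image is not quite available, so instead I would note $X$ is invariant under right multiplication by the group generated by $\varphi_g(\ASL(d,\R))$ and, by closure, by $G^1$ (the latter since $\Gamma\backslash\Gamma G^1\subset X$ and $X$ is $\varphi_g(\ASL(d,\R))$-invariant, so by a standard argument $X$ is a union of $\langle G^1,\varphi_g(\ASL(d,\R))\rangle$-orbits — more carefully, one uses that $X \supseteq \Gamma\backslash\Gamma G^1$ and $X$ contains a point of the form $\Gamma g_0 h$ with $h$ a nontrivial translation, hence $X \supseteq \Gamma\backslash\Gamma G^1 g_0^{-1} h g_0 G^1 = \Gamma\backslash\Gamma G^1 h' G^1$ for a suitable translation $h'$, and $G^1 h' G^1$ contains $G^1 \R^n$ because the $G^1$-conjugates of a single nonzero translation span $\R^n$ as a group). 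Either way one concludes $X=\Gamma\backslash G$.

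The main obstacle is the bookkeeping in that final step: I want to upgrade "$X$ contains the dense image of $\SL(d,\R)$ under $\varphi_g$, hence contains $\Gamma\backslash\Gamma G^1$" plus "$X$ contains one nontrivial translation coset" into "$X$ is all of $\Gamma\backslash G$", and the cleanest route is to verify that the right-invariance group $\{h\in G: Xh=X\}$ is closed, contains $\varphi_g(\ASL(d,\R))$, contains $G^1$ (by the density hypothesis), and contains a nonzero element of $\R^n\trianglelefteq G$; since $G^1$ normalizes $\R^n$ and acts on it with a single dense-in-the-appropriate-sense orbit closure (the linear action of $\SL(n,\R)$ on $\R^n$ has the property that the subgroup generated by the orbit of any nonzero vector is all of $\R^n$), this invariance group is all of $G$, whence $X=\Gamma\backslash G$. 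I expect this to be short once the right-invariance-group formalism is set up, and the only real content is the elementary linear-algebra fact about $\SL(n,\R)$ acting on $\R^n$.
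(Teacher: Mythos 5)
There is a genuine gap, and it sits exactly at the step you flag as the ``only real content''. Your reduction (show that the closure $X$ of $\Gamma\backslash\Gamma\varphi_g(\ASL(d,\R))$ is all of $\Gamma\backslash G$, then invoke uniqueness of $\widetilde H_g$) agrees with the paper, and the hypothesis $H_g=G^1$ does give $X\supseteq\Gamma\backslash\Gamma G^1$; right-invariance of $X$ under $\varphi_g(\ASL(d,\R))$ then yields all points $\Gamma(h,\vecv)$ with $h\in G^1$ and $\vecv\in(\R^d\times\{\bn\})g^{-1}$. But your next move — that the invariance group $\{h\in G\col Xh=X\}$ contains $G^1$ ``by the density hypothesis'', or equivalently that $X\supseteq\Gamma\backslash\Gamma G^1h'G^1$ — is a non sequitur: the density hypothesis gives only the set containment $X\supseteq\Gamma\backslash\Gamma G^1$, and a closed set containing a $G^1$-invariant subset need not itself be right-$G^1$-invariant (containing an orbit of a group is not the same as being a union of orbits of that group). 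The only invariance you actually have at that stage is under $\varphi_g(\ASL(d,\R))$, and its translation part stays inside the $d$-dimensional subspace $(\R^d\times\{\bn\})g^{-1}$ no matter how you iterate, so your mechanism for generating the remaining $m$ translation directions (conjugating a single translation by $G^1$) never gets off the ground. The fact that $X$ is $G^1$-invariant is true only a posteriori, i.e.\ it is essentially equivalent to what is being proved.

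The paper fills in precisely this missing ingredient by working upstairs in $G$ and using \emph{left} multiplication by the integer translations $(1_n,\vecell)\in\Gamma$: if $\gamma_k\varphi_g(A_k)\to h$, then $(1_n,\vecell)\gamma_k\varphi_g((A_k,\vecw))\to(h,\vecell h+(\vecw,\bn)g^{-1})$, so modulo $\Gamma$ the available translation parts over the base point $h$ form the set $\Z^nh+(\R^d\times\{\bn\})g^{-1}$, which is dense in $\R^n$ for almost every $h\in G^1$ — and that already forces density of $\Gamma\varphi_g(\ASL(d,\R))$ in $G$. If you want to keep your conjugation strategy, you would instead have to first prove the subgroup inclusion $G^1\subseteq\widetilde H_g$ from the orbit-closure inclusion $\Gamma\backslash\Gamma G^1\subseteq\Gamma\backslash\Gamma\widetilde H_g$ (e.g.\ by a Baire-category/connectedness argument showing that a connected subgroup contained in the countable union $\Gamma\widetilde H_g$ of closed cosets must lie in $\widetilde H_g$ itself); once $G^1\subseteq\widetilde H_g$ and the $d$-dimensional translations lie in $\widetilde H_g$, your linear-algebra observation does finish the proof, since the group generated by the $G^1$-conjugates of a nonzero translation is all of $\R^n$. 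As written, however, that bridge is missing, and the ``standard argument'' you appeal to does not exist in the form you state it.
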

\begin{proof}
It suffices to prove that $\Gamma\varphi_g(\ASL(d,\R))$ is dense in $G$.
Let $h\in G^1$ be given.
Since $H_g=G^1$, there exist sequences $\{\gamma_k\}\subset\Gamma^1$ and $\{A_k\}\subset\SL(d,\R)$ such that
$\gamma_k\varphi_g(A_k)\to h$ as $k\to\infty$.
Now for any $\vecell\in\Z^n$ and $\vecw\in\R^d$ we have $(1_n,\vecell)\gamma_k\in\Gamma$, $(A_k,\vecw)\in\ASL(d,\R)$, and
\begin{align*}
(1_n,\vecell)\gamma_k\varphi_g((A_k,\vecw))=(\gamma_k\varphi_g(A_k),\vecell\gamma_k\varphi_g(A_k)+(\vecw,\bn)g^{-1})
\to (h,\vecell h+(\vecw,\bn)g^{-1})
\end{align*}
as $k\to\infty$.
Thus the closure of $\Gamma\varphi_g(\ASL(d,\R))$ contains the set
\begin{align*}
\bigl\{(h,\vecv)\col h\in G^1,\: \vecv\in\Z^nh+(\R^d\times\{\bn\})g^{-1}\bigr\}.
\end{align*}
However $\Z^nh+(\R^d\times\{\bn\})g^{-1}$ is dense in $\R^n$ for almost every $h\in G^1$.
Hence $\Gamma\varphi_g(\ASL(d,\R))$ is dense in $G$.
\end{proof}

\subsection{Quasicrystals from algebraic number fields}
\label{NUMFIELDsec}

Let $K$ be a totally real number field of degree $N\geq2$ over $\Q$,
let $\scrO_K$ be its subring of algebraic integers,
and let $\pi_1,\ldots,\pi_N$  be the distinct embeddings $K$ into $\R$.
We will always view $K$ as a subset of $\R$ via $\pi_1$; in other words we agree that $\pi_1$ is the identity map.
Fix $d\geq1$ and set $n=dN$.
By abuse of notation we write $\pi_j$ also for the coordinate-wise embedding of $K^d$ into $\R^d$,
and for the entry-wise embedding of $M_d(K)$ (the algebra of $d\times d$ matrices with entries in $K$) into $M_d(\R)$.
Let $\scrL$  %
be the lattice in $\R^n=(\R^d)^N$ given by
\begin{align}\label{NUMFIELDLATTICE}
\scrL=\scrL_K^d:   %
=\Bigl\{(\vecx,\pi_2(\vecx),\ldots,\pi_N(\vecx))\col \vecx\in\scrO_K^d\Bigr\}.
\end{align}
As usual we set $m=n-d=(N-1)d$, let $\pi$ and $\pi_\intl$ be the projections of
$\R^n=(\R^d)^N=\R^d\times\R^m$ onto the first $d$ and last $m$ coordinates.
It follows from \cite[Cor.\ 2 in Ch.\ IV-2]{Weil} that $\pi_\intl(\scrL)$ is dense in $\R^m$,
i.e.\ we have $\scrA=\R^m$ and $\scrV=\R^n$ in the present situation.
Hence the window $\scrW$ should be taken as a subset of $\R^m$, and we consider
the cut-and-project set $\scrP(\scrW,\scrL)\subset\R^d$.

\subsubsection{Determining $H_g$ and $\widetilde H_g$}
Choose $\delta>0$ and $g\in G^1$ such that
\begin{align}
\scrL=\delta^{1/n}\Z^ng.
\end{align}
In fact
\begin{align}\label{DETOFLK}
\delta=|D_K|^{d/2},
\end{align}
where $D_K$ is the discriminant of $K$; %
cf., e.g., \cite[Ch.\ V.2, Lemma 2]{lang}.
We now claim that 
\begin{align}\label{HGNUMBERFIELDCASE}
\widetilde H_g=g\ASL(d,\R)^Ng^{-1}\qquad\text{and}\qquad
H_g=g\SL(d,\R)^Ng^{-1},
\end{align}
where $\ASL(d,\R)^N$ is embedded as a subgroup of $G=\ASL(n,\R)$ through
\begin{align}\label{SLdRNIMB}
\ASL(d,\R)^N\ni((A_1,\vecv_1),\ldots,(A_N,\vecv_N))\mapsto\Bigl(\diag[A_1,\ldots,A_N],(\vecv_1,\ldots,\vecv_N)\Bigr)\in G,
\end{align}
where $\diag[A_1,\ldots,A_N]$ is the block matrix whose diagonal blocks are $A_1,\ldots,A_N$ in this order,
and all other blocks vanish.

In order to prove \eqref{HGNUMBERFIELDCASE}, we set
\begin{align}
\Gamma_K^1:=\SL(d,\scrO_K)
\qquad\text{and}\qquad
\Gamma_K:=\ASL(d,\scrO_K)=\Gamma_K^1\ltimes\scrO_K^d,
\end{align}
which we view as subgroups of $\SL(d,\R)^N$ and $\ASL(d,\R)^N$, respectively, in the standard way through
$\gamma\mapsto(\pi_1(\gamma),\ldots,\pi_N(\gamma))$.
Then $\Gamma_K^1$ is a lattice in $\SL(d,\R)^N$ \cite[Thm.\ 12.3]{BoHari62}
and thus $\Gamma_K$ is a lattice in $\ASL(d,\R)^N$.
Note that $\Gamma_K$ stabilizes $\scrL$, i.e.\ $\scrL\gamma=\scrL$ holds for each $\gamma\in\Gamma_K$;
hence $\Gamma_K\subset g^{-1}\Phi_\delta(\Gamma) g\cap\ASL(d,\R)^N$,
where $\Phi_\delta$ is the isomorphism $G\to G$ given by $(A,\vecv)\mapsto(A,\delta^{1/n}\vecv)$.
It follows that $g^{-1}\Phi_\delta(\Gamma) g\cap\ASL(d,\R)^N$ is a lattice in $\ASL(d,\R)^N$, and thus
$\Gamma\cap\widetilde H$ is a lattice in $\widetilde H$, where $\widetilde H:=\Phi_\delta^{-1}(g\ASL(d,\R)^Ng^{-1})
=g\ASL(d,\R)^Ng^{-1}$.
Similarly $\Gamma^1\cap H$ is a lattice in $H$, where $H:=g\SL(d,\R)^Ng^{-1}$.
Using also the fact that $\Gamma_K^1$ is an irreducible lattice in $\SL(d,\R)^N$ it
follows that $\Gamma_K^1\varphi_1(\SL(d,\R))$ is dense in $\SL(d,\R)^N$
(cf.\ \cite[Cor.\ 5.21(5)]{Raghunathan}).
Conjugating with $g$ this implies that $(\Gamma^1\cap H)\varphi_g(\SL(d,\R))$ is dense in $H$,
or equivalently, $\Gamma^1\backslash\Gamma^1\varphi_g(\SL(d,\R))$ is dense in $\Gamma^1\backslash\Gamma^1 H$.
Hence $H$ has all the properties required of $H_g$, i.e.\ $H_g=H$.
Using also the fact that $\pi_\intl(\scrL_K^d)$ is dense in $\R^{(N-1)d}$ it follows similarly that
$\Gamma\backslash\Gamma\varphi_g(\ASL(d,\R))$ is dense in $\Gamma\backslash\Gamma\widetilde H$ and so 
$\widetilde H_g=\widetilde H$ and we have proved \eqref{HGNUMBERFIELDCASE}.

\subsubsection{}
Let us note that these considerations carry over trivially to the more general lattice
$\scrL_K^d g^0$, where $g^0=(g_1^0,\ldots,g_N^0)$ is any fixed element in $\GL(d,\R)^N$.
Indeed, note that $\scrL_K^d g^0={\delta'}^{1/n}\Z^ng'$ where
$\delta'=|D_K|^{d/2}\det g^0$ and $g'=g(\det g^0)^{-1/n}g^0\in G^1$,
and using the fact that conjugation by $g^0$ preserves $\varphi_1(\SL(d,\R))$ and
$\SL(d,\R)^N$, since $g^0$ is block diagonal,
we immediately verify that $H_{g'}=g'\SL(d,\R)^N{g'}^{-1}=H_g$;
similarly $\widetilde H_{g'}=g'\ASL(d,\R)^N{g'}^{-1}=\widetilde H_g$.

\subsection{Taking unions of translates of cut-and-project sets}
\label{UNIONSTRANSLSEC}

Let $\scrL\subset\R^n$ be an arbitrary lattice of full rank, and set $\scrA=\overline{\pi_\intl(\scrL)}$ as before;
fix a finite number of window sets $\scrW_1,\ldots,\scrW_s\subset\scrA$, and fix any vectors 
$\vect_1,\ldots,\vect_s\in\R^d$.
Let us consider the union of the translated cut-and-project sets $\vect_j+\scrP(\scrW_j,\scrL)$:
\begin{align}
\scrP\bigl(\{\scrW_j\},\{\vect_j\},\scrL\bigr):=\bigcup_{j=1}^s\bigl(\vect_j+\scrP(\scrW_j,\scrL)\bigr).
\end{align}

We will now show that, by a simple construction, the set $\scrP\bigl(\{\scrW_j\},\{\vect_j\},\scrL\bigr)$ 
can be recovered as a cut-and-project set $\scrP(\scrW',\scrL')$ within our framework.
We start by fixing a finite number of vectors $\vecb_1,\ldots,\vecb_r\in\R^n$ so that
$\{\vect_1,\ldots,\vect_s\}\subset\pi(\scrL+\sum_{k=1}^r\Z\vecb_k)$.
Note that this can always be achieved by taking $r=s$ and taking each $\vecb_k$ so that
$\pi(\vecb_k)=\vect_k$; however in practice one can often make more convenient choices with $r$ smaller than $s$.
Set $m'=m+r$ and $n'=d+m'$; let $\pi'$ and $\pi'_\intl$ be the projections of $\R^{n'}=\R^d\times\R^{m'}$
onto the first $d$ and last $m'$ coordinates, and let
\begin{align}\label{TRANSLUNIONCONSTR}
\scrL':=(\scrL\times\{\bn\}) %
+\sum_{k=1}^r\Z(\vecb_k,\vece_k)\subset\R^{n'},
\end{align}
where we express vectors using the decomposition $\R^{n'}=\R^n\times\R^{r}$, and $\vece_1,\ldots,\vece_r$ are
the standard basis vectors in $\R^r$.
Note that $\scrL'$ is a lattice of full rank in $\R^{n'}$.
We will call $\scrL'$ as in \eqref{TRANSLUNIONCONSTR} an \textit{extension} of rank $r$ over $\scrL$
by the \textit{extension vectors} $\{\vecb_k\}$.
Next let $\scrA'$ be the closure of $\pi'_\intl(\scrL')$ in $\R^{m'}$; %
then
\begin{align}
\scrA'=(\scrA\times\{\bn\}) %
+\sum_{k=1}^r\Z(\pi_\intl(\vecb_k),\vece_k),
\end{align}
where we express vectors using $\R^{m'}=\R^m\times\R^r$.
It follows from the choice of $\vecb_1,\ldots,\vecb_r$ that for each $j\in\{1,\ldots,s\}$ there exist
$\vecv^{(j)}\in\scrL$ and $\alpha_1^{(j)},\ldots,\alpha_r^{(j)}\in\Z$ such that
$\vect_j=\pi(\vecv^{(j)}+\sum_{k=1}^r\alpha_k^{(j)}\vecb_k)$.
We set
\begin{align}
\vecm_j':=(\vecv^{(j)},\bn)+\sum_{k=1}^r\alpha_k^{(j)}(\vecb_k,\vece_k)\in\scrL'
\quad\text{and}\quad
\scrW_j':=(\scrW_j\times\{\bn\})+\pi_\intl'(\vecm_j')\subset\scrA'
\end{align}
for $j=1,\ldots,s$.

As an immediate consequence of our definitions we now have
\begin{align}
\vect_j+\scrP(\scrW_j,\scrL)=\scrP(\scrW'_j,\scrL'),
\end{align}
and thus also, with $\scrW':=\cup_{j=1}^s\scrW_j'$:
\begin{align}\label{UNIONOFTRANSLATESRECOVERED}
\scrP\bigl(\{\scrW_j\},\{\vect_j\},\scrL\bigr):=\bigcup_{j=1}^s\bigl(\vect_j+\scrP(\scrW_j,\scrL)\bigr)
=\bigcup_{j=1}^s\scrP(\scrW'_j,\scrL')=\scrP(\scrW',\scrL'),
\end{align}
as desired. %

\vspace{5pt}

As a particular example, note that the above construction also applies when $m=0$,
in which case we understand $\scrA=\R^0=\{\bn\}$ and 
with the only possible (non-empty) $\scrW$ being $\scrW=\{\bn\}$, we have
$\scrP(\scrW,\scrL)=\scrL$.
Hence \eqref{UNIONOFTRANSLATESRECOVERED} shows that any \textit{periodic} Delone set
(viz.\ a union of a finite number of translates of a fixed lattice $\scrL\subset\R^d$)
can be obtained as a cut-and-project set \eqref{CUTPROJDEF}.
For example the case of a honeycomb recently treated by Boca and Gologan \cite{Boca09}
and Boca \cite{Boca10} is contained in the present work.
(In fact for the honeycomb all $\vect_j$ can be expressed as a rational linear combination of the lattice vectors in $\scrL$
and hence we are in the particularly simple situation described in the next Section \ref{RATTRANSLSEC}.)

We next discuss the Ratner subgroups associated with $\scrL'$.
Take $\delta>0$ and $g\in G^1$ so that $\scrL=\delta^{1/n}\Z^ng$.
Let $B$ be the $r\times n$ matrix whose row vectors are $\vecb_1,\ldots,\vecb_r$.
Then $B=\delta^{1/n}\beta g$ for some (uniquely determined) $\beta\in M_{r\times n}(\R)$,
and we have 
\begin{align}
\scrL'=\delta^{1/n'}\Z^{n'}g',
\end{align}
where
\begin{align}\label{RATTRANSLSECGPFORMULA}
g':=\delta^{-1/n'}\matr{\delta^{1/n}g}0B{1_r}
=\delta^{-1/n'}\matr{1_n}0{\beta}{1_r}\matr{\delta^{1/n}g}00{1_r}
\in\SL(n',\R).
\end{align}

\subsubsection{Determining $H_{g'}$ and $\widetilde H_{g'}$ -- in the special case of rational translates}
\label{RATTRANSLSEC}

Let us define the homomorphism $\phi_\beta:G\to \ASL(n',\R)$ through
\begin{align}\label{PHIBETADEF}
\phi_\beta((h,\vecv)):=\matr{1_n}0{\beta}{1_r}\left(\matr h00{1_r},(\vecv,\bn)\right){\matr{1_n}0{\beta}{1_r}}^{-1}
=\left(\matr h0{\beta h-\beta}{1_r},(\vecv,\bn)\right),
\end{align}
and note that 
\begin{align}\label{VARPHIGPFORMULA}
\varphi_{g'}((A,\vecv))=(\phi_\beta\circ\varphi_{g})((A,\delta^{(1/n')-(1/n)}\vecv)),\qquad\forall (A,\vecv)\in\ASL(d,\R).
\end{align}
Now assume that each $\vecb_k$ is a rational linear combination of the lattice vectors in $\scrL$.
We then claim that
\begin{align}\label{HGPFORCONGRQUASICR}
\widetilde H_{g'}=\phi_\beta(\widetilde H_{g})\qquad\text{and}\qquad H_{g'}=\phi_\beta(H_{g}).
\end{align}
Indeed, note that $\varphi_{g'}(\SL(d,\R))\subset\Phi_\beta(H_g)$ by \eqref{VARPHIGPFORMULA};
also the assumption about $\vecb_1,\ldots,\vecb_r$
implies that there is some positive integer $N$ such that $\beta\in M_{r\times n}(N^{-1}\Z)$,
and now one checks that $\SL(n',\Z)\cap\phi_\beta(H_g)$ contains $\phi_\beta(\Gamma^1(N)\cap H_g)$,
where $\Gamma^1(N)$ is the congruence subgroup
\begin{align}
\Gamma^1(N):=\bigl\{\gamma\in\Gamma^1\col \gamma\equiv 1_n\minmod N\Z\bigr\}.
\end{align}
It is known that
$\Gamma^1(N)$ has finite index in $\Gamma^1$; hence $\Gamma^1(N)\cap H_g$ has finite index in $\Gamma^1\cap H_g$,
and $\SL(n',\Z)\cap\phi_\beta(H_g)$ is a lattice in $\phi_\beta(H_g)$.
Next note that by Ratner \cite[Cor.\ B]{Ratner91b} there is a closed connected subgroup $H$ of $G^1$ 
such that $\varphi_g(\SL(d,\R))\subset H$, 
$\Gamma^1(N)\cap{H}$ is a lattice in ${H}$
and the closure of $(\Gamma^1(N)\cap{H})\varphi_{g}(\SL(d,\R))$ in $G^1$ equals ${H}$.
Then ${H}$ has all the properties required of $H_g$ and hence $H_g={H}$;
thus $H_{g}$ equals the closure of $(\Gamma^1(N)\cap H_g)\varphi_{g}(\SL(d,\R))$.
This implies that $\phi_\beta(H_{g})$ equals the closure of $(\SL(n',\Z)\cap\phi_\beta(H_g))\varphi_{g'}(\SL(d,\R))$,
and we have thus proved $H_{g'}=\phi_\beta(H_{g})$.
By an entirely similar argument, using $\Gamma^1(N)\ltimes\Z^n$ in place of $\Gamma^1(N)$, we also obtain
$\widetilde H_{g'}=\phi_\beta(\widetilde H_{g})$. Now \eqref{HGPFORCONGRQUASICR} is proved.

\subsubsection{Determining $H_g$ -- in a special case of linearly independent translates}
\label{IRRATTRANSLSEC}

Let us return to the special case of a \textit{periodic} Delone set,
i.e.\ a union of a finite number of translates of a fixed lattice $\scrL\subset\R^d$ ($d=n$).
Let $\vecb_1',\ldots,\vecb_d'$ be any fixed integer basis for $\scrL$.
We now consider the situation when the shift vectors $\vecb_1,\ldots,\vecb_r$ are such that
\textit{$\vecb_1,\ldots,\vecb_r,\vecb_1',\ldots,\vecb_d'$ are linearly independent over $\Q$.}
We claim that in this case, writing $n'=d+r$ and letting $g'\in\SL(n',\R)$ 
(as well as $g,\delta,B,\beta,\phi_\beta$) be as in Section \ref{RATTRANSLSEC},
we have 
\begin{align}\label{IRRATTRANSLCLAIM}
H_{g'}=\left\{\matr h0u{1_r}\col h\in G^1=\SL(d,\R),\: u\in M_{r\times d}(\R)\right\}
\end{align}
and
\begin{align}\label{IRRATTRANSLCLAIM2}
\widetilde H_{g'}=\bigl\{(T,(\vecv,\bn))\col T\in H_{g'},\:\vecv\in\R^d\bigr\}.
\end{align}

Indeed, let us write $H$ for the set in the right hand side of \eqref{IRRATTRANSLCLAIM};
then by \eqref{PHIBETADEF} and \eqref{VARPHIGPFORMULA} we have $\varphi_{g'}(\SL(d,\R))\subset H$;
also note that $\SL(n',\Z)\cap H$ is a lattice in $H$.
Thus to prove \eqref{IRRATTRANSLCLAIM}
it suffices to prove that $(\SL(n',\Z)\cap H)\phi_\beta(\SL(d,\R))$ is dense in $H$, i.e. that the set of matrices
\begin{align*}
\matr\gamma 0{\alpha}{1_r}\matr h0{\beta h-\beta}{1_r}
=\matr {\gamma h}0{(\alpha+\beta)h-\beta}{1_r},
\end{align*}
where $h,\gamma,\alpha$ vary over $G^1$, $\Gamma^1$ and $M_{r\times d}(\Z)$, respectively, is dense in $H$.
Replacing here $h$ by $\gamma^{-1}h$ and $\alpha$ by $\alpha\gamma$ we see that it suffices to prove that if 
$C\subset M_{r\times d}(\R/\Z)$ is the closure of the image of the set $\{\beta\gamma\col\gamma\in\Gamma^1\}$
under the projection $M_{r\times d}(\R)\to M_{r\times d}(\R/\Z)$, then $C=M_{r\times d}(\R/\Z)$.

Note that our assumption about $\vecb_1,\ldots,\vecb_r$ implies that there does not exist any $\veckappa\in\Z^r\setminus\{\bn\}$
satisfying $\veckappa \beta\in\Z^d$.
Hence by Weyl equidistribution, %
the set $\{\trans(\beta\trans\veca)\col\veca\in\Z^d\cap[1,T]^d\}$ becomes asymptotically equidistributed 
in the torus $\R^r/\Z^r$ as $T\to\infty$.
By a standard sieving argument, the same also holds if $\Z^d$ is replaced by
$\widehat\Z^d=\{\veca\in\Z^d\col \gcd(\veca)=1\}$, the set of primitive integer points,
and in particular we conclude that $\{\trans(\beta\trans\veca)\col\veca\in\widehat\Z^d\}$ is dense in $\R^r/\Z^r$.
Using also the compactness of $C$ and the fact that for any $\veca\in\widehat\Z^d$ there is some $\gamma\in\Gamma^1$ 
whose first column equals $\trans\veca$, it follows that for every $\vecw\in\R^r/\Z^r$ there is some $u\in C$ whose 
first column equals $\trans\vecw$.
Now let $v\in M_{r\times d}(\R/\Z)$ and $\ve>0$ be given.
Then there is some $\vecw\in\R^r/\Z^r$ such that $\Z\vecw$ is dense in $\R^r/\Z^r$ and $\trans\vecw$ is $\ve$-near
the first column of $v$.
By what we have just proved there is some $u\in C$ whose first column equals $\trans\vecw$.
But note that $C$ is $\Gamma^1$-right invariant; in particular $u\smatr 1{\veca}0{1_{d-1}}\in C$ for every $\veca\in\Z^{d-1}$;
and by choosing $\veca$ appropriately we can make each column of $u\smatr 1{\veca}0{1_{d-1}}$ be $\ve$-near the corresponding
column of $v$. Letting $\ve\to0$ we conclude $v\in C$,
i.e.\ we have proved that $C=M_{r\times d}(\R/\Z)$.
This completes the proof of \eqref{IRRATTRANSLCLAIM}.

Finally \eqref{IRRATTRANSLCLAIM2} follows immediately from \eqref{IRRATTRANSLCLAIM},
since $\widetilde H_{g'}$ contains both $H_{g'}$ and $\varphi_{g'}((1_d,\vecv))$
for each $\vecv\in\R^d$,
and since the right hand side of \eqref{IRRATTRANSLCLAIM2} is indeed a closed connected subgroup of $G$
which intersects $\ASL(n',\Z)$ in a lattice.

\subsection{Passing to a sublattice}
\label{SUBLATTICE}

Let $\scrL$ be as before, and let $\scrL'$ be a sublattice of $\scrL$ of full rank.
Then $\scrL'$ has finite index $N:=[\scrL:\scrL']$ as a subgroup of $\scrL$, 
and if $\scrL=\delta^{1/n}\Z^ng$ and $\scrL'={\delta'}^{1/n}\Z^ng'$ for some $\delta,\delta'>0$ and $g,g'\in G^1$ 
then $\delta'=N\delta$ and there is a $T\in M_n(\Z)$ with $\det T=N$ such that $g'=N^{-1/n}Tg$.
It will be convenient to know the precise relation between the Ratner subgroups for $\scrL$ and $\scrL'$:
\begin{lem}\label{HGFORSUBLATTICELEM}
In the situation just described, $H_{g'}=TH_gT^{-1}$ and $\widetilde H_{g'}=T\widetilde H_gT^{-1}$.
\end{lem}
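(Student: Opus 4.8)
The strategy is to exploit the defining (Ratner-type) properties of $H_g$ and $\widetilde H_g$ together with the compatibility of conjugation by $T$ with the embedding $\varphi_\bullet$. Recall that $\scrL' = {\delta'}^{1/n}\Z^ng'$ with $g'=N^{-1/n}Tg$, so $\ZZ^n g' = N^{-1/n}\ZZ^n T g$ — but up to the overall scalar $N^{-1/n}$ (which is absorbed into $\delta'$) the point is that $\Z^n g'$ and $\Z^n T g$ span the same affine lattice, and hence $H_{g'}=H_{Tg}$ and $\widetilde H_{g'}=\widetilde H_{Tg}$ by definition of these subgroups (they depend only on the lattice, not on the normalisation — this is implicit in the construction in Section~\ref{secQuasi}, since $\Phi_\delta$-rescaling does not change the closure of $\Gamma\backslash\Gamma\varphi_g(\SL(d,\R))$). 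So it suffices to show $H_{Tg}=TH_gT^{-1}$ and $\widetilde H_{Tg}=T\widetilde H_gT^{-1}$, where here $T\in\SL(n,\R)$ after rescaling, i.e. we really work with $\widetilde T:=N^{-1/n}T\in G^1$.

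First I would record the key algebraic identity. From the definition $\varphi_g((A,\vecx)) = g\bigl(\smatr{A}{0}{0}{1_m},(\vecx,\vecnull)\bigr)g^{-1}$ one reads off directly
\begin{equation}
\varphi_{\widetilde Tg}((A,\vecx)) = \widetilde Tg\Bigl(\begin{pmatrix}A&0\\0&1_m\end{pmatrix},(\vecx,\vecnull)\Bigr)g^{-1}\widetilde T^{-1} = \widetilde T\,\varphi_g((A,\vecx))\,\widetilde T^{-1},
\end{equation}
so $\varphi_{\widetilde Tg}=\iota_{\widetilde T}\circ\varphi_g$ where $\iota_{\widetilde T}$ is conjugation by $\widetilde T$ in $G$. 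Consequently $\varphi_{\widetilde Tg}(\SL(d,\R))=\widetilde T\,\varphi_g(\SL(d,\R))\,\widetilde T^{-1}\subset \widetilde TH_g\widetilde T^{-1}$, and similarly for $\ASL(d,\R)$ and $\widetilde H_g$.

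Next I would verify that $\widetilde TH_g\widetilde T^{-1}$ has all the defining properties of $H_{\widetilde Tg}$, which by the uniqueness clause in Ratner's theorem forces equality. It is a closed connected subgroup (conjugation is a homeomorphic automorphism of $G$), it contains $\varphi_{\widetilde Tg}(\SL(d,\R))$ by the previous paragraph, and I must check that $\Gamma\cap(\widetilde TH_g\widetilde T^{-1})$ is a lattice in $\widetilde TH_g\widetilde T^{-1}$ and that $\Gamma\backslash\Gamma\varphi_{\widetilde Tg}(\SL(d,\R))$ is dense in $\Gamma\backslash\Gamma\widetilde TH_g\widetilde T^{-1}$. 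For the lattice property: $T\in M_n(\Z)$ with $\det T=N$, so $T^{-1}\Gamma^1 T\supset \Gamma^1(N)$ — more precisely $\widetilde T^{-1}\Gamma\widetilde T$ and $\Gamma$ are commensurable, since $T\ZZ^n$ and $\ZZ^n$ are commensurable sublattices, so $\Gamma\cap\widetilde T\Gamma\widetilde T^{-1}$ is finite index in both. Hence $\Gamma\cap\widetilde TH_g\widetilde T^{-1}\supset \widetilde T(\widetilde T^{-1}\Gamma\widetilde T\cap H_g)\widetilde T^{-1}\supset\widetilde T\bigl((\Gamma\cap\widetilde T^{-1}\Gamma\widetilde T)\cap H_g\bigr)\widetilde T^{-1}$, and since $(\Gamma\cap H_g)$ is a lattice in $H_g$ and $\Gamma\cap\widetilde T^{-1}\Gamma\widetilde T$ is finite index in $\Gamma$, the intersection $(\Gamma\cap\widetilde T^{-1}\Gamma\widetilde T)\cap H_g$ is finite index in $\Gamma\cap H_g$, hence still a lattice in $H_g$; conjugating, we get a lattice in $\widetilde TH_g\widetilde T^{-1}$. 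This may not be all of $\Gamma\cap\widetilde TH_g\widetilde T^{-1}$, but a subgroup containing a lattice and contained in a Lie group is itself a lattice (finite covolume is monotone downward under inclusion of subgroups with the same ambient group, provided the bigger one is still discrete — and $\Gamma\cap\widetilde TH_g\widetilde T^{-1}$ is discrete as a subgroup of the discrete group $\Gamma$), so we are fine. For density: applying $\iota_{\widetilde T}$ to the closure relation $\overline{\Gamma^1\varphi_g(\SL(d,\R))}\cap\text{(relevant piece)}$ is awkward because $\iota_{\widetilde T}$ does not preserve $\Gamma$; instead I would argue that $\overline{\varphi_{\widetilde Tg}(\SL(d,\R))}$ inside $G$ contains, by left-translating under the finite-index subgroup $\Gamma\cap\widetilde T\Gamma\widetilde T^{-1}=\widetilde T(\Gamma\cap\widetilde T^{-1}\Gamma\widetilde T)\widetilde T^{-1}$ and using $\overline{(\Gamma\cap\widetilde T^{-1}\Gamma\widetilde T)\varphi_g(\SL(d,\R))}\supset$ (a finite-index piece of, hence by connectedness all of) $H_g$, a conjugate $\widetilde TH_g\widetilde T^{-1}$; more carefully, passing to $\Gamma\backslash\Gamma\varphi_{\widetilde Tg}(\SL(d,\R))$ and noting this is a single $\varphi_{\widetilde Tg}(\SL(d,\R))$-orbit, its closure is $\Gamma\backslash\Gamma H'$ for the Ratner subgroup $H'=H_{\widetilde Tg}$, and the inclusions $\varphi_{\widetilde Tg}(\SL(d,\R))\subset\widetilde TH_g\widetilde T^{-1}$ together with the finite-index lattice argument give $H'\subseteq\widetilde TH_g\widetilde T^{-1}$; the reverse inclusion follows by the symmetric argument with $T$ replaced by $T^{-1}$ (i.e. $\widetilde T^{-1}H'\widetilde T\subseteq H_{\widetilde T^{-1}\widetilde Tg}=H_g$). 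The $\widetilde H$ statement is identical, using $\ASL$ in place of $\SL$ and $\Gamma=\ASL(n,\Z)$ throughout — and indeed the same commensurability $\widetilde T^{-1}\Gamma\widetilde T\sim\Gamma$ holds for the full affine group since the translation part rescales linearly and $T\ZZ^n\sim\ZZ^n$.

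The main obstacle is the density half of the uniqueness verification: conjugation by $\widetilde T$ is not a $\Gamma$-automorphism, so one cannot simply transport the closure statement for $\Gamma^1\backslash\Gamma^1\varphi_g(\SL(d,\R))$. The clean way around it — which I would adopt — is the two-sided inclusion argument sketched above: show $H_{\widetilde Tg}\subseteq\widetilde TH_g\widetilde T^{-1}$ using only that $\varphi_{\widetilde Tg}(\SL(d,\R))\subset\widetilde TH_g\widetilde T^{-1}$ and that the latter meets $\Gamma$ in a lattice (so Ratner's minimal invariant subgroup for the orbit is contained in it), and then apply the same with the roles of $\widetilde Tg$ and $g=\widetilde T^{-1}(\widetilde Tg)$ interchanged to get the opposite containment. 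This avoids any need to track $\Gamma$-equivariance and reduces everything to the commensurability $T\ZZ^n\sim\ZZ^n$, which is immediate from $T\in M_n(\Z)$, $\det T=N\neq0$.
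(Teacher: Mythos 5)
Your argument for the linear statement $H_{g'}=TH_gT^{-1}$ is essentially sound, and it takes a mildly different route from the paper's: the paper establishes (in Section \ref{RATTRANSLSEC}, via an auxiliary application of Ratner's theorem to the congruence subgroup $\Gamma^1(N)$) that $H_g$ is already the closure of $(\Gamma^1(N)\cap H_g)\varphi_g(\SL(d,\R))$, and then simply conjugates this density statement by $T$; you instead avoid transporting any density by running the minimality of the Ratner subgroup in both directions, using only that $\varphi_{g'}(\SL(d,\R))=T\varphi_g(\SL(d,\R))T^{-1}\subset TH_gT^{-1}$, that $\Gamma^1\cap TH_gT^{-1}$ contains the lattice $T(\Gamma^1(N)\cap H_g)T^{-1}$ and hence is a lattice, and the symmetric facts with $T$ replaced by $T^{-1}$ (using $NT^{-1}\in M_n(\Z)$). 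The minimality principle you invoke (if $H'$ is closed, connected, contains $\varphi_{g'}(\SL(d,\R))$ and meets $\Gamma^1$ in a lattice, then $H_{g'}\subseteq H'$) is the same one the paper uses in Corollary \ref{RATNERCOUNTABLETHMCOR}, so this half is acceptable; what you gain is not needing the congruence-orbit density, what the paper gains is a one-step conjugation once that density is in hand.

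The affine half, however, has a genuine gap: the claimed commensurability of $\widetilde T^{-1}\Gamma\widetilde T$ with $\Gamma=\ASL(n,\Z)$ for $\widetilde T=N^{-1/n}T$ is false in general. Conjugation by $\widetilde T$ sends the translation $(1_n,\vecm)$ to $(1_n,N^{1/n}\vecm T^{-1})$, so the translation lattice of $\widetilde T\Gamma\widetilde T^{-1}$ is $N^{1/n}\Z^nT^{-1}$ (and that of $\widetilde T^{-1}\Gamma\widetilde T$ is $N^{-1/n}\Z^nT$); unless the index $N$ is a perfect $n$-th power, $N^{1/n}\notin\Q$ and such a lattice meets $\Z^n$ only in $\{\bn\}$ (from $N^{1/n}\vecm T^{-1}=\vecm'\in\Z^n$ one gets $\vecm(NT^{-1})=N^{(n-1)/n}\vecm'$ with integral left-hand side and irrational scalar on the right, forcing $\vecm=\vecm'=\bn$). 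Hence $\widetilde T^{\pm1}\Gamma\widetilde T^{\mp1}\cap\Gamma$ contains no nontrivial translations and has infinite index in $\Gamma$, and your lattice step for $\widetilde H_{g'}$ collapses. The repair is to conjugate by $T$ itself, exactly as the paper does: $T(\Gamma^1(N)\ltimes N\Z^n)T^{-1}\subset\Gamma$ because $T\Gamma^1(N)T^{-1}\subset\Gamma^1$ and $N\Z^nT^{-1}\subset\Z^n$, and $\Gamma^1(N)\ltimes N\Z^n$ has finite index in $\Gamma$ (the same works with $T^{-1}$ for your reverse inclusion). The cost is that $T$-conjugation intertwines the embeddings only up to a reparametrisation, $\varphi_{g'}((A,\vecx))=T\varphi_g((A,N^{1/n}\vecx))T^{-1}$, but the image sets coincide, $\varphi_{g'}(\ASL(d,\R))=T\varphi_g(\ASL(d,\R))T^{-1}$, which is all your two-sided argument needs. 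Note also that this repair is forced by the statement itself: the lemma asserts $\widetilde H_{g'}=T\widetilde H_gT^{-1}$, whereas a $\widetilde T$-based argument would at best produce $\widetilde T\widetilde H_g\widetilde T^{-1}$, which differs from $T\widetilde H_gT^{-1}$ by the automorphism rescaling translation parts and is not obviously the same subgroup.
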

\begin{proof}
By Cramer's rule we have $NT^{-1}\in M_n(\Z)$,
and by a simple computation this is seen to imply $T\Gamma^1(N)T^{-1}\subset\Gamma^1$.
Hence $\Gamma^1\cap TH_gT^{-1}$ contains $T(\Gamma^1(N)\cap H_g)T^{-1}$,
and it follows that $\Gamma^1\cap TH_gT^{-1}$ is a lattice in $TH_gT^{-1}$,
since $\Gamma^1(N)\cap H_g$ is a lattice in $H_g$.
Recall also that, as we noted in the proof of \eqref{HGPFORCONGRQUASICR},
$H_g$ equals the closure of $(\Gamma^1(N)\cap H_g)\varphi_{g}(\SL(d,\R))$.
Conjugating with $T$, this implies that
$TH_gT^{-1}$ equals the closure of $(\Gamma^1\cap TH_gT^{-1})\varphi_{g'}(\SL(d,\R))$.
Hence $TH_gT^{-1}$ has all the properties required of $H_{g'}$; hence $H_{g'}=TH_gT^{-1}$.
The proof of $\widetilde H_{g'}=T\widetilde H_gT^{-1}$ is entirely similar,
using $\Gamma^1(N)\ltimes N\Z^n$ in place of $\Gamma^1(N)$.
\end{proof}

\subsection{The quasicrystal associated with a Penrose tiling}

Let us now discuss the specific example of a quasicrystal associated with a Penrose tiling.
It is well-known that such a quasicrystal 
can be expressed as a regular cut-and-project set; cf.\ 
\cite{bruijn} and \cite[Sec.\ 6.4]{mS95}.
Specifically, set $d=2$, $m=3$, thus $n=5$, and we let $g\in \SO(5)$ be the orthogonal matrix whose row vectors are 
\begin{align}\label{PENROSEBJDEF}
\vecv_j=\sqrt{\tfrac25}\Bigl(
\cos(j\tfrac{2\pi}5),\sin(j\tfrac{2\pi}5),\cos(j\tfrac{4\pi}5),\sin(j\tfrac{4\pi}5),2^{-\frac12}\Bigr)
\qquad\text{for }\: j=0,1,2,3,4,
\end{align}
in this order,
and set $\scrL=\Z^5g$.
In other words, $\scrL$ is the lattice spanned by $\vecv_0,\ldots,\vecv_4$.
Now $\scrA=\pi_\intl(\scrL)=\R^2\times 5^{-\frac12}\Z$.
Indeed, 
if we set $\vecv_j':=p(\vecv_j)\in\R^2$ where $p:\R^5\to\R^2$ is the projection 
$(x_1,\ldots,x_5)\mapsto(x_3,x_4)$ then
$\vecv_2'=-\vecv_0'-\tau \vecv_1'$ and $\vecv_3'=\tau\vecv_0'+\tau\vecv_1'$, where $\tau:=\frac12(1+\sqrt5)$;
hence since $\tau\notin\Q$ we see that $\Z\vecv_0'+\Z\vecv_1'+\Z\vecv_2'+\Z\vecv_3'$ is dense in $\R^2$,
and this implies that %
$\pi_\intl(\scrL)$ is dense in $\R^2\times 5^{-\frac12}\Z$, as desired.
Next fix the window set
\begin{align}
\scrW:=\scrA\cap\pi_{\intl}((\scrQ_5+\vecgamma)g),
\end{align}
where $\scrQ_5$ is the open cube $(-\frac12,\frac12)^5$
and where $\vecgamma=(\gamma_1,\ldots,\gamma_5)$ is a fixed vector in $\R^5$ satisfying
$\sum_{j=1}^5\gamma_j\equiv\frac12\mod1$ and which is \textit{regular} in the sense 
that the subspace $(\R^2\times\{\bn\})g^{-1}$ does not meet any 2-face, edge or vertex of the cube
$\overline{\scrQ_5}+\vecgamma+\vecm$ for any $\vecm\in\Z^5$;
note that this condition is fulfilled for Lebesgue-almost all 
$\vecgamma$ with $\sum_{j=1}^5\gamma_j\equiv\frac12\mod1$.
With these choices $\scrP(\scrW,\scrL)$ is a quasicrystal associated with a Penrose tiling.
This is clear from Senechal \cite[Sec.\ 6.4]{mS95},
by noticing that the orthogonal transformation $g^{-1}$ maps $\scrL$ to $\Z^5$,
and maps the physical and internal spaces $\R^2\times\{\bn\}$ and $\{\bn\}\times\R^3$ 
onto $\scrE$ and $\scrE^\perp$, respectively, where
\begin{align*}
\scrE=\Span_\R\bigl\{(1,\cos(\tfrac{2\pi}5),\cos(\tfrac{4\pi}5),\cos(\tfrac{6\pi}5),\cos(\tfrac{8\pi}5)),\:
(1,\sin(\tfrac{2\pi}5),\sin(\tfrac{4\pi}5),\sin(\tfrac{6\pi}5),\sin(\tfrac{8\pi}5))\bigr\},
\end{align*}
and also maps $\scrW$ onto $\scrA g^{-1}\cap\Pi^\perp(\scrQ_5+\vecgamma)$,
where $\Pi^\perp$ denotes orthogonal projection onto $\scrE^\perp$.

We next wish to determine $H_g$. 
We will do so by observing that $\scrL$ can be obtained 
as an extension (cf.\ Sec.\ \ref{UNIONSTRANSLSEC}) of a sublattice (cf.\ Sec.\ \ref{SUBLATTICE}) of a 
number field lattice $\scrL_K^2$ as in Sec.\ \ref{NUMFIELDsec}.
Our discussion here is influenced by Pleasants \cite{Pleasants03}.

Let $K$ be the quadratic number field $K=\Q(\sqrt5)$ and let $\scrO_K$ be its ring of integers;
thus $\scrO_K=\Z[\tau]=\Z+\Z\tau$.
We write $a\mapsto\overline a$ for the conjugation map of $K$.
Let $\scrL_K^2$ be as in \eqref{NUMFIELDLATTICE}; thus
\begin{align*}
\scrL_K^2=\bigl\{(\alpha,\beta,\overline\alpha,\overline\beta\bigr)\col\alpha,\beta\in\scrO_K\bigr\}\subset\R^4.
\end{align*}
Let $\widetilde\scrL_K^2$ be the sublattice %
\begin{align*}
\widetilde\scrL_K^2:=\Bigl\{\bigl(\alpha,\beta,\overline\alpha,\overline\beta\bigr)\in\scrL_K^2\col %
\tr_{K/\Q}(\alpha+\beta)\in 5\Z\Bigr\},
\end{align*}
and let $\scrL'\subset\R^5$ be the rank-one extension (cf.\ \eqref{TRANSLUNIONCONSTR}) of $\widetilde\scrL_K^2$ by the
extension vector $\vecb=(1,0,1,0)\in\R^4$, i.e.\
\begin{align}\label{PENROSELPRIM}
\scrL'=\Bigl\{\bigl(\alpha,\beta,\overline\alpha,\overline\beta,k\bigr)\col\alpha,\beta\in\scrO_K,\: k\in\Z,\:
k\equiv-2\tr_{K/\Q}(\alpha+\beta)\:(\text{mod }5)\Bigr\}.
\end{align}

We now claim that 
\begin{align}\label{PENROSENUMBFIELDREL}
\scrL=\scrL'g_0,
\end{align}
where 
\begin{align}
g_0=\sqrt{\tfrac25}\begin{pmatrix} 1&&&&
\\
\cos(\frac{2\pi}5) & \sin(\frac{2\pi}5) &&&
\\
&&1&&
\\
&&\cos(\frac{4\pi}5) & \sin(\frac{4\pi}5) &
\\
&&&&2^{-\frac12}
\end{pmatrix}
\in\GL(5,\R).
\end{align}
To prove this relation we start by noticing that 
\begin{align}\label{LPBASIS}
\scrL'=\Z(1,0,1,0,1)
&+\Z(0,1,0,1,1)
+\Z(\tau,0,\overline\tau,0,-2)
+\Z(0,\tau,0,\overline\tau,-2)
+\Z(0,0,0,0,5).
\end{align}
Let us identify $\R^5$ with $\CC\times\CC\times\R$ through $(x_1,\ldots,x_5)\mapsto(x_1+ix_2,x_3+ix_4,x_5)$.
With this identification, we get from \eqref{LPBASIS} that
$\scrL'g_0=\sum_{j=0}^4\Z\vecu_j$ where
\begin{align}\notag
\vecu_0=\sqrt{\tfrac25}(1,1,2^{-\frac12}),\quad
&\vecu_1=\sqrt{\tfrac25}(\xi_5,\xi_5^2,2^{-\frac12}),\quad
&&\vecu_2=\sqrt{\tfrac25}(\tau,\overline{\tau},-2\cdot2^{-\frac12}),
\\
&\vecu_3=\sqrt{\tfrac25}(\tau\xi_5,\overline{\tau}\xi_5^2,-2\cdot2^{-\frac12}),
&&\vecu_4=\sqrt{\tfrac25}(0,0,5\cdot2^{-\frac12}),
\end{align}
with $\xi_5:=e^{2\pi i/5}$.
On the other hand the vectors $\vecv_j$ in \eqref{PENROSEBJDEF} are now given by
$\vecv_j=\sqrt{2/5}(\xi_5^j,\xi_5^{2j},2^{-\frac12})$,
and we recall that $\scrL=\sum_{j=0}^4\Z\vecv_j$.
Using $\tau=-\xi_5^2-\xi_5^3$ and $\overline\tau=1-\tau=-\xi_5-\xi_5^4$ we verify that
\begin{align}
\vecu_0=\vecv_0;\quad
\vecu_1=\vecv_1;\quad
\vecu_2=-\vecv_2-\vecv_3;\quad
\vecu_3=-\vecv_3-\vecv_4;\quad
\vecu_4=\vecv_0+\vecv_1+\vecv_2+\vecv_3+\vecv_4.
\end{align}
From these relations it is clear that $\scrL'g_0\subset\scrL$, and also, by a quick inspection,
that $\scrL\subset\scrL'g_0$, i.e.\ we have completed the proof of \eqref{PENROSENUMBFIELDREL}.

\vspace{5pt}

Now fix $g_K\in\SL(4,\R)$ so that $\scrL_K^2=5^{1/4}\Z^4g_K$
(cf.\ \eqref{DETOFLK} and note that $D_K=5$).
Then $\widetilde\scrL_K^2=5^{1/4}\Z^4Tg_K$ for some $T\in M_4(\Z)$ with $\det T=5$
(cf.\ Sec.\ \ref{SUBLATTICE}); also 
$\scrL'=5^{2/5}\Z^5g'$ where $g'\in G^1$ is given by (cf.\ \eqref{RATTRANSLSECGPFORMULA})
\begin{align}
g'=5^{-2/5}\matr{1_4}0{\vecbeta}1\matr{5^{1/4}Tg_K}001. %
\end{align}
Using \eqref{HGNUMBERFIELDCASE}, Lemma \ref{HGFORSUBLATTICELEM} and \eqref{HGPFORCONGRQUASICR}
(using $5\vecb\in\widetilde\scrL_K^2$), we have
\begin{align*}
\widetilde H_{g'}=\phi_\vecbeta(Tg_K\ASL(2,\R)^2g_K^{-1}T^{-1})=g'\widetilde H{g'}^{-1}
\qquad\text{and}\qquad
H_{g'}=g'H{g'}^{-1},
\end{align*}
where
\begin{align}
\widetilde H:=\left\{\left(\begin{pmatrix} A_1 && \\ &A_2& \\ &&1\end{pmatrix},(\vecv_1,\vecv_2,0)\right)
\col((A_1,\vecv_1),(A_2,\vecv_2))\in\ASL(2,\R)^2\right\}
\end{align}
and $H$ is the corresponding embedding of $\SL(2,\R)^2$ in $\SL(5,\R)$.
Finally, using \eqref{PENROSENUMBFIELDREL}
(which implies $g=5^{2/5}\gamma g'g_0$ for some $\gamma\in\Gamma^1$),
and the fact that conjugation with $g_0$ preserves each of $\widetilde H$, $\varphi_1(\ASL(2,\R))$, $H$ and 
$\varphi_1(\SL(2,\R))$
(since $g_0$ is $2,2,1$-block diagonal), we conclude:
\begin{align}\label{PENROSEHG}
\widetilde H_g=g\widetilde H g^{-1}\qquad\text{and}\qquad H_g=gHg^{-1}.
\end{align}

\section{Some basic observations}
\label{BASICSEC}
In this section we prove some basic facts which we will need later about the cut-and-project construction
and the related Ratner subgroup $H_g\subset G$.

\begin{prop}\label{DELONEPROP}
For any affine lattice $\scrL\subset\R^n$ and any bounded subset $\scrW\subset\scrA$ with nonempty interior,
the cut-and-project set $\scrP(\scrW,\scrL)$ is a Delone set.
\end{prop}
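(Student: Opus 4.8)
The plan is to show that $\scrP(\scrW,\scrL)$ is both uniformly discrete (there is $r>0$ so that any ball of radius $r$ contains at most one point of $\scrP$) and relatively dense (there is $R>0$ so that every ball of radius $R$ contains at least one point of $\scrP$). Since $\scrP(\scrW,\scrL+\vecx) = \scrP(\scrW-\pi_\intl(\vecx),\scrL)+\pi(\vecx)$, a translation of a cut-and-project set built from the genuine lattice $\scrL-\scrL$ with a translated window, and since translating a set in $\R^d$ preserves both the uniform discreteness and relative density constants, it suffices to treat the case of a full-rank lattice $\scrL\subset\R^n$. Throughout we write $\scrA=\overline{\pi_\intl(\scrL)}$, recall $\scrV=\R^d\times\scrA^\circ$, and use that $\scrL\cap\scrV$ is a full-rank lattice in $\scrV$.

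For uniform discreteness: suppose $\vecq_1=\pi(\vecy_1)$ and $\vecq_2=\pi(\vecy_2)$ are two points of $\scrP$ with $\|\vecq_1-\vecq_2\|$ small, where $\vecy_1,\vecy_2\in\scrL$ and $\pi_\intl(\vecy_i)\in\scrW$. Then $\vecy:=\vecy_1-\vecy_2\in\scrL$ has $\|\pi(\vecy)\|$ small and $\pi_\intl(\vecy)=\pi_\intl(\vecy_1)-\pi_\intl(\vecy_2)\in\scrW-\scrW$, which is a bounded set. Hence $\vecy$ lies in the intersection of the discrete set $\scrL$ with a bounded region $\{\vecz : \|\pi(\vecz)\|\le\ve,\ \pi_\intl(\vecz)\in\scrW-\scrW\}$; this intersection is finite, so for $\ve$ small enough (smaller than the smallest positive value of $\|\pi(\vecy)\|$ over the finitely many $\vecy\in\scrL$ with $\pi_\intl(\vecy)\in\overline{\scrW-\scrW}$), the only possibility is $\pi(\vecy)=\vecnull$. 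But if $\pi(\vecy)=\vecnull$ and $\pi_\intl(\vecy)\in\overline{\scrW-\scrW}$ is bounded, then $\vecy$ lies in $\scrL\cap(\{\vecnull\}\times(\text{bounded}))$, again a finite set, so $\scrP$ has only finitely many pairs at distance $0$; after discarding that finite ambiguity (or invoking the standing assumption that $\pi_\scrW$ is injective so that $\pi(\vecy)=\vecnull$ forces $\vecy=\vecnull$) we conclude $\vecq_1=\vecq_2$. Thus $\scrP$ is uniformly discrete.

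For relative density, the natural tool is the counting asymptotics \eqref{WEYLCOUNTING1}–\eqref{density000}: applying \eqref{density000} with $\scrD$ a fixed ball shows $\#(\scrP\cap T\scrD)\sim c\,T^d$ with $c=\vol(\scrD)\mu_\scrA(\scrW)/\vol(\scrV/(\scrL\cap\scrV))>0$ (this is positive because $\scrW$ has nonempty interior, so $\mu_\scrA(\scrW)>0$), hence $\scrP$ is nonempty and indeed infinite. To upgrade this to a uniform gap bound one argues by contradiction: if $\scrP$ were not relatively dense there would be balls $B(\vecc_k,k)$, $k=1,2,\dots$, disjoint from $\scrP$. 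Translating by $-\vecc_k$ and using $\scrP-\pi(\vecx)=\scrP(\scrW-\pi_\intl(\vecx),\scrL)$ type identities together with compactness of the torus $\scrA/\overline{\pi_\intl(\scrL)}$ — more precisely, choosing $\vecx_k\in\R^n$ with $\pi(\vecx_k)=\vecc_k$ and passing to a subsequence so that $\pi_\intl(\vecx_k)$ converges modulo $\pi_\intl(\scrL)$ in the compact group $\scrA$ — one produces a limiting window $\scrW'$ (a small translate of $\scrW$, still with nonempty interior) for which $\scrP(\scrW',\scrL)$ contains no point in a ball of arbitrarily large radius about the origin, contradicting the nonemptiness established from \eqref{density000}. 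The cleanest version: it is standard (Hof) that a regular cut-and-project set with nonempty-interior window is relatively dense, with the covering radius controlled by the covolume of $\scrL\cap\scrV$ in $\scrV$ and the inradius of $\scrW$; one can quote \cite{Hof98} or Prop.~\ref{HOFWEYLEXPLPROP}.

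The main obstacle is the relative density step: uniform discreteness is a soft finiteness statement, but relative density genuinely uses the equidistribution of $\pi_\intl(\scrL)$ in $\scrA$ (Weyl's theorem), since without some irrationality the window could "miss" a half-space of $\scrL$ and leave arbitrarily large holes in physical space. So the crux is to phrase the gap bound uniformly, which I would do either by the compactness/subsequence argument sketched above or, more efficiently, by directly invoking the explicit Weyl-equidistribution statement of Proposition~\ref{HOFWEYLEXPLPROP} to get a quantitative lower bound on $\#(\scrP\cap B(\vecc,R))$ that is uniform in the center $\vecc$ once $R$ exceeds a threshold depending only on $\scrL$ and the inradius of $\scrW$.
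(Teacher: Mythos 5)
Your uniform discreteness argument is essentially the paper's: the difference vector $\vecy=\vecy_1-\vecy_2$ lies in $\scrL$ intersected with a bounded region, so $\|\pi(\vecy)\|$ has a positive minimum there. Two small repairs: the set of $\vecy\in\scrL$ with $\pi_\intl(\vecy)\in\overline{\scrW-\scrW}$ is in general \emph{infinite}; what is finite is its intersection with $\{\|\pi(\vecy)\|\leq 1\}$, which is all you need (the paper works inside $\scrB^n_r$ with $r=1+\diam(\scrW)$ for exactly this reason). And the worry about $\pi(\vecy)=\vecnull$ is vacuous: $\scrP$ is a subset of $\R^d$, so that case simply means $\vecq_1=\vecq_2$; no injectivity assumption is needed, and indeed Proposition~\ref{DELONEPROP} does not assume one.

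Relative density is where you genuinely diverge from the paper, and where your write-up needs fixing. Your subsequence sketch is garbled as stated: $\scrA$ is not compact in general (it is all of $\R^m$ in the number-field examples), and ``converging modulo $\pi_\intl(\scrL)$'' is not meaningful since $\pi_\intl(\scrL)$ is dense in $\scrA$; the compact object to use is $\R^n/\scrL$ (or $\scrV/(\scrL\cap\scrV)$): write $(\vecc_k,\bn)=\vecell_k+\vecw_k$ with $\vecell_k\in\scrL$ and $\vecw_k$ in a compact fundamental domain, pass to a convergent subsequence $\vecw_k\to\vecw$, and conclude that the window $\scrW+\pi_\intl(\vecw_k)$, which contains a fixed small ball for large $k$, would give an empty cut-and-project set, contradicting the counting asymptotics. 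Also, ``it is standard (Hof) that a cut-and-project set with nonempty-interior window is relatively dense'' is circular here, since that is precisely what Proposition~\ref{DELONEPROP} asserts. Your second suggestion, however, does work and is not circular: Proposition~\ref{HOFWEYLEXPLPROP} is proved without using Proposition~\ref{DELONEPROP}, and its convergence is uniform in the translate $\vecx$, so applying it with $\scrD=\scrB_1^d$ and with $\scrW$ replaced by a sub-window $\scrW'\subset\scrW^\circ$ chosen with $\mu_\scrA(\partial\scrW')=0$ and $\mu_\scrA(\scrW')>0$ (a step you must add, since the present proposition assumes only nonempty interior, not regularity) yields a radius $T_0$ with $\#\bigl(\scrL\cap((\vecx+T_0\scrB_1^d)\times\scrW')\bigr)>0$ for every $\vecx$, i.e.\ relative density. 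By contrast, the paper's own argument is softer: it covers the compact torus $\scrV/(\scrL\cap\scrV)$ by finitely many translates of a product neighbourhood and uses only the density of $\R^d\times\{\bn\}$ in that torus, with no equidistribution or counting input; so it is more elementary and keeps Proposition~\ref{DELONEPROP} independent of the Weyl-type machinery, whereas your route buys an explicit, quantitative covering radius at the cost of front-loading Proposition~\ref{HOFWEYLEXPLPROP}.
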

Cf.\ Meyer, \cite[%
p.\ 48 (Thm.\ IV)]{meyer}. For completeness we give a simple proof using our setup.  %
\begin{proof}
Using $\scrP(\scrW,\scrL+\vecx)=\scrP(\scrW-\pi_\intl(\vecx),\scrL)+\pi(\vecx)$ we may assume from start 
that $\scrL$ is a lattice.
Set $r=1+\diam(\scrW)$, and 
take $\delta>0$ so that
$\|\pi(\vecx)\|\geq\delta$ for all $\vecx\in\scrL\cap\scrB^n_r$ satisfying $\pi(\vecx)\neq\bn$.
Now let $\pi(\vecy)$ and $\pi(\vecy')$ (with $\vecy,\vecy'\in\scrL$, $\pi_\intl(\vecy),\pi_\intl(\vecy')\in\scrW$)
be any two distinct points in $\scrP$.
Then $\|\pi_\intl(\vecy)-\pi_\intl(\vecy')\|\leq\diam(\scrW)$;
hence if $\|\pi(\vecy)-\pi(\vecy')\|<1$ then $\vecy-\vecy'\in\scrL\cap\scrB_r^n$ and therefore
$\|\pi(\vecy)-\pi(\vecy')\|=\|\pi(\vecy-\vecy')\|\geq\delta$.
Thus $\|\pi(\vecy)-\pi(\vecy')\|\geq\min(1,\delta)$ always.
Hence $\scrP$ is uniformly discrete.

Next since $\scrW$ has non-empty interior, there is some $\vecb\in\scrL$ and an open ball $B\subset\scrA^\circ$
such that $\pi_\intl(\vecb)+B\subset\scrW$.
Let $B',B''\subset\scrA^\circ$ be open balls satisfying $B'+B''\subset B$.
Since the torus $\scrV/(\scrL\cap\scrV)$ is compact, there is a finite set $\{\vecv_1,\ldots,\vecv_r\}\subset\scrV$
such that $\vecv_j+(\scrB_1^d\times B')+(\scrL\cap\scrV)$ for $j=1,\ldots,r$ together cover $\scrV/(\scrL\cap\scrV)$.
It follows from the definition of $\scrV$ that $\R^d\times\{\bn\}$ is dense in $\scrV/(\scrL\cap\scrV)$;
in particular we can take $R>0$ so large that $\scrB_R^d\times\{\bn\}$ meets each set
$-\vecv_j+(\scrB_1^d\times B'')+(\scrL\cap\scrV)$,
or in other words $\vecv_j\in(\scrB_{R+1}^d\times B'')+(\scrL\cap\scrV)$ for each $j=1,\ldots,r$.
Now for every $\vecw\in\scrV$ we can take $j\in\{1,\ldots,r\}$ such that
\begin{align*}
\vecw\in\vecv_j+(\scrB_1^d\times B')+(\scrL\cap\scrV)
\subset(\scrB_{R+1}^d\times B'')+(\scrB_1^d\times B')+(\scrL\cap\scrV)
\subset(\scrB_{R+2}^d\times B)+(\scrL\cap\scrV).
\end{align*}
In particular for every $\vecx\in\R^d$, applying the above statement with $\vecw=(-\vecx+\pi(\vecb),\bn)$ we conclude that
$-\vecb+(\vecx+\scrB_{R+2}^d)\times(\pi_\intl(\vecb)+ B)$ has nonempty intersection with $\scrL$, and thus
$\vecx+\scrB_{R+2}^d$ has nonempty intersection with $\scrP$.
Hence $\scrP$ is relatively dense.
\end{proof}

\begin{prop}\label{HOFWEYLEXPLPROP}
Let $\scrL\subset\R^n$ be an affine lattice.
Then for any bounded subset $\scrW\subset\scrA$ %
with $\mu_\scrA(\partial\scrW)=0$ and any 
bounded subset $\scrD\subset\R^d$ with $\vol(\partial\scrD)=0$, we have
\begin{align*}
\frac{\#\bigl(\scrL\cap((\vecx+T\scrD)\times\scrW)\bigr)}{T^d}\to\delta_{d,m}(\scrL)\vol(\scrD)\mu_\scrA(\scrW),
\qquad\text{as }\: T\to\infty,
\end{align*}
uniformly over all $\vecx\in\R^d$.
\end{prop}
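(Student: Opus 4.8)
The plan is to reduce the claim to a uniform Weyl equidistribution statement on the torus $\scrV/(\scrL\cap\scrV)$ combined with ordinary lattice-point counting in $\R^d\times\scrA^\circ$. First I would dispose of the affine case as in the proof of Proposition \ref{DELONEPROP}: replacing $\scrL$ by $\scrL-\vecx_0$ for a fixed $\vecx_0$ only translates $\scrD$ and $\scrW$ (and leaves $\delta_{d,m}$ unchanged by definition), so I may assume $\scrL$ is a genuine lattice of full rank in $\R^n$. Next, recall $\scrV=\R^d\times\scrA^\circ$ and that $\scrL_\scrV:=\scrL\cap\scrV$ is a full-rank lattice in $\scrV$; write the quotient group $\Lambda:=\scrL/\scrL_\scrV$, which via $\pi_\intl$ maps isomorphically onto the discrete group $\Z\pi_\intl(\vecb_1)+\cdots+\Z\pi_\intl(\vecb_{m_2})$ whose closure together with $\scrA^\circ$ gives $\scrA$. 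Choosing coset representatives $\vecb$ for $\Lambda$ in $\scrL$, one has the disjoint decomposition
\begin{align*}
\#\bigl(\scrL\cap((\vecx+T\scrD)\times\scrW)\bigr)
=\sum_{\vecb\in\Lambda}\#\bigl(\scrL_\scrV\cap\bigl((\vecx+T\scrD)\times(\scrW-\pi_\intl(\vecb))\bigr)\cap(\scrV-\vecb)\bigr),
\end{align*}
where for each $\vecb$ the inner count is a count of points of the coset $\vecb+\scrL_\scrV$ inside the ``slab'' $(\vecx+T\scrD)\times\scrA^\circ$ that additionally land in the window $\scrW$ in the internal direction. Only finitely many $\vecb$ contribute since $\scrW$ is bounded.

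The second step is the core estimate: for a fixed relevant coset $\vecb$, I would show
\begin{align*}
\frac{\#\bigl((\vecb+\scrL_\scrV)\cap((\vecx+T\scrD)\times\scrW)\bigr)}{T^d}\longrightarrow
\frac{\vol(\scrD)\,\mu_{\scrA^\circ}\bigl((\scrW-\pi_\intl(\vecb))\cap\scrA^\circ\bigr)}{\vol(\scrV/\scrL_\scrV)},
\end{align*}
uniformly in $\vecx$. This is a two-scale equidistribution problem: in the physical directions the box $\vecx+T\scrD$ is dilated, while in the internal directions $\scrA^\circ$ we are cutting by a fixed region $\scrW-\pi_\intl(\vecb)$. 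The standard approach is to approximate the indicator of $\scrW$ from above and below by continuous compactly supported functions on $\scrA^\circ$ (possible since $\mu_\scrA(\partial\scrW)=0$, hence $\mu_{\scrA^\circ}(\partial((\scrW-\pi_\intl(\vecb))\cap\scrA^\circ))=0$), do the same for $\scrD$ using $\vol(\partial\scrD)=0$, and then apply the equidistribution of the dilated lattice slab $(\vecb+\scrL_\scrV)\cap(\R^d\times\scrA^\circ)$, pushed to the torus $\scrV/\scrL_\scrV$, as $T\to\infty$; this last ingredient is exactly classical Weyl equidistribution for the dense subgroup $(\R^d\times\{\bn\})\subset\scrV/\scrL_\scrV$ (the density of $\R^d\times\{\bn\}$ in $\scrV/\scrL_\scrV$ was noted in the proof of Proposition \ref{DELONEPROP}), and the rate is uniform in the starting point $\vecx$ because translation only shifts the base point of an equidistributing orbit. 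Summing over the finitely many contributing $\vecb$ and using $\sum_{\vecb\in\Lambda}\mu_{\scrA^\circ}((\scrW-\pi_\intl(\vecb))\cap\scrA^\circ)=\mu_\scrA(\scrW)$ (this is how $\mu_\scrA$ was normalized, as the Haar measure restricting to Lebesgue on $\scrA^\circ$) together with $\vol(\scrV/\scrL_\scrV)^{-1}=\delta_{d,m}(\scrL)$ gives the stated limit.

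The main obstacle is the uniformity in $\vecx\in\R^d$ together with the unboundedness of the test region in the physical direction: one must be careful that the boundary effects of the box $\vecx+T\scrD$ are controlled independently of $\vecx$, and that the approximating functions for $\scrW$ can be chosen once and for all. I would handle the first point by the usual sandwiching of $T\scrD$ between an inner and outer Jordan-measurable set whose measures differ by $\ve$ (the equidistribution-with-uniform-base-point then contributes an error $o(T^d)$ uniformly), and the second by fixing smooth bump approximations to $\ind_\scrW$ before letting $T\to\infty$. Everything else is routine once this is in place; indeed this proposition is precisely the explicit, uniform form of Hof's theorem \cite{Hof98} quoted around \eqref{WEYLCOUNTING1}, so in a pinch one could also simply cite \cite{Hof98} and remark that the uniformity in $\vecx$ follows by inspecting the proof there.
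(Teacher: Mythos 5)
Your overall strategy is the same as the paper's: reduce to a genuine lattice, split $\scrW$ according to the finitely many translates of $\scrA^\circ$ it meets (your coset decomposition over $\scrL/(\scrL\cap\scrV)$ is the paper's ``partition and translation'' reduction to $\scrW\subset\scrA^\circ$), sandwich $\scrD$ between Jordan approximants, and invoke Weyl equidistribution on the torus $\scrV/(\scrL\cap\scrV)$, uniformly in $\vecx$, using that $\R^d\times\{\bn\}$ lies in no proper rational subspace. The gap is in the step you declare routine, and it is precisely where the paper does its real work. Equidistribution of the expanding slice $(\vecx+T\scrD)\times\{\bn\}$ in $\scrV/(\scrL\cap\scrV)$ is a statement about integrals, against the $d$-dimensional Lebesgue measure carried by that slice, of Riemann-integrable functions on the torus; but the points you must count are the intersections of this slice with the $\scrL\cap\scrV$-translates of the $m_1$-dimensional set $\{\bn\}\times\scrW$, which form a null set for the slice measure. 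Approximating $\chi_\scrW$ by continuous compactly supported functions of the internal variable (and $\chi_\scrD$ by continuous functions) does not convert the count into such an integral: tested along the slice, those approximants see nothing of the count. The paper bridges this transversality problem by first subdividing $\scrW$ so that $\scrL\cap(\{\bn\}\times(\scrW-\scrW))=\{\bn\}$ — without this, distinct lattice points in the counting region can share the same physical part, and any count via physical footprints undercounts, an issue your sketch never mentions — and then thickening each counted point into an $\eta$-ball in the physical direction: for $\eta$ small enough that $\scrB_\eta^d\times\scrW$ embeds injectively in the torus, the count is sandwiched between $\vol(\scrB_\eta^d)^{-1}$ times the measures of the portions of $(\vecx+T\scrD_\eta^\mp)\times\{\bn\}$ falling into the thickened window mod $\scrL\cap\scrV$; equidistribution is applied only to these full-dimensional Jordan-measurable sets, and $\eta\to0$ at the end. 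Neither the thickening nor the injectivity/multiplicity reduction appears in your proposal, and without them ``apply equidistribution of the dense subgroup'' is not a proof.

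If what you actually had in mind was a Fourier route — Poisson summation over $\scrL\cap\scrV$ applied to smoothed counts $\sum_{\vecy}f(\pi(\vecy))\,g(\pi_\intl(\vecy))$ over each coset, with the main term coming from the zero frequency and every nonzero dual vector having nonvanishing physical component (the same irrationality input the paper uses), uniformity in $\vecx$ being free since $\vecx$ enters only through unimodular phases — then that is a legitimate, genuinely different mechanism which avoids the multiplicity issue altogether; but it is not what you wrote, and it still requires the dominated-convergence/summability estimate over the dual lattice to be carried out. Also note that falling back on citing Hof does not settle the point at issue, since the uniformity in $\vecx$ is exactly what the paper supplies its own argument for; and the phrase ``translation only shifts the base point of an equidistributing orbit'' is a heuristic, not an argument — here it is justified only because the Weyl/Fourier bounds are independent of the base point.
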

Cf.\ Schlottmann \cite{Schlottmann} and Hof \cite{Hof98}.
We give a proof along the lines of \cite{Hof98}.
\begin{proof}
By a translation argument we may assume without loss of generality that $\scrL$ is a lattice, i.e.\ $\bn\in\scrL$.
Furthermore, since $\scrW$ is bounded it can only intersect finitely many components of $\scrA$,
and by a partition and translation argument we may reduce to the situation when $\scrW\subset\scrA^\circ$.
Also by partitioning $\scrW$ further if necessary we may assume that
$\scrL\cap(\{\bn\}\times(\scrW-\scrW))=\{\bn\}$.

Since $\scrD$ is bounded and $\vol(\partial\scrD)=0$, $\scrD$ is Jordan measurable.
Hence for any $\eta>0$ we can construct bounded sets $\scrD_\eta^\pm\subset\R^d$ with 
$\vol(\partial\scrD_\eta^\pm)=0$ which have the properties that
$\scrD$ contains the $\eta$-neighbourhood of $\scrD_\eta^-$ and
$\scrD_\eta^+$ contains the $\eta$-neighbourhood of $\scrD$,
and $\vol(\scrD_\eta^+\setminus\scrD_\eta^-)\to0$ as $\eta\to0$.

Using now $\scrW\subset\scrA^\circ$ and $\scrL\cap(\{\bn\}\times(\scrW-\scrW))=\{\bn\}$ we have
\begin{align*}
T^{-d}\#\bigl(\scrL\cap((\vecx+T\scrD)\times\scrW)\bigr)
=T^{-d}\#\Bigl\{\vect\in T\scrD\col (\vecx+\vect,\bn)\in-(\{\bn\}\times\scrW)+(\scrL\cap\scrV)\Bigr\},
\end{align*}
and for any $T\geq1$ and any $\eta>0$ so small that the set $\scrB_\eta^d\times\scrW$ is injectively embedded
in the torus $\scrV/(\scrL\cap\scrV)$, the last quantity is bounded from above by
\begin{align}\label{HOFWEYLEXPLPROPPF1}
T^{-d}\vol(\scrB_\eta^d)^{-1}\vol\Bigl(\Bigl\{\vect\in T\scrD_\eta^+\col
(\vecx+\vect,\bn)\in-(\scrB_\eta^d\times\scrW)+(\scrL\cap\scrV)\Bigr\}\Bigr),
\end{align}
and from below by the analogous expression with $\scrD_\eta^-$.
However the fact that $\pi_\intl(\scrL\cap\scrV)$ is dense in $\scrA^\circ$ implies that
$\R^d\times\{\bn\}$ is not contained in any subspace of $\scrV=\R^d\times\scrA^\circ$ 
spanned by $\scrL$-vectors, other than $\scrV$ itself.
Therefore, by Weyl equidistribution, the set $(\vecx+T\scrD_\eta^+)\times\{\bn\}$ becomes asymptotically equidistributed
in the torus $\scrV/(\scrL\cap\scrV)$ as $T\to\infty$;   %
and in particular 
(since both $\scrW$ and $\scrD_\eta^+$ are Jordan measurable)
the expression in \eqref{HOFWEYLEXPLPROPPF1} tends to
\begin{align*}
\frac{\vol(\scrD_\eta^+)\vol(\scrB_\eta^d\times\scrW+(\scrL\cap\scrV)/(\scrL\cap\scrV))}
{\vol(\scrB_\eta^d)\vol(\scrV/(\scrL\cap\scrV))}
=\delta_{d,m}(\scrL)\vol(\scrD_\eta^+)\mu_\scrA(\scrW),
\end{align*}
uniformly with respect to $\vecx\in\R^d$,
where the last equality holds since $\scrB_\eta^d\times\scrW$ is injectively embedded in $\scrV/(\scrL\cap\scrV)$.
Similarly our bound from below tends to $\delta_{d,m}(\scrL)\vol(\scrD_\eta^-)\mu_\scrA(\scrW)$.
The proof is completed by taking $\eta\to0$ and using %
$\vol(\scrD_\eta^\pm)\to\vol(\scrD)$.
\end{proof}

\begin{thm}\label{RATNERCOUNTABLETHM}
(Ratner \cite{Ratner91b}.) The family $\{H_g\col g\in G\}$ is countable.
\end{thm}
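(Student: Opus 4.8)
The plan is to deduce countability of $\{H_g : g\in G\}$ from Ratner's theorem on the structure of closures of unipotent orbits, together with a compactness/separability argument on the group $G$ itself. First I would recall the key input: for each $g\in G$ the subgroup $H_g$ is, by construction (and by Ratner \cite{Ratner91a}, \cite[Cor.\ B]{Ratner91b}), the unique closed connected subgroup of $G$ containing $\varphi_g(\SL(d,\R))$ such that $\Gamma\cap H_g$ is a lattice in $H_g$ and $\overline{\Gamma\backslash\Gamma\varphi_g(\SL(d,\R))}=\Gamma\backslash\Gamma H_g$. Since $\varphi_g(\SL(d,\R))$ is generated by one-parameter unipotent subgroups of $G$, Ratner's orbit closure theorem applies directly and tells us that the orbit closure is a single homogeneous set $\Gamma\backslash\Gamma H_g x$ for a suitable closed subgroup. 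The crucial structural fact (contained in Ratner's work, see \cite[Cor.\ B]{Ratner91b}) is that only countably many subgroups $H$ of $G$ can arise this way, because each such $H$ must contain a conjugate of $\Gamma$-by-a-lattice and be defined by closure of a $\Gamma$-rational orbit; more precisely, the relevant subgroups are precisely those $H$ for which $\Gamma\cap H$ is a lattice in $H$, and Ratner showed there are only countably many such $H$ (up to the action of $\Gamma$, and then in total, since $\Gamma$ is countable).

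The key steps, in order, would be as follows. (1) Observe that every $H_g$ belongs to the set $\scrH:=\{H\leq G : H \text{ closed connected}, \ \Gamma\cap H \text{ is a lattice in } H, \ H \text{ is generated by unipotent one-parameter subgroups and the lattice } \Gamma\cap H\}$; this membership is immediate from the defining properties of $H_g$ and the fact that $\varphi_g(\SL(d,\R))$ is generated by unipotents. (2) Invoke Ratner's theorem (the precise statement is in \cite[Thm.\ 1 and Cor.\ B]{Ratner91b}, building on \cite{Ratner91a}) that $\scrH$ is countable. The mechanism behind this is that each $H\in\scrH$ is determined by its Lie algebra $\mathfrak h\subset\mathfrak g$, and by Ratner's analysis $\mathfrak h$ must be defined over $\Q$ with respect to the $\Q$-structure making $\Gamma$ arithmetic (here $G=\ASL(n,\R)$, $\Gamma=\ASL(n,\Z)$, a $\Q$-group with its integral points), and there are only countably many $\Q$-subalgebras of a fixed $\Q$-vector space since $\Q$ is countable and a subalgebra is determined by a $\Q$-basis. (3) Conclude that $\{H_g : g\in G\}\subset\scrH$ is countable.

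The main obstacle — or really the only nontrivial point — is justifying step (2), i.e.\ the countability of $\scrH$. This is genuinely a theorem of Ratner and I would not reprove it; the excerpt explicitly attributes Theorem \ref{RATNERCOUNTABLETHM} to Ratner \cite{Ratner91b}, so the honest proof is a citation together with the verification that $H_g$ lies in the class to which Ratner's countability statement applies. The one thing requiring care is matching conventions: Ratner's results are usually stated for $\Gamma\backslash G$ with $G$ a connected Lie group and $\Gamma$ a lattice, and for subgroups $H$ generated by the unipotent elements they contain (or with $\Gamma\cap H$ a lattice in $H$); one should check that $G=\ASL(n,\R)$, although not semisimple, is covered (it is, being a connected Lie group with $\Gamma=\ASL(n,\Z)$ a lattice, and Ratner's theorems apply in this generality, cf.\ \cite{Ratner91b}), and that $\varphi_g(\SL(d,\R))$ — being a connected semisimple group with no compact factors and trivial center issues aside, generated by its unipotents — falls under the hypotheses. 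Once these conventions are aligned, the result is immediate: $g\mapsto H_g$ takes values in a fixed countable set, hence its image is countable.

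Concretely, I would write:

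\begin{proof}
By Ratner's measure classification theorem and its topological consequence (see \cite{Ratner91a} and especially \cite[Thm.\ 1 and Cor.\ B]{Ratner91b}), applied to the lattice $\Gamma=\ASL(n,\Z)$ in $G=\ASL(n,\R)$, there are only countably many closed connected subgroups $H$ of $G$ for which $\Gamma\cap H$ is a lattice in $H$ and $H$ is generated by $\Gamma\cap H$ together with the one-parameter unipotent subgroups of $G$ contained in $H$. Call this countable family $\scrH$. For each $g\in G$, the group $H_g$ is closed and connected, $\Gamma\cap H_g$ is a lattice in $H_g$, and $H_g$ contains $\varphi_g(\SL(d,\R))$, which is a connected group generated by one-parameter unipotent subgroups of $G$; moreover, since $\Gamma\backslash\Gamma H_g$ is the closure of $\Gamma\backslash\Gamma\varphi_g(\SL(d,\R))$, the group $H_g$ is generated by $\Gamma\cap H_g$ together with $\varphi_g(\SL(d,\R))$ (any closed connected subgroup of $G$ containing these two and contained in $H_g$ must equal $H_g$, by the minimality in the definition of $H_g$). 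Hence $H_g\in\scrH$ for every $g\in G$, and therefore $\{H_g : g\in G\}\subseteq\scrH$ is countable.
\end{proof}
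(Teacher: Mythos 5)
Your overall strategy is the same as the paper's: the proof is a citation of Ratner's countability theorem from \cite{Ratner91b}, combined with a verification that every $H_g$ lies in the countable class. The paper performs this verification by quoting \cite[Cor.\ B]{Ratner91b} to produce a one-parameter subgroup $U\subset\varphi_g(\SL(d,\R))$ which is unipotent in $G$ and satisfies $\overline{\Gamma\backslash\Gamma U}=\Gamma\backslash\Gamma H_g$; this puts $H_g$ squarely in the class covered by \cite[Cor.\ A(2)]{Ratner91b} (closed connected $H$ with $\Gamma\cap H$ a lattice in $H$, admitting a one-parameter unipotent subgroup with dense orbit, equivalently acting ergodically, on $\Gamma\backslash\Gamma H$), and the countability of that class is exactly Ratner's statement.

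The one step of your argument that is not justified as written is your membership check: you assert that $H_g$ is generated by $\Gamma\cap H_g$ together with $\varphi_g(\SL(d,\R))$, ``by the minimality in the definition of $H_g$''. The definition used in the paper is a uniqueness statement (the unique closed \emph{connected} subgroup with $\Gamma\cap H_g$ a lattice and $\overline{\Gamma\backslash\Gamma\varphi_g(\SL(d,\R))}=\Gamma\backslash\Gamma H_g$); it does not directly give minimality, and in any case the closed subgroup generated by the discrete group $\Gamma\cap H_g$ and $\varphi_g(\SL(d,\R))$ need not be connected, nor is it clear a priori that it intersects $\Gamma$ in a lattice, so the uniqueness property of $H_g$ cannot be applied to it. Note also that density of $\Gamma\varphi_g(\SL(d,\R))$ in $\Gamma H_g$ does not immediately yield density of $(\Gamma\cap H_g)\varphi_g(\SL(d,\R))$ in $H_g$: writing $\gamma_k\varphi_g(A_k)\to h$ one gets $\gamma_k\in H_g$ only if the error term lies in $H_g$. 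When the paper needs such a generation statement (e.g.\ in Section \ref{RATTRANSLSEC}, where it shows $H_g$ is the closure of $(\Gamma^1(N)\cap H_g)\varphi_g(\SL(d,\R))$), it derives it from a further application of \cite[Cor.\ B]{Ratner91b} rather than from the definition. The repair is simple and brings you back to the paper's two-line proof: use the formulation of the countability theorem in terms of an ergodically acting (dense-orbit) one-parameter unipotent subgroup, and observe that \cite[Cor.\ B]{Ratner91b} supplies such a subgroup $U$ inside $\varphi_g(\SL(d,\R))$ for every $g$.
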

\begin{proof}
This follows from \cite[Cor.\ A(2)]{Ratner91b}
(for note that by \cite[Cor.\ B]{Ratner91b}, for each $g\in G$ there is a one-parameter subgroup
$U$ of $\varphi_{g}(\SL(d,\R))$ which is unipotent in $G$ and such that
$\overline{\Gamma\backslash\Gamma U}=\Gamma\backslash\Gamma H_{g}$).
\end{proof}

\begin{cor}\label{RATNERCOUNTABLETHMCOR}
Let $g\in G$. Then $H_{hg}\subset H_g$ for all $h\in H_g$ and $H_{hg}=H_g$ for $\mu_g$-almost all $h\in H_g$.
\end{cor}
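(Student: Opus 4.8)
The plan is to first establish the inclusion $H_{hg}\subset H_g$ for every $h\in H_g$ directly from the defining properties of the Ratner subgroups, and then to use a measure-theoretic argument combined with the countability statement of Theorem \ref{RATNERCOUNTABLETHM} to upgrade this to equality for $\mu_g$-almost every $h$. For the inclusion: fix $h\in H_g$. Note that $\varphi_{hg}(\SL(d,\R))=hg\bigl(\begin{smallmatrix}\SL(d,\R)&0\\0&1_m\end{smallmatrix}\bigr)(hg)^{-1}=h\varphi_g(\SL(d,\R))h^{-1}$. Since $h\in H_g$ and $\varphi_g(\SL(d,\R))\subset H_g$ and $H_g$ is a group, we get $\varphi_{hg}(\SL(d,\R))\subset H_g$. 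Moreover $h(\Gamma\cap H_g)h^{-1}$ need not lie in $\Gamma$, so one should instead argue at the level of closures in $\GamG$: the orbit closure $\overline{\Gamma\backslash\Gamma\varphi_{hg}(\SL(d,\R))}=\overline{\Gamma\backslash\Gamma h\varphi_g(\SL(d,\R))h^{-1}}$. Since right multiplication by $h^{-1}$ is a homeomorphism of $\GamG$ and $\Gamma\backslash\Gamma\varphi_g(\SL(d,\R))h$ is contained in $\Gamma\backslash\Gamma H_g$ (as $h\in H_g$), its closure is contained in the closed set $\Gamma\backslash\Gamma H_g$; pushing forward by $h^{-1}$ we find $\overline{\Gamma\backslash\Gamma\varphi_{hg}(\SL(d,\R))}\subset\overline{\Gamma\backslash\Gamma H_g h^{-1}}=\Gamma\backslash\Gamma H_g$ (using $H_gh^{-1}=H_g$ since $h\in H_g$). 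By definition $H_{hg}$ is the smallest closed connected subgroup whose orbit closure contains $\Gamma\backslash\Gamma\varphi_{hg}(\SL(d,\R))$; since $\Gamma\backslash\Gamma H_g$ is such an orbit closure (with $\Gamma\cap H_g$ a lattice in $H_g$), minimality gives $H_{hg}\subset H_g$.

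For the almost-everywhere equality, the idea is that for fixed $g$ the map $h\mapsto H_{hg}$ takes only countably many values on $H_g$ by Theorem \ref{RATNERCOUNTABLETHM}, so $H_g=\bigsqcup_j \{h\in H_g: H_{hg}=H_j\}$ for countably many distinct subgroups $H_j\subset H_g$. For each $j$ with $H_j\subsetneq H_g$, the set $E_j:=\{h\in H_g: H_{hg}=H_j\}$ should be shown to have $\mu_g$-measure zero. Here I would use that if $h\in E_j$ then $\overline{\Gamma\backslash\Gamma\varphi_{hg}(\SL(d,\R))}=\Gamma\backslash\Gamma H_j h^{-1}\cdot h$... more precisely, unwinding: $\overline{(\Gamma\cap H_g)\backslash(\Gamma\cap H_g)\,h\varphi_g(\SL(d,\R))}$ inside $(\Gamma\cap H_g)\backslash H_g$ is a proper closed submanifold (it is a translate of an orbit of the proper subgroup $h^{-1}H_jh$, which has positive codimension in $H_g$), hence has $\mu_g$-measure zero. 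The set of $h$ for which the $\varphi_g(\SL(d,\R))$-orbit of $(\Gamma\cap H_g)h$ fails to equidistribute in $(\Gamma\cap H_g)\backslash H_g$ is exactly $\bigcup_{j:\,H_j\subsetneq H_g}E_j$; by Ratner's theorem applied inside $H_g$ (the orbit closure of any point under a unipotent one-parameter subgroup $U\subset\varphi_g(\SL(d,\R))$ is a homogeneous subset, and $\mu_g$ is $U$-ergodic since the full $\varphi_g(\SL(d,\R))$-action is — or one invokes \cite[Cor.\ A(2)]{Ratner91b} directly for the space $(\Gamma\cap H_g)\backslash H_g$), $\mu_g$-almost every $h$ has $\overline{\Gamma\backslash\Gamma h\varphi_g(\SL(d,\R))}=\Gamma\backslash\Gamma H_g$, i.e.\ $H_{hg}=H_g$.

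The main obstacle is making the last step rigorous: one must verify that Ratner's measure-classification machinery, which was invoked for the ambient space $\GamG$, applies equally to the homogeneous space $(\Gamma\cap H_g)\backslash H_g\cong\Gamma\backslash\Gamma H_g$ with its invariant measure $\mu_g$ and the unipotent flow coming from $\varphi_g(\SL(d,\R))$ — in particular that this flow is ergodic with respect to $\mu_g$, which follows because $\Gamma\backslash\Gamma H_g$ is by construction the orbit closure of $\Gamma\backslash\Gamma\varphi_g(\SL(d,\R))$, so no proper closed subgroup orbit can carry full measure. Once ergodicity and the Ratner orbit-closure dichotomy are in place for $(\Gamma\cap H_g)\backslash H_g$, combining them with the countability from Theorem \ref{RATNERCOUNTABLETHM} to conclude that the exceptional set is a countable union of null sets is routine. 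One should also double-check that the subgroups $H_{hg}$ arising as $h$ ranges over $H_g$ are automatically among the (countably many) groups $\{H_{g'}:g'\in G\}$, which is immediate since $hg\in G$.
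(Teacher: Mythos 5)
Your argument is correct and follows essentially the same route as the paper: the inclusion comes from $\varphi_{hg}(\SL(d,\R))=h\varphi_g(\SL(d,\R))h^{-1}\subset H_g$, and the almost-everywhere equality comes from observing that $H_{hg}\subsetneq H_g$ forces the orbit of $\Gamma h$ under (a unipotent one-parameter subgroup of) $\varphi_g(\SL(d,\R))$ to be non-dense in $\Gamma\backslash\Gamma H_g$, which by ergodicity of $\mu_g$ can happen only for a null set of $h$. The only caveats are that your standalone heuristic for ergodicity (``no proper closed subgroup orbit can carry full measure'') is not a proof, and the clause ``$\mu_g$ is $U$-ergodic since the full $\varphi_g(\SL(d,\R))$-action is'' has the implication backwards; the paper simply fixes a one-parameter unipotent $U\subset\varphi_g(\SL(d,\R))$ with $\overline{\Gamma\backslash\Gamma U}=\Gamma\backslash\Gamma H_g$ (Ratner, Cor.\ B) and quotes Ratner's Cor.\ A for $U$-ergodicity of $\mu_g$ --- precisely the fallback citation you offer --- while your countability decomposition via Theorem \ref{RATNERCOUNTABLETHM} is not actually needed.
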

\begin{proof}
For any $h\in H_g$ we have $H_{hg}\subset H_g$ since $\varphi_{hg}(\SL(d,\R))\subset hH_gh^{-1}=H_g$.
On the other hand, let $U$ be a one-parameter subgroup of $\varphi_{g}(\SL(d,\R))$ which is unipotent in $G$ and such that
$\overline{\Gamma\backslash\Gamma U}=\Gamma\backslash\Gamma H_{g}$
\cite[Cor.\ B]{Ratner91b}.
Then for any $h\in H_g$ we have $hUh^{-1}\subset\varphi_{hg}(\SL(d,\R))$, and so
$\Gamma hU\subset\Gamma\backslash\Gamma H_{hg}h$. Therefore if $H_{hg}\subsetneq H_g$ then
$\Gamma hU$ is not dense in $\Gamma\backslash\Gamma H_{g}$.
Since $U$ acts ergodically on $(\Gamma\backslash\Gamma H_g,\mu_g)$
\cite[Cor.\ A]{Ratner91b} this can only happen for a $\mu_g$-null set of $h\in H_g$.
\end{proof}

\begin{prop}\label{SCRAPROP}
Let $g\in G$ and set $\scrL=\Z^ng$ and $\scrA=\overline{\pi_\intl(\scrL)}$. %
Then $\overline{\pi_\intl(\Z^nhg)}\subset\scrA$ for all $h\in H_g$, and
$\overline{\pi_\intl(\Z^nhg)}=\scrA$ for $\mu_g$-almost all $h\in H_g$.
\end{prop}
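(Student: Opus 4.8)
The plan is to treat the two assertions separately, the first by a direct closure/continuity argument and the second by exploiting ergodicity of a unipotent flow on $\Gamma\backslash\Gamma H_g$, exactly in the spirit of the proof of Corollary \ref{RATNERCOUNTABLETHMCOR}. For the inclusion $\overline{\pi_\intl(\Z^n hg)}\subset\scrA$ for every $h\in H_g$: since $h\in H_g$ and $\Gamma\backslash\Gamma\varphi_g(\SL(d,\R))$ is dense in $\Gamma\backslash\Gamma H_g$, we may pick sequences $\gamma_k\in\Gamma^1$ (or $\Gamma$, allowing also the translation part) and $A_k\in\SL(d,\R)$ with $\gamma_k\varphi_g(A_k)\to h$. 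Writing $\scrL=\Z^ng$, one has $\Z^n\gamma_k\varphi_g(A_k)g = \Z^n\varphi_g(A_k)g$ because $\gamma_k\in\ASL(n,\Z)$ preserves $\Z^n$; and $\varphi_g(A_k)$ conjugates $g$ so that $\Z^n\varphi_g(A_k)g$ is the lattice obtained from $\scrL$ by acting on the physical space only, whence $\pi_\intl(\Z^n\varphi_g(A_k)g)=\pi_\intl(\scrL)$, so $\overline{\pi_\intl(\Z^n\varphi_g(A_k)g)}=\scrA$ for every $k$. Letting $k\to\infty$ and using continuity of $\pi_\intl$ together with the fact that $\scrA$ is closed, any accumulation point of $\pi_\intl(\Z^n hg)$ lies in $\scrA$, giving $\overline{\pi_\intl(\Z^nhg)}\subset\scrA$. (One must be slightly careful here: convergence $\Z^n\varphi_g(A_k)g\to\Z^n hg$ is in the sense that each fixed lattice vector converges, so one argues vector by vector, which suffices since $\pi_\intl$ is continuous and $\scrA$ is closed.)

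For the second assertion, the natural strategy is to show that the set $\scrB:=\{h\in H_g : \overline{\pi_\intl(\Z^n hg)}\subsetneq\scrA\}$ is $\mu_g$-null. Let $U$ be a one-parameter subgroup of $\varphi_g(\SL(d,\R))$ which is unipotent in $G$ and satisfies $\overline{\Gamma\backslash\Gamma U}=\Gamma\backslash\Gamma H_g$, as furnished by \cite[Cor.\ B]{Ratner91b}; by \cite[Cor.\ A]{Ratner91b}, $U$ acts ergodically on $(\Gamma\backslash\Gamma H_g,\mu_g)$. The key point to establish is that $\scrB$ (or rather its image in $\Gamma\backslash\Gamma H_g$, which is well-defined since $\overline{\pi_\intl(\Z^n\gamma hg)}=\overline{\pi_\intl(\Z^n hg)}$ for $\gamma\in\Gamma$) is $U$-invariant: if $h\in H_g$, then for $u\in U\subset\varphi_g(\SL(d,\R))$ we have as above $\Z^n (uh)g=\Z^n u h g$, and since $u$ acts only on the physical coordinates, $\pi_\intl(\Z^n u h g)=\pi_\intl(\Z^n h g)$, so $\overline{\pi_\intl(\Z^n(uh)g)}=\overline{\pi_\intl(\Z^n hg)}$; in particular $h\in\scrB\iff uh\in\scrB$. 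By ergodicity, $\scrB$ has measure $0$ or full measure. To rule out full measure it suffices to exhibit one $h$ with $\overline{\pi_\intl(\Z^n hg)}=\scrA$, and $h=\id$ (or any $h\in\varphi_g(\SL(d,\R))$) works by the computation in the previous paragraph. Hence $\mu_g(\scrB)=0$.

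The main obstacle — though it is more a matter of care than of difficulty — is verifying that $\scrB$ is genuinely $U$-invariant and measurable, so that ergodicity applies. Invariance hinges on the identity $\pi_\intl(\Z^n u h g)=\pi_\intl(\Z^n h g)$ for $u\in U$, which in turn rests on the structural fact that $U\subset\varphi_g(\SL(d,\R))$ acts trivially on the internal space after the conjugation by $g$ is unwound; making this precise requires unpacking the definition of $\varphi_g$ and noting that $\Z^n\bigl(g\smatr{A}{0}{0}{1_m}g^{-1}\bigr)g=\Z^n g\smatr{A}{0}{0}{1_m}$, whose internal projection is $\pi_\intl(\Z^n g)$ acted on only via the upper-left block, hence unchanged as a subset of $\R^m$. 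For measurability, one notes that the condition $\overline{\pi_\intl(\Z^n hg)}=\scrA$ can be phrased as: for every $\vecw$ in a fixed countable dense subset of $\scrA$ and every $\varepsilon>0$, there exists $\vecell\in\Z^n$ with $\|\pi_\intl(\vecell hg)-\vecw\|<\varepsilon$; this is a countable intersection of countable unions of open conditions on $h$, hence Borel. With invariance and measurability in hand, ergodicity finishes the proof.
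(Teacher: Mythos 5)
Your proof of the first claim (the inclusion $\overline{\pi_\intl(\Z^nhg)}\subset\scrA$ for every $h\in H_g$) is correct and is essentially the paper's own argument: approximate $h$ by $\gamma_k\varphi_g(A_k)$, use $\varphi_g(A_k)g=g\varphi_1(A_k)$, and argue vector by vector. The problems are in the almost-everywhere statement, where you switch to a direct ergodicity argument. First, the invariance you verify is on the wrong side. Ergodicity of the unipotent subgroup $U\subset\varphi_g(\SL(d,\R))$ on $(\Gamma\backslash\Gamma H_g,\mu_g)$ concerns the \emph{right} translation action $\Gamma h\mapsto\Gamma hu$ (left multiplication by $u$ does not even descend to the quotient), so what you must check is $h\in\scrB\Leftrightarrow hu\in\scrB$. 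Your computation, however, is for $uh$: the identity $\Z^n\bigl(g\smatr{A}{0}{0}{1_m}g^{-1}\bigr)g=\Z^ng\smatr{A}{0}{0}{1_m}$ only treats the case $h=1$, while for general $h$ one has $\Z^nuhg=\Z^ng\varphi_1(A)g^{-1}hg$, where nothing cancels; the asserted identity $\pi_\intl(\Z^nuhg)=\pi_\intl(\Z^nhg)$ is unjustified, and false in general. On the correct side the computation is immediate: $(hu)g=h\varphi_g(A)g=hg\varphi_1(A)$, hence $\pi_\intl(\Z^n(hu)g)=\pi_\intl(\Z^nhg)$.

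The more serious gap is the final step. Even with measurability and (right-)invariance in hand, ergodicity only yields $\mu_g(\scrB)\in\{0,1\}$, and exhibiting the single element $h=$ identity (or even a dense family, such as all $h=\gamma\varphi_g(A)\in H_g$) outside $\scrB$ does \emph{not} rule out $\mu_g(\scrB)=1$: a conull invariant set need not contain every point, so a single good $h$ says nothing about measure. As written, the a.e.\ assertion is therefore not established. The paper closes this by a reversal trick: by Corollary \ref{RATNERCOUNTABLETHMCOR} (which is exactly where ergodicity of $U$ and the countability of the family of Ratner subgroups enter), $H_{hg}=H_g$ for $\mu_g$-almost all $h$; for such $h$ one has $h^{-1}\in H_{hg}$, and applying the first claim to the pair $(hg,h^{-1})$ gives $\scrA=\overline{\pi_\intl(\Z^ng)}\subset\overline{\pi_\intl(\Z^nhg)}$, hence equality. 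If you prefer to stay within your ergodic set-up, you could instead use that $\mu_g$-a.e.\ $\Gamma h$ has dense $U$-orbit, choose $\gamma_k\in\Gamma$ and $A_k\in\SL(d,\R)$ with $\gamma_kh\varphi_g(A_k)$ tending to the identity, and observe that $\pi_\intl(\vecm\gamma_kh\varphi_g(A_k)g)\in\pi_\intl(\Z^nhg)$ for every $\vecm\in\Z^n$ while it converges to $\pi_\intl(\vecm g)$; this produces the missing inclusion $\scrA\subset\overline{\pi_\intl(\Z^nhg)}$ for a.e.\ $h$, but it is an argument your write-up does not contain.
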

\begin{proof}
For the first claim, since $H_g$ lies in the closure of $\Gamma\varphi_g(\SL(d,\RR))$ in $G$,
it suffices to prove that if $\{\gamma_k\}\subset\Gamma$ and $\{A_k\}\subset\SL(d,\R)$ are any sequences such that
$h=\lim_k\gamma_k\varphi_g(A_k)$ exists then $\pi_\intl(\Z^nhg)\subset\overline{\pi_\intl(\Z^ng)}$.
Thus fix a vector $\vecm\in\Z^n$.
Now $\gamma_kg\varphi_1(A_k)\to hg$ and thus $\pi_\intl(\vecm\gamma_kg\varphi_1(A_k))\to\pi_\intl(\vecm hg)$
as $k\to\infty$,
and here $\pi_\intl(\vecm \gamma_kg\varphi_1(A_k))=\pi_\intl(\vecm\gamma_kg)\in\pi_\intl(\Z^ng)$ for each $k$;
hence $\pi_\intl(\vecm hg)\in\overline{\pi_\intl(\Z^ng)}$, and the claim is proved.

Replacing $\langle g,h\rangle$ by $\langle hg,h^{-1}\rangle$ in the statement just proved we conclude that 
if $h^{-1}\in H_{hg}$ then $\scrA\subset\overline{\pi_\intl(\Z^nhg)}$.
In particular if $h\in H_g$ satisfies $H_{hg}=H_g$ then $\scrA=\overline{\pi_\intl(\Z^nhg)}$.
This holds for $\mu_g$-almost all $h\in H_g$, by Corollary \ref{RATNERCOUNTABLETHMCOR}.
\end{proof}

\begin{prop}\label{ZEROSETINVPROP}
Given any affine lattice $\scrL\subset\R^n$ let us write $\scrL_0:=\scrL-\scrL$ for the lattice of which it is
a translate.
Let $g\in G$. Then
$(\Z^ng)_0\cap(\{\bn\}\times\R^m)$ is a subset of $(\Z^nhg)_0\cap(\{\bn\}\times\R^m)$ for all $h\in H_g$,
and for $\mu_g$-almost all $h\in H_g$ these two sets are equal.
\end{prop}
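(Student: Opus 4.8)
The plan is to imitate the proof of Proposition~\ref{SCRAPROP} almost verbatim, with the internal‑space projection $\pi_\intl$ replaced by the physical projection $\pi$. For $g'\in G$ write $g'=(M_{g'},\vecxi_{g'})$ with $M_{g'}\in\SL(n,\R)$, so that $(\Z^ng')_0=\Z^nM_{g'}$, and set $\scrS(g'):=(\Z^ng')_0\cap(\{\bn\}\times\R^m)$; the claim is then that $\scrS(g)\subseteq\scrS(hg)$ for all $h\in H_g$, with equality for $\mu_g$‑almost all $h$.

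First I would prove the inclusion. Fix $h\in H_g$. Since $h$ lies in the closure of $\Gamma\varphi_g(\SL(d,\R))$ in $G$, write $h=\lim_k\gamma_k\varphi_g(A_k)$ with $\gamma_k=(M_{\gamma_k},\vecxi_{\gamma_k})\in\Gamma$ and $A_k\in\SL(d,\R)$. Then $hg=\lim_k\gamma_k\varphi_g(A_k)g=\lim_k\gamma_kg\varphi_1(A_k)$, and taking linear parts $M_{\gamma_k}M_g\diag(A_k,1_m)\to M_{hg}$. Now fix $\vecv\in\scrS(g)$, say $\vecv=\veck M_g$ with $\veck\in\Z^n$; since $\pi(\vecv)=\bn$ we have $\vecv\diag(A,1_m)=\vecv$ for every $A\in\SL(d,\R)$. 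Because $M_{\gamma_k}\in\SL(n,\Z)$ has $M_{\gamma_k}^{-1}\in\SL(n,\Z)$, the vector $\vecell_k:=\veck M_{\gamma_k}^{-1}$ lies in $\Z^n$, and $\vecell_k\bigl(M_{\gamma_k}M_g\diag(A_k,1_m)\bigr)=\veck M_g\diag(A_k,1_m)=\vecv\diag(A_k,1_m)=\vecv$. Hence $\vecell_k=\vecv\bigl(M_{\gamma_k}M_g\diag(A_k,1_m)\bigr)^{-1}\to\vecv M_{hg}^{-1}$; being a convergent sequence of integer vectors, $\{\vecell_k\}$ is eventually equal to a fixed $\vecell\in\Z^n$, so $\vecv=\vecell M_{hg}\in(\Z^nhg)_0$. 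As $\pi(\vecv)=\bn$ this gives $\vecv\in\scrS(hg)$, proving $\scrS(g)\subseteq\scrS(hg)$. The very same argument, carried out with an arbitrary base point $g_0\in G$ in place of $g$, shows $\scrS(g_0)\subseteq\scrS(h_0g_0)$ whenever $h_0\in H_{g_0}$.

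For the equality almost everywhere, I would invoke Corollary~\ref{RATNERCOUNTABLETHMCOR}: $H_{hg}=H_g$ for $\mu_g$‑almost all $h\in H_g$. For such $h$ we have $h^{-1}\in H_g=H_{hg}$, so applying the inclusion just established with base point $hg$ and group element $h^{-1}\in H_{hg}$ gives $\scrS(hg)\subseteq\scrS(h^{-1}(hg))=\scrS(g)$. Together with $\scrS(g)\subseteq\scrS(hg)$ this yields $\scrS(g)=\scrS(hg)$, which is the assertion.

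The only step requiring genuine care is the interchange of limits in the inclusion, i.e.\ passing from ``$\vecv\in(\Z^ng_k)_0$ for every $k$'' to ``$\vecv\in(\Z^nhg)_0$''; this is exactly what the convergence -- hence eventual constancy -- of the integer coefficient vectors $\vecell_k$ provides, using that $M_{hg}$ is invertible. Everything else is routine bookkeeping parallel to Proposition~\ref{SCRAPROP}, and the almost‑everywhere statement is a direct consequence of the countability result in Corollary~\ref{RATNERCOUNTABLETHMCOR}.
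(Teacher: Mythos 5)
Your argument is correct and coincides with the paper's: the inclusion is obtained exactly as in the paper by approximating $h$ by $\gamma_k\varphi_g(A_k)$, using that vectors with vanishing physical component are fixed by $\varphi_1(A_k)$, and passing to the limit via eventual constancy of the integer coefficient vectors; the almost-everywhere equality via Corollary~\ref{RATNERCOUNTABLETHMCOR} (swapping $\langle g,h\rangle$ for $\langle hg,h^{-1}\rangle$) is precisely the argument the paper borrows from the proof of Proposition~\ref{SCRAPROP}. No gaps.
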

\begin{proof}
Assume $g=(M_g,\vecv_g)$.
Take $h=(M_h,\vecv_h)\in H_g$, and choose sequences $\{(\gamma_k,\vecm_k)\}\subset\Gamma$ and $\{A_k\}\subset\SL(d,\R)$
such that $h=\lim_k(\gamma_k,\vecm_k)\varphi_g(A_k)$;
then note that $M_h=\lim_k\gamma_k\varphi_{M_g}(A_k)$ in $G^1$.
Note also $(\Z^ng)_0=\Z^nM_g$;
consider any fixed $\vecm\in\Z^n$ such that $\vecm M_g\in\{\bn\}\times\R^m$. Then
$\vecm M_g\varphi_1(A_k^{-1})=\vecm M_g$ and therefore
$\vecm(\gamma_k\varphi_{M_g}(A_k))^{-1}=\vecm\gamma_k^{-1}\in\Z^n$.
Taking $k\to\infty$ we conclude that $\vecm M_h^{-1}=\vecm\gamma_k^{-1}\in\Z^n$ for all sufficiently large $k$, and so
$\vecm\in\Z^nM_h$ and $\vecm M_g\in\Z^n M_hM_g=(\Z^nhg)_0$.
We have thus proved that $(\Z^ng)_0\cap(\{\bn\}\times\R^m)$ is a subset of $(\Z^nhg)_0\cap(\{\bn\}\times\R^m)$.
Finally in the same way as in the proof of Proposition \ref{SCRAPROP} we conclude that we have equality 
for almost all $h\in H_g$.
\end{proof}

\begin{prop}\label{INJECTIVITYPRESERVEDPROP}
Let $g\in G$, set $\scrL=\Z^ng$ and $\scrA=\overline{\pi_\intl(\scrL)}$, and let $\scrW\neq\emptyset$ be any 
open subset of $\scrA$ such that the map
$\pi_\scrW: \{ \vecy\in\scrL\col\pi_\intl(\vecy)\in\scrW \}\to \scrP(\scrW,\scrL)$ is bijective.
Then for almost all $h\in H_g$ the corresponding map from
$\{\vecy\in\Z^nhg\col \pi_\intl(\vecy)\in\scrW\}$ to $\scrP(\scrW,\Z^nhg)$ is bijective.
\end{prop}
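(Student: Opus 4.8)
The plan is to reduce the statement about $\Z^n h g$ to the statement for $g$ via the same "countability + ergodicity" mechanism used in Corollary \ref{RATNERCOUNTABLETHMCOR} and Propositions \ref{SCRAPROP}, \ref{ZEROSETINVPROP}. The first observation is that the map $\pi_\scrW$ for a lattice $\scrL_h := \Z^n h g$ fails to be injective precisely when there exist two distinct $\vecy,\vecy'\in\scrL_h$ with $\pi_\intl(\vecy),\pi_\intl(\vecy')\in\scrW$ and $\pi(\vecy)=\pi(\vecy')$; equivalently, when there is a nonzero $\vecz\in(\scrL_h)_0\cap(\{\bn\}\times\R^m)$ together with a point $\vecy\in\scrL_h$ such that both $\pi_\intl(\vecy)$ and $\pi_\intl(\vecy)+\pi_\intl(\vecz)$ lie in $\scrW$. (Surjectivity of $\pi_\scrW$ onto $\scrP(\scrW,\scrL_h)$ is automatic from the definition of $\scrP$, so only injectivity is at issue.) Thus "$\pi_\scrW$ bijective for $\scrL_h$" depends on $h$ only through the pair consisting of the zero-set $(\scrL_h)_0\cap(\{\bn\}\times\R^m)$ and enough of $\scrL_h$ itself to witness the window overlap.

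Next I would invoke Proposition \ref{ZEROSETINVPROP}: for $\mu_g$-almost all $h\in H_g$ one has $(\Z^nhg)_0\cap(\{\bn\}\times\R^m)=(\Z^ng)_0\cap(\{\bn\}\times\R^m)=:Z$, a fixed finite-or-countable set, and by the injectivity hypothesis for $g$ the set $Z$ must itself be such that no overlap occurs for $\scrL=\Z^n g$ — but I need more: I need that for almost every $h$ the \emph{full} lattice $\Z^n h g$, not just its zero-set, avoids the bad configuration. Since $Z$ is independent of $h$ on a conull set, it suffices to show that for each fixed nonzero $\vecz\in Z$ the set of $h\in H_g$ for which there exists $\vecy\in\Z^n h g$ with $\pi_\intl(\vecy)\in\scrW$ and $\pi_\intl(\vecy)+\pi_\intl(\vecz)\in\scrW$ has $\mu_g$-measure zero; then intersect over the countably many $\vecz\in Z$. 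For this I would again use the first-claim type argument of Proposition \ref{SCRAPROP}: any $h\in H_g$ is a limit $h=\lim_k\gamma_k\varphi_g(A_k)$ with $\gamma_k\in\Gamma$, $A_k\in\SL(d,\R)$, and if $\vecm\in\Z^n$ satisfies $\vecm M_g\in\{\bn\}\times\R^m$ (so $\vecm M_g$ is a zero-vector for $\scrL=\Z^n g$), then by the argument in Proposition \ref{ZEROSETINVPROP} one has $\vecm\in\Z^n M_h$ eventually, so that $\vecm h g$ is a zero-vector for $\Z^n h g$ and equals $\vecm\gamma_k^{-1}g$ for large $k$; running the argument in reverse (replacing $\langle g,h\rangle$ by $\langle hg,h^{-1}\rangle$ as before) shows the two window-overlap conditions — for $\scrL$ at $(\gamma_k,\vecm_k)$-pulled-back points, and for $\scrL_h$ — match on a conull set of $h$, and hence the nonexistence of overlap for $g$ forces nonexistence for $\Z^n h g$.

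The step I expect to be the main obstacle is making the "pull-back of the witness" precise: given $h=\lim_k\gamma_k\varphi_g(A_k)$ and a hypothetical witness $\vecy\in\Z^n h g$ of a window overlap, one wants to produce a corresponding witness in $\Z^n g$, but $\pi_\intl(\vecy\varphi_1(A_k^{-1}))$ need not lie in $\scrW$ for finite $k$ even though $\pi_\intl(\vecy)$ does — the $A_k$ act nontrivially on the internal coordinates in general (they act trivially only on those $\scrL$-vectors lying in $\{\bn\}\times\R^m$). The resolution is that one does not need a witness for every $h$: it is enough to show the bad set has measure zero, and for that one can restrict attention, as in the cited propositions, to those $h$ with $H_{hg}=H_g$ (a conull set by Corollary \ref{RATNERCOUNTABLETHMCOR}), for which $h^{-1}\in H_{hg}$ and the roles of $g$ and $hg$ are symmetric; combined with Proposition \ref{ZEROSETINVPROP} giving the exact match of zero-sets, the window conditions for $\scrL_h$ and $\scrL$ then genuinely correspond, and the injectivity hypothesis for $\scrL$ transfers. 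Once this symmetry is set up cleanly the remainder is a routine countable intersection argument.
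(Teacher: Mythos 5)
There is a genuine gap, and it sits exactly at the step you flag as the "main obstacle". Your skeleton (surjectivity is automatic; non-injectivity is witnessed by a nonzero $\vecz\in(\Z^nhg)_0\cap(\{\bn\}\times\R^m)$ together with a lattice point whose internal projection lies in both $\scrW$ and $\scrW-\pi_\intl(\vecz)$; the zero-sets agree for a.e.\ $h$ by Proposition \ref{ZEROSETINVPROP}) matches the paper. But your transfer of the injectivity hypothesis from $\scrL$ to $\Z^nhg$ is not established. The intermediate goal you set --- showing that for each fixed nonzero $\vecz\in Z$ the set of $h$ admitting a witness $\vecy\in\Z^nhg$ has $\mu_g$-measure zero --- cannot be reached by the "symmetry" argument you sketch: the fact that $H_{hg}=H_g$ and $h^{-1}\in H_{hg}$ for a.e.\ $h$ says nothing about whether the two \emph{distinct} countable sets $\pi_\intl(\scrL)$ and $\pi_\intl(\Z^nhg)$ meet $\scrW\cap(\scrW-\pi_\intl(\vecz))$; the assertion that "the window conditions genuinely correspond" is exactly what needs proof. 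The missing idea, which is the heart of the paper's argument, is that the hypothesis for $\scrL$ gives something much stronger than "no witness in $\scrL$": since $\pi_\intl(\scrL)$ is dense in $\scrA$ and $\scrW$ is \emph{open}, injectivity of $\pi_\scrW$ forces $(\scrW-\scrW)\cap\pi_\intl\bigl(\scrL_0\cap(\{\bn\}\times\R^m)\bigr)=\{\bn\}$ (if some nonzero $\pi_\intl(\vecz)$ lay in $\scrW-\scrW$, density plus openness would produce $\vecell_1\in\scrL$ with $\pi_\intl(\vecell_1),\pi_\intl(\vecell_1-\vecz)\in\scrW$, contradicting injectivity). This criterion depends only on $\scrW$ and the zero-set, so once Propositions \ref{SCRAPROP} and \ref{ZEROSETINVPROP} give $\overline{\pi_\intl(\Z^nhg)}=\scrA$ and equality of zero-sets for a.e.\ $h$, injectivity for $\Z^nhg$ is immediate: a witness would put $\pi_\intl(\vecz)\in\scrW-\scrW$ for a nonzero $\vecz$ in the common zero-set. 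In particular the "bad sets" you propose to estimate are empty, not merely null, and no further measure or ergodicity argument is needed. Note that your proposal never uses the openness of $\scrW$, which is an explicit hypothesis and is indispensable here.

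Relatedly, your reason for abandoning the direct pull-back of a witness is based on a false premise: for $A\in\SL(d,\R)$ the element $\varphi_1(A)=\smatr A00{1_m}$ fixes the internal coordinates of \emph{every} row vector, not only of vectors in $\{\bn\}\times\R^m$. The only genuine issue is that $h$ is a limit of $\gamma_k\varphi_g(A_k)$ rather than equal to one of them, and openness of $\scrW$ disposes of it: if $\vecy=\vecm hg$ is a witness for $\Z^nhg$ with difference vector $\vecz$ in the common zero-set, then $\pi_\intl(\vecm\gamma_k g)=\pi_\intl(\vecm\gamma_k g\varphi_1(A_k))\to\pi_\intl(\vecy)$, so for large $k$ both $\pi_\intl(\vecm\gamma_kg)$ and $\pi_\intl(\vecm\gamma_kg+\vecz)=\pi_\intl(\vecm\gamma_kg)+\pi_\intl(\vecz)$ lie in $\scrW$, and $\vecm\gamma_kg,\ \vecm\gamma_kg+\vecz$ are distinct points of $\scrL$ with equal physical projection --- contradicting the hypothesis. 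Either this pull-back route or the window-difference criterion above closes the gap; as written, your proposal contains neither.
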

\begin{proof}
We first claim that for any affine lattice $\scrL'\subset\R^n$ with $\overline{\pi_\intl(\scrL')}=\scrA$, 
the restriction $\pi_{\scrW,\scrL'}$ of $\pi$ to
$\{\vecy\in\scrL'\col\pi_\intl(\vecy)\in\scrW\}$ is injective (or in other words, bijective as a map to $\scrP(\scrW,\scrL')$)
if and only if 
\begin{align}\label{INJECTIVITYPRESERVEDPROPPF1}
\scrW_0\cap\pi_\intl\bigl(\scrL_0'\cap(\{\bn\}\times\R^m)\bigr)=\{\bn\},
\end{align}
where $\scrW_0=\scrW-\scrW\subset\R^m$ and $\scrL'_0=\scrL'-\scrL'\subset\R^n$.
Indeed, if $\pi_{\scrW,\scrL'}$ is not injective then there are $\vecell_1\neq\vecell_2\in\scrL'$ satisfying
$\pi_\intl(\vecell_1),\pi_\intl(\vecell_2)\in\scrW$ and
$\pi(\vecell_1)=\pi(\vecell_2)$, and this implies
$\pi_\intl(\vecell_1)\neq\pi_\intl(\vecell_2)\in\scrW$ and 
$\pi_\intl(\vecell_1)-\pi_\intl(\vecell_2)=\pi_\intl(\vecell_1-\vecell_2)\in\pi_\intl(\scrL_0'\cap(\{\bn\}\times\R^m))$,
so that \eqref{INJECTIVITYPRESERVEDPROPPF1} fails.
Conversely, assume that \eqref{INJECTIVITYPRESERVEDPROPPF1} fails.
Then there are some $\vecw_1\neq\vecw_2\in\scrW$ and some 
$\vecell\in\scrL'_0$ such that $\pi(\vecell)=\bn$ and
$\vecw_1-\vecw_2=\pi_\intl(\vecell)$ (thus also $\vecell\neq\bn$).
Now since $\overline{\pi_\intl(\scrL')}=\scrA$, for any $\ve>0$ 
we can find some $\vecell_1\in\scrL'$ such that $\|\pi_\intl(\vecell_1)-\vecw_1\|<\ve$;
therefore, since $\scrW$ is open in $\scrA$, we can find $\vecell_1\in\scrL'$ such that
both $\pi_\intl(\vecell_1),\pi_\intl(\vecell_1-\vecell)\in\scrW$,
and now $\pi(\vecell_1)=\pi(\vecell_1-\vecell)$, i.e.\ $\pi_{\scrW,\scrL'}$ is not injective.
This completes the proof of the claim.

Using the claim and our assumptions, we have %
$\scrW_0\cap\pi_\intl\bigl(\scrL_0\cap(\{\bn\}\times\R^m)\bigr)=\{\bn\}$.
Furthermore by Propositions \ref{SCRAPROP} and \ref{ZEROSETINVPROP} we know that
$\overline{\pi_\intl(\Z^nhg)}=\scrA$ and $(\Z^nhg)_0\cap(\{\bn\}\times\R^m)=\scrL_0\cap(\{\bn\}\times\R^m)$
for almost all $h\in H_g$; and using our claim again we conclude that
$\pi_{\scrW,\Z^nhg}$ is injective for all these $h$.
\end{proof}

\section{Dynamics on the space of lattices}
\label{DYNAMICSsec}

For $\vecx\in\R^{d-1}$ and $t>0$ we write $n(\vecx)$ and $\Phi^t$ for the following elements in $\SL(d,\R)$:
\begin{align}
n(\vecx)=\matr1{\vecx}{\trans\bn}{1_{d-1}},
\qquad
\Phi^t=\matr{e^{-(d-1)t}}{\bn}{\trans\bn}{e^t1_{d-1}}.
\end{align}
For any topological space $X$ we denote by $\C_b(X)$ the space of bounded continuous functions $f:X\to\R$.

\begin{thm}\label{equi1}
Fix $g\in G$ and set $X=\Gamma\backslash\Gamma H_g$.
Let $f\in \C_b(\S_1^{d-1}\times X)$ and let $\lambda$ be a Borel probability measure on $\S_1^{d-1}$ 
which is absolutely continuous with respect to Lebesgue measure.
Then
\begin{equation}\label{equi1eq}
\lim_{t\to\infty} \int_{\S_1^{d-1}} f\Bigl(\vecv,\varphi_g(K(\vecv)\Phi^t )\Bigr) \,d\lambda(\vecv)
= \int_{\S_1^{d-1}\times X} f(\vecv,p)\,d\lambda(\vecv) \, d\mu_{H_g}(p) .
\end{equation}
\end{thm}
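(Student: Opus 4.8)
The plan is to reduce the statement to a known equidistribution theorem for translated unipotent orbits on the homogeneous space $\Gamma\backslash\Gamma H_g$, obtained as a consequence of Ratner's measure classification (in the form already invoked in Section \ref{secQuasi} and Theorem \ref{RATNERCOUNTABLETHM}). First I would record the structural fact, guaranteed by the definition of $H_g$, that $\varphi_g(\SL(d,\R))\subset H_g$ and that $\mu_{H_g}$ is the unique right-$H_g$-invariant probability measure on $X=\Gamma\backslash\Gamma H_g$, with the $\SL(d,\R)$-action through $\varphi_g$ acting ergodically via the unipotent one-parameter subgroup furnished by \cite[Cor.\ B]{Ratner91b}. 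The curve $t\mapsto\varphi_g(K(\vecv)\Phi^t)$ is, for each fixed $\vecv$, a translate of the orbit of the expanding diagonal flow $\Phi^t$; the key point is that $\Phi^t$ expands the horospherical subgroup $n(\vecx)$, $\vecx\in\R^{d-1}$, and $\lambda$ absolutely continuous means that, after pushing forward under $K$, the measure on $\SO(d)$ is absolutely continuous, so the family $\{\varphi_g(K(\vecv)\Phi^t)\}$ ``spreads out'' along an expanding horospherical piece as $t\to\infty$.

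Concretely, the main steps are: (1) use a smooth partition of unity and a change of variables to replace the integral over $\S_1^{d-1}$ by an integral over a small horospherical neighbourhood, writing $K(\vecv)=k_0\,n(\vecx)\,a$ locally (with $a$ in the stabiliser of $\vece_1$-direction, i.e.\ in $\{1\}\times\SO(d-1)$ up to lower-triangular pieces), so that the relevant curve becomes $\varphi_g(k_0)\,\varphi_g(n(\vecx)\Phi^t a)$ and $\Phi^t a (\Phi^t)^{-1}$ stays bounded; (2) absorb the bounded factor and use uniform continuity of $f$ to reduce to showing $\int f(\vecv, p_0\,\varphi_g(n(\vecx)\Phi^t))\,\rho(\vecx)\,d\vecx \to \int\int f\,d\lambda\,d\mu_{H_g}$ for a smooth compactly supported density $\rho$ and a fixed base point $p_0\in X$; (3) invoke the equidistribution of expanded translates of horospherical orbits in $\Gamma\backslash\Gamma H_g$ — this is exactly the Ratner-type statement: since $n(\vecx)$ and $\Phi^t$ generate (together with their conjugates) enough unipotent dynamics inside $\varphi_g(\SL(d,\R))$, and since $\overline{\Gamma\backslash\Gamma\varphi_g(\SL(d,\R))}=\Gamma\backslash\Gamma H_g$, the translated horospherical pieces equidistribute with respect to $\mu_{H_g}$, uniformly for $p_0$ in compact sets; (4) handle the $\vecv$-dependence of $f$ by a further partition of unity in $\vecv$ together with the continuity of $f$, so that on each small piece $f(\vecv,\cdot)$ is within $\varepsilon$ of $f(\vecv_0,\cdot)$, apply step (3) with $f(\vecv_0,\cdot)$, and recombine.

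I expect the main obstacle to be step (3): making precise and citing (or proving) the required equidistribution of expanding horospherical translates on the possibly-proper homogeneous subspace $\Gamma\backslash\Gamma H_g$, with the necessary uniformity. On the full space $\SLdSL$ (or $\Gamma\backslash G$) this is classical (Sarnak, Eskin--McMullen, or the mixing argument of Margulis), but here $H_g$ is only known through Ratner's theorem, so one must either (a) run the Dani--Margulis linearisation / non-divergence machinery together with Ratner's measure classification directly on $H_g$, checking that no intermediate closed subgroup can capture a positive proportion of the mass — which is where Theorem \ref{RATNERCOUNTABLETHM} (countability of the $H_g$'s) and Corollary \ref{RATNERCOUNTABLETHMCOR} enter to rule out concentration — or (b) reduce to a statement about a fixed unipotent flow and quote an off-the-shelf theorem on equidistribution of unipotent translates. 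The absolute continuity of $\lambda$ is essential precisely to guarantee that the limiting measure is not supported on a lower-dimensional invariant subset; a secondary technical point is the lack of compactness of $\S_1^{d-1}$ in the chart where $K$ fails to be smooth, which is dealt with by cutting out a small neighbourhood of the bad point and using that it carries little $\lambda$-mass. The remaining steps are routine approximation arguments.
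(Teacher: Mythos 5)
Your plan coincides with the paper's proof: the paper performs exactly your steps (1)--(2) by proving analogues of \cite[Thm.\ 5.3, Cor.\ 5.4]{partI} adapted to $X=\Gamma\backslash\Gamma H_g$ (change of variables from $\S_1^{d-1}$ to horospherical coordinates, with the bounded $t$-dependent correction factor absorbed via test functions $f_t\to f$), and the off-the-shelf equidistribution you hope for in option (b) of your step (3) is precisely Shah's theorem \cite[Thm.\ 1.4]{Shah96}, which already lives on $\Gamma\backslash\Gamma H_g$ and yields $\mu_{H_g}$ as the limit, while the $\vecv$-dependence of $f$ and the one singular point of $K$ are handled by the same approximation devices you describe. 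So the proposal is correct and essentially the same argument as the paper's.
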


We will prove Theorem \ref{equi1} by extending the methods from \cite[Sec.\ 5.1-2]{partI} to the present case.
As a first step we prove the following generalization of \cite[Thm.\ 5.3]{partI}:
\begin{thm}\label{PARTITHM5P3GEN}
Fix $g\in G$ and set $X=\Gamma\backslash\Gamma H_g$.
Let $\lambda$ be a Borel probability measure on $\R^{d-1}$ which is absolutely continuous with respect to Lebesgue
measure. Let $f\in \C_b(\R^{d-1}\times X)$;
let $R$ be a subset of $\R$ having $+\infty$ as a limit point, and let $\{f_t\}_{t\in R}$
be a family of functions in $\C_b(\R^{d-1}\times X)$
which are uniformly bounded (i.e.\ $|f_t|<K$ for some absolute constant $K$) and satisfy
$f_t\to f$ as $t\to\infty$, uniformly on compacta.
Then for any $E_0\in\SL(d,\R)$ we have
\begin{align}\label{PARTITHM5P3GENRES}
\lim_{t\to\infty}\int_{\R^{d-1}} f_t\Bigl(\vecx,\varphi_g\bigl(E_0n(\vecx)\Phi^t\bigr)\Bigr)\,d\lambda(\vecx)
=\int_{\R^{d-1}\times X} f(\vecx,p)\,d\lambda(\vecx)\,d\mu_{H_g}(p).
\end{align}
\end{thm}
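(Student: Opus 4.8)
The plan is to follow the proof of \cite[Thm.~5.3]{partI} closely, the only genuinely new ingredient being that, since $H_g$ is by construction the smallest closed connected subgroup with $\overline{\Gamma\backslash\Gamma\varphi_g(\SL(d,\R))}=\Gamma\backslash\Gamma H_g$, Ratner's measure classification applied on the finite-volume space $X=\Gamma\backslash\Gamma H_g=(\Gamma\cap H_g)\backslash H_g$ leaves no room for the limiting distribution to be anything other than $\mu_{H_g}$.

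\emph{Step 1 (reductions).} Since continuous compactly supported densities are dense in $\L^1(\lambda)$ and the family $\{f_t\}$ is uniformly bounded, we may assume that $\lambda$ has a continuous density supported in a fixed ball $B\subset\R^{d-1}$. To replace $f_t$ by $f$ it suffices to establish non-escape of mass: for every $\ve>0$ there are a compact set $\scrC\subset X$ and a $t_0$ such that
\[
\lambda\bigl(\bigl\{\vecx\in B\col\varphi_g\bigl(E_0n(\vecx)\Phi^t\bigr)\notin\scrC\bigr\}\bigr)<\ve\qquad\text{for all }t\geq t_0.
\]
Granting this, $f_t\to f$ uniformly on the compact set $\overline B\times\scrC$ while $|f_t|,|f|\leq K$, so the two integrals in \eqref{PARTITHM5P3GENRES} differ by $O(\ve)$. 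The non-escape statement follows from the Dani--Margulis uniform non-divergence estimates for unipotent flows on the finite-volume space $X$: using $n(\vecx)\Phi^t=\Phi^tn(e^{dt}\vecx)$ one sees that the point in question is $x_0\,\varphi_g(\Phi^t)\,\varphi_g(n(e^{dt}\vecx))$ with $x_0=(\Gamma\cap H_g)\varphi_g(E_0)$ fixed, i.e.\ one spreads $x_0\varphi_g(\Phi^t)$ along an increasingly long piece of a $\varphi_g(N)$-orbit, with $N=\{n(\vecy)\col\vecy\in\R^{d-1}\}$, and the $(C,\alpha)$-good property of the relevant coordinates gives the bound uniformly in $t$. (Alternatively one restricts the corresponding estimates of \cite{partI} to the closed orbit $\Gamma\backslash\Gamma H_g$.)

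\emph{Step 2 (identifying the limit).} Fix $f\in\C_b(\R^{d-1}\times X)$ and let $\nu_t$ be the Borel probability measure on $\R^{d-1}\times X$ defined by $\nu_t(\phi)=\int_{\R^{d-1}}\phi\bigl(\vecx,\varphi_g(E_0n(\vecx)\Phi^t)\bigr)\,d\lambda(\vecx)$. By Step~1 the family $\{\nu_t\}$ is tight, so it is enough to show that every weak-$*$ limit $\nu$ along a sequence $t_k\to\infty$ equals $\lambda\times\mu_{H_g}$. The first marginal of $\nu$ is $\lambda$, so $\nu=\int\nu_\vecx\,d\lambda(\vecx)$ for Borel probability measures $\nu_\vecx$ on $X$. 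Using the identities $n(\vecx)\Phi^tn(\vecy)=n(\vecx+e^{-dt}\vecy)\Phi^t$ and $\Phi^{t+s}=\Phi^t\Phi^s$, together with continuity of translation in $\L^1$ (applied to the density of $\lambda$), one checks that $\nu$ is invariant under $(\vecx,p)\mapsto(\vecx,p\,\varphi_g(n(\vecy)))$ for every $\vecy\in\R^{d-1}$ and under $(\vecx,p)\mapsto(\vecx,p\,\varphi_g(\Phi^s))$ for every $s\in\R$; hence for $\lambda$-almost every $\vecx$ the fibre measure $\nu_\vecx$ is invariant under the unipotent subgroup $\varphi_g(N)$ and under the diagonal flow $\{\varphi_g(\Phi^s)\}$. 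Ergodic-decomposing $\nu_\vecx$ under $\varphi_g(N)$, Ratner's theorems \cite{Ratner91a,Ratner91b} represent each ergodic component as the Haar measure of a closed orbit $\Gamma\backslash\Gamma hL\subseteq\Gamma\backslash\Gamma H_g$ with $\varphi_g(N)\subseteq hLh^{-1}$, and only countably many subgroups $L$ occur here (Ratner's finiteness result \cite{Ratner91b}, cf.\ Theorem~\ref{RATNERCOUNTABLETHM}). It then remains to rule out every $L\subsetneq H_g$: one argues, as in \cite[Sec.~5.1--5.2]{partI}, that for each such proper homogeneous subvariety the pre-limit curves $\vecx\mapsto\varphi_g(E_0n(\vecx)\Phi^t)$ spend asymptotically $\lambda$-measure zero in every fixed neighbourhood of it, a consequence of the absolute continuity of $\lambda$ together with the Dani--Margulis linearisation (polynomial non-divergence of these curves transverse to the subvariety). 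Therefore $\nu_\vecx=\mu_{H_g}$ for $\lambda$-almost every $\vecx$, so $\nu=\lambda\times\mu_{H_g}$; since the limit is independent of the chosen subsequence this gives \eqref{PARTITHM5P3GENRES}.

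\emph{Main obstacle.} The crux is the last part of Step~2: showing that no proper closed subgroup $L$ lying between $\varphi_g(N)$ (together with the flow $\varphi_g(\Phi^s)$) and $H_g$ can carry a positive share of the limiting mass. This is exactly where the defining property of $H_g$ via Ratner's theorems is used, and it cannot be replaced by a mixing (Howe--Moore) argument, since $H_g$ need be neither semisimple nor free of compact factors; one has to port the linearisation argument of \cite[Sec.~5]{partI}, verifying that the curves $E_0n(\vecx)\Phi^t$ remain transverse to the countably many exceptional homogeneous subvarieties of $\Gamma\backslash\Gamma H_g$. The uniform non-escape-of-mass estimate of Step~1 on the non-compact space $X$ is a closely related, secondary technical point.
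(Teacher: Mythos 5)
Your overall strategy is legitimate but it is a genuinely different (and much heavier) route than the paper's. The paper's proof is short: for $E_0=1_d$, $f(\vecx,p)\equiv F(p)$ and $f_t\equiv f$ the statement is quoted directly as a special case of Shah \cite[Thm.\ 1.4]{Shah96}; the extension to $\vecx$-dependent $f$ and $t$-dependent families $\{f_t\}$ is done exactly as in \cite[Thm.\ 5.3]{partI}; and general $E_0$ is reduced to $E_0=1_d$ by the substitution $\widetilde f(\vecx,p)=f(\vecx,p\varphi_g(E_0))$, using $\varphi_{gg_0}(A)=\varphi_g(E_0AE_0^{-1})$ (so $H_{gg_0}=H_g$) and the right $H_g$-invariance of $\mu_{H_g}$. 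You instead propose to reprove the Shah-type equidistribution from scratch via Ratner's measure classification plus Dani--Margulis nondivergence and linearisation; that is essentially the architecture of Shah's own proof, so the plan is sound in outline, but it means you are redoing the hard part rather than citing it, and your reference to \cite[Sec.\ 5.1--5.2]{partI} for the linearisation/transversality step is misplaced: \cite{partI} does not carry out such an argument either, it likewise invokes \cite{Shah96}. So the decisive step cannot be ``ported'' from there; it has to be done in full or quoted from Shah, which is what the paper does.

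Two concrete points in your Step 2 would fail as written. First, the avoidance claim is false as stated: the curves $\vecx\mapsto\varphi_g(E_0n(\vecx)\Phi^t)$ do \emph{not} spend asymptotically $\lambda$-measure zero in every \emph{fixed} neighbourhood of a proper homogeneous subvariety --- by the very equidistribution you are proving, they spend an asymptotic proportion comparable to the $\mu_{H_g}$-measure of that neighbourhood, which is positive. The correct linearisation statement only gives an arbitrarily small proportion in sufficiently small neighbourhoods of compact pieces of the singular set, \emph{unless} the trajectory is trapped in the associated tube, and that alternative must then be excluded using the defining minimality of $H_g$ (density of $\Gamma\backslash\Gamma\varphi_g(\SL(d,\R))$ in $\Gamma\backslash\Gamma H_g$); this ``unless'' dichotomy and its exclusion are exactly the content you would have to supply. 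Second, the asserted invariance of the weak-$*$ limit under the diagonal flow does not follow from your argument: the identity $\nu_{t}\cdot\varphi_g(\Phi^s)=\nu_{t+s}$ only shows that $\nu_{t_k+s}$ converges to $\nu\cdot\varphi_g(\Phi^s)$, and a priori the limits along $t_k$ and $t_k+s$ may differ --- which is precisely what is not yet known. Fortunately this invariance is not needed if the Ratner-plus-linearisation argument is carried out correctly, but as presented it is an unjustified claim. Your Step 1 (tightness via quantitative nondivergence for the expanding pieces of $\varphi_g(N)$-orbits) and the reduction from $f_t$ to $f$ are fine in spirit and parallel the paper's treatment.
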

\begin{proof}
First assume $E_0=1_d$.
If $f(\vecx,p)\equiv F(p)$ for some $F\in \C_b(X)$ and $f_t\equiv f$ for all $t\in R$ then
\eqref{PARTITHM5P3GENRES} is a special case of Shah \cite[Thm.\ 1.4]{Shah96};
the extension to arbitrary $f,\{f_t\}_{t\in R}$ as above can be done exactly as in
\cite[Thm.\ 5.3]{partI}. 
Finally we extend to the case of general $E_0\in\SL(d,\R)$ by a simple substitution argument:
For $f,\{f_t\}_{t\in R}$ given as above, define
$\widetilde f\in \C_b(\R^{d-1}\times X)$ and 
$\{\widetilde f_t\}_{t\in R}\subset \C_b(\R^{d-1}\times X)$ through
\begin{align}
\widetilde f(\vecx,p):=f(\vecx,p\varphi_g(E_0));\qquad
\widetilde f_t(\vecx,p):=f_t(\vecx,p\varphi_g(E_0)).
\end{align}
Set $g_0=\varphi_1(E_0)$. Noticing that $\varphi_{gg_0}(A)=\varphi_{g}(E_0AE_0^{-1})$ for all $A\in\SL(d,\R)$
we see that $H_{gg_0}=H_g$.
By the limit relation which we have already proved, with 
$gg_0,\widetilde f$, $\{\widetilde f_t\}$ in the place of $g,f,\{f_t\}$, we have
\begin{align}\label{PARTITHM5P3GENPF1}
\lim_{t\to\infty}\int_{\R^{d-1}}\widetilde f_t\Bigl(\vecx,\varphi_{gg_0}\bigl(n(\vecx)\Phi^t\bigr)\Bigr)\,d\lambda(\vecx)
=\int_{\R^{d-1}\times X} \widetilde f(\vecx,p)\,d\lambda(\vecx)\,d\mu_{H_g}(p).
\end{align}
However here $\widetilde f_t(\vecx,\varphi_{gg_0}(n(\vecx)\Phi^t))=f_t(\vecx,\varphi_g(E_0n(\vecx)\Phi^t))$;
also using the fact that $\mu_{H_g}$ is right $H_g$-invariant we see that the right hand side of
\eqref{PARTITHM5P3GENPF1} equals $\int_{\R^{d-1}\times X} f(\vecx,p)\,d\lambda(\vecx)\,d\mu_{H_g}(p)$.
Hence we have proved \eqref{PARTITHM5P3GENRES}.
\end{proof}

\begin{cor}\label{PARTICOR5P4GEN}
Let $D\subset\R^{d-1}$ be an open subset and let $E_1:D\to\SO(d)$ be a smooth map such that the map 
$D\ni\vecx\mapsto\vece_1 E_1(\vecx)^{-1}\in\S_1^{d-1}$ has nonsingular differential at \mbox{(Lebesgue-)}almost all $\vecx\in D$.
Let $\lambda$ be a Borel probability measure on $D$, absolutely continuous with respect to Lebesgue measure.
Let $f\in \C_b(D\times X)$; let $R$ be a subset of $\R$ having $+\infty$ as a limit point,  and let
$\{f_t\}_{t\in R}$ be a uniformly bounded family of functions in $\C_b(D\times X)$ satisfying
$f_t\to f$ as $t\to\infty$, uniformly on compacta.
Then
\begin{align}
\lim_{t\to\infty}\int_D f_t\Bigl(\vecx,\varphi_g\bigl(E_1(\vecx)\Phi^t\bigr)\Bigr)\,d\lambda(\vecx)
=\int_{D\times X} f(\vecx,p)\,d\lambda(\vecx)\,d\mu_{H_g}(p).
\end{align}
\end{cor}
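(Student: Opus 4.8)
The plan is to reduce, by a local change of variables, to the model case $E_1(\vecx)=E_0\,n(\vecx)$ treated in Theorem \ref{PARTITHM5P3GEN}. Since $\lambda$ is absolutely continuous, we may discard the Lebesgue-null set on which the differential of $\vecx\mapsto\vece_1 E_1(\vecx)^{-1}$ is singular, and work on the open set $D'\subseteq D$ on which this map is a local diffeomorphism onto $\S_1^{d-1}$. A one-line computation gives $\vece_1\bigl(E_0\,n(\vecy)\bigr)^{-1}=(1,-\vecy)E_0^{-1}$ for any $E_0\in\SL(d,\R)$, so $\vecy\mapsto (1,-\vecy)E_0^{-1}\big/\bigl\|(1,-\vecy)E_0^{-1}\bigr\|$ is a diffeomorphism of $\R^{d-1}$ onto the open hemisphere $\{\vecu\in\S_1^{d-1}\col\vecu\cdot\vecu_0>0\}$, where $\vecu_0\in\S_1^{d-1}$ is the first column of $E_0$; since $\vecu_0$ can be prescribed arbitrarily, these hemispheres cover $\S_1^{d-1}$. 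Hence every point of $D'$ has a neighbourhood $U$ carrying a smooth diffeomorphism $\vecx(\,\cdot\,)\col V\to U$ (with $V\subset\R^{d-1}$ open) and an $E_0\in\SL(d,\R)$ with
\begin{equation*}
\vece_1 E_1(\vecx(\vecy))^{-1}=c(\vecy)\,\vece_1\bigl(E_0\,n(\vecy)\bigr)^{-1},\qquad c(\vecy):=\bigl\|(1,-\vecy)E_0^{-1}\bigr\|^{-1}>0 .
\end{equation*}
Covering $D'$ by countably many such $U$ (possible by second countability), and — splitting $\lambda$ into disjoint Borel pieces and using linearity in $\lambda$ — it suffices to prove the assertion for one such piece, with a finite absolutely continuous measure supported in it, and then sum the resulting limits (the interchange of limit and sum being justified by $|f_t|<K$, finiteness of $\lambda$, and dominated convergence in the summation index).

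So fix one piece, with $E_0$ and $\vecx(\,\cdot\,)\col V\to U$ as above, and set $A:=E_1(\vecx(\vecy))\in\SO(d)$, $B:=E_0\,n(\vecy)\in\SL(d,\R)$. The identity above reads $\vece_1 A^{-1}B=c(\vecy)\vece_1$, so $M(\vecy):=A^{-1}B$ has first row $(c(\vecy),\bn)$, i.e.\ $M(\vecy)=\matr{c(\vecy)}{\bn}{\vecs(\vecy)}{M'(\vecy)}$ (block lower triangular), with $\det M(\vecy)=1$ (as $\det A=\det B=1$), depending smoothly on $\vecy$. Consequently $M(\vecy)^{-1}$ is again block lower triangular with diagonal blocks $c(\vecy)^{-1},M'(\vecy)^{-1}$. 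Now $\Phi^t$ commutes with block-diagonal matrices $\diag[a,S]$, while conjugation $N\mapsto\Phi^{-t}N\Phi^t$ multiplies the lower-left block of a block-lower-triangular $N$ by $e^{-dt}$ and leaves the diagonal blocks fixed; therefore
\begin{equation*}
E_1(\vecx(\vecy))\,\Phi^t=B\,M(\vecy)^{-1}\Phi^t=E_0\,n(\vecy)\,\Phi^t\,N_t(\vecy),\qquad N_t(\vecy):=\Phi^{-t}M(\vecy)^{-1}\Phi^t,
\end{equation*}
where $N_t(\vecy)\in\SL(d,\R)$ and $N_t(\vecy)\to N_\infty(\vecy):=\diag\bigl[c(\vecy)^{-1},M'(\vecy)^{-1}\bigr]\in\SL(d,\R)$ as $t\to\infty$, uniformly for $\vecy$ in compact sets. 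Since $\varphi_g$ restricts to a homomorphism on $\SL(d,\R)$ with $\varphi_g(\SL(d,\R))\subset H_g$, this gives $\varphi_g(E_1(\vecx(\vecy))\Phi^t)=\varphi_g(E_0\,n(\vecy)\Phi^t)\,\varphi_g(N_t(\vecy))$ with $\varphi_g(N_t(\vecy)),\varphi_g(N_\infty(\vecy))\in H_g$.

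Finally, define $\widetilde f_t(\vecy,p):=f_t\bigl(\vecx(\vecy),p\,\varphi_g(N_t(\vecy))\bigr)$ and $\widetilde f(\vecy,p):=f\bigl(\vecx(\vecy),p\,\varphi_g(N_\infty(\vecy))\bigr)$ on $V\times X$; after a harmless extension to $\R^{d-1}\times X$ (one may take the measure supported in a compact subset of $V$) these lie in $\C_b(\R^{d-1}\times X)$, are uniformly bounded by $K$, and $\widetilde f_t\to\widetilde f$ uniformly on compact sets. Let $\widetilde\lambda$ be the push-forward of $\lambda|_U$ under $\vecx(\,\cdot\,)^{-1}$, an absolutely continuous finite measure on $\R^{d-1}$. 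Theorem \ref{PARTITHM5P3GEN}, applied with $E_0$, $\widetilde f$, $\{\widetilde f_t\}$, $\widetilde\lambda$ in place of $E_0$, $f$, $\{f_t\}$, $\lambda$, then yields
\begin{equation*}
\lim_{t\to\infty}\int_{\R^{d-1}}\widetilde f_t\bigl(\vecy,\varphi_g(E_0\,n(\vecy)\Phi^t)\bigr)\,d\widetilde\lambda(\vecy)=\int_{\R^{d-1}\times X}\widetilde f(\vecy,p)\,d\widetilde\lambda(\vecy)\,d\mu_{H_g}(p) .
\end{equation*}
On the left, $\widetilde f_t\bigl(\vecy,\varphi_g(E_0\,n(\vecy)\Phi^t)\bigr)=f_t\bigl(\vecx(\vecy),\varphi_g(E_1(\vecx(\vecy))\Phi^t)\bigr)$, so reversing the substitution identifies the left-hand side with $\lim_{t\to\infty}\int_U f_t\bigl(\vecx,\varphi_g(E_1(\vecx)\Phi^t)\bigr)\,d\lambda(\vecx)$; on the right, the right $H_g$-invariance of $\mu_{H_g}$ together with $\varphi_g(N_\infty(\vecy))\in H_g$ turns the inner integral into $\int_X f(\vecx(\vecy),p)\,d\mu_{H_g}(p)$, so the right-hand side equals $\int_{U\times X}f(\vecx,p)\,d\lambda(\vecx)\,d\mu_{H_g}(p)$. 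Summing over the pieces proves the corollary.

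The real content is all in the first paragraph: normalizing $E_1$ locally into the shape $E_0\,n(\vecy)$, which is exactly where the nonsingular-differential hypothesis is used. The one point that then needs care is that the correction $N_t(\vecy)$ cannot simply be commuted past $\Phi^t$; it is instead absorbed into the first argument of $f_t$, and it converges as $t\to\infty$ precisely because $\Phi^t$ contracts the lower-left block of $M(\vecy)^{-1}$ — and the dangerous upper-right block of $M(\vecy)^{-1}$ vanishes only because $\vece_1 M(\vecy)$ is a scalar multiple of $\vece_1$, i.e.\ because of the chosen normalization. This parallels the passage from Theorem 5.3 to Corollary 5.4 in \cite{partI}.
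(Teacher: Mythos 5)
Your proposal is correct and follows essentially the same route as the paper: the paper proves the corollary by mimicking the proof of Corollary 5.4 in \cite{partI} with Theorem \ref{PARTITHM5P3GEN} in place of \cite[Thm.\ 5.3]{partI}, and the replacement test functions displayed there are exactly your $\widetilde f_t$, with the correction matrix $\smatr{c(\vecx)^{-1}}{\bn}{\trans\vecv(\vecx)e^{-dt}}{A(\vecx)}$ (your $N_t(\vecy)$, with $e^{-dt}$-decaying lower-left block) absorbed into the test function and a cutoff playing the role of your restriction to measures supported compactly in each chart. Your write-up simply fills in the localization, change of variables and $H_g$-invariance steps that the paper delegates to \cite{partI}, so no substantive difference.
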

\begin{proof}
This is proved by mimicking the proof of \cite[Cor.\ 5.4]{partI}, using
Theorem \ref{PARTITHM5P3GEN} in place of \cite[Thm.\ 5.3]{partI}.
Let us only point out that \cite[eq.\ (5.23)]{partI} is now replaced by
\begin{align}
&\widetilde f_t(\vecx,p)=h(\widetilde\vecx) f_t\left(\vecx,
p\varphi_g\left(\matr{c(\vecx)^{-1}}{\bn}{\trans\vecv(\vecx)e^{-dt}}{A(\vecx)}\right)\right)
&&\text{if }\:\widetilde\vecx\in\widetilde D_0';
\\
&\widetilde f(\widetilde\vecx,p)=h(\widetilde\vecx) f\left(\vecx,
p\varphi_g\left(\matr{c(\vecx)^{-1}}{\bn}{\trans\bn}{A(\vecx)}\right)\right)
&&\text{if }\:\widetilde\vecx\in\widetilde D_0';
\\
&\widetilde f_t(\vecx,p)=\widetilde f(\vecx,p):=0
&&\text{if }\:\widetilde\vecx\notin\widetilde D_0',
\end{align}
and \cite[eq.\ (5.24)]{partI} is replaced by
\begin{align}
\lim_{t\to\infty}\int_{\R^{d-1}}\widetilde f_t\Bigl(\widetilde\vecx,\varphi_g\bigl(E_0n(\widetilde\vecx)\Phi^t\bigr)\Bigr)
\,d\widetilde\lambda(\widetilde\vecx)
=\int_{\R^{d-1}\times X}\widetilde f(\widetilde\vecx,p)\,d\widetilde\lambda(\widetilde\vecx)\,d\mu_{H_g}(p),
\end{align}
which follows from our Theorem \ref{PARTITHM5P3GEN}.
\end{proof}

\begin{proof}[Proof of Theorem \ref{equi1}]
As in \cite[beginning of Sec.\ 9.3]{partI} we may fix a smooth map $E_1:D\to\SO(d)$ such that
$\vecv=\vecv(\vecx)=\vece_1 E_1(\vecx)^{-1}$ gives a diffeomorphism between the bounded open set $D\subset\R^{d-1}$
and $\S_1^{d-1}$ minus one point, and $E_1(\vecx)=K(\vecv(\vecx))$ for all $\vecx\in D$.
Theorem \ref{equi1} is now obtained as a special case of Corollary \ref{PARTICOR5P4GEN}.
\end{proof}

We next study further the relationship between the Ratner subgroups $H_g$ and $\widetilde H_g$.

\begin{lem}\label{SCRVINTILDEHGLEM}
Let $g=(M_g,\vecv_g)\in G$, set $\scrL=\Z^nM_g$, and let $\scrA,\scrA^\circ,\scrV$ be as in the introduction.
Then $(1_n,\vecw M_g^{-1})\in\widetilde H_g$ holds for all $\vecw\in\scrV$.
\end{lem}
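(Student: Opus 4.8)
We need to show that for every $\vecw\in\scrV = \R^d\times\scrA^\circ$, the element $(1_n,\vecw M_g^{-1})$ lies in $\widetilde H_g$. Since $\widetilde H_g$ is by definition a closed subgroup, it suffices to exhibit $\vecw M_g^{-1}$ (or rather the associated affine element) as a limit of elements known to lie in $\widetilde H_g$, using the structure of $\varphi_g(\ASL(d,\R))\subset\widetilde H_g$ and the right $\widetilde H_g$-invariance of $\mu_{\widetilde H_g}$ together with Ratner-type density. Concretely, $\varphi_g((1_d,\vecx)) = g\bigl((1_n,(\vecx,\bn))\bigr)g^{-1} = (1_n,(\vecx,\bn)M_g^{-1})$ for all $\vecx\in\R^d$, so $\widetilde H_g$ already contains $(1_n,\vecw M_g^{-1})$ for all $\vecw\in\R^d\times\{\bn\}$. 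The task is therefore to upgrade $\R^d\times\{\bn\}$ to $\R^d\times\scrA^\circ$ in the translation part.

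**Key steps.** First I would record that $\widetilde H_g$ is closed and connected, contains $\varphi_g(\ASL(d,\R))$, and that $\Gamma\backslash\Gamma\widetilde H_g$ is the closure of $\Gamma\backslash\Gamma\varphi_g(\ASL(d,\R))$ in $\GamG$. Second, I would identify the "translation subgroup" $N := \{(1_n,\vecu) : \vecu\in\R^n\}\cap\widetilde H_g$; this is a closed connected subgroup of the vector group $\R^n$, hence a linear subspace $V M_g^{-1}$ for some subspace $V\subseteq\R^n$, and the content of the lemma is exactly that $V\supseteq\scrV$. We already know $V\supseteq\R^d\times\{\bn\}$ from the computation of $\varphi_g((1_d,\vecx))$. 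Third — the heart of the argument — I would show $V$ contains $\{\bn\}\times\scrA^\circ$. For this, pick any $\vecb\in\scrL\cap\scrV$ with $\pi_\intl(\vecb)\neq\bn$; the lattice points $k\pi_\intl(\vecb)$, $k\in\Z$, have internal-space components that (after projecting mod the rational directions) equidistribute densely in $\scrA^\circ$ by Weyl, since $\scrA^\circ$ is precisely the connected component of the closure $\scrA$ of $\pi_\intl(\scrL)$. Writing $\vecb = \veca M_g$ with $\veca\in\Z^n$, and using that $(1_n,\veca)\in\Gamma$ while also $(1_n,\veca)$ acting on the right of $\Gamma\backslash\Gamma\varphi_g(\ASL(d,\R))$ stays in $\GamG$, one extracts a limit of elements $\gamma_k\varphi_g((A_k,\vecx_k))$ converging to $(1_n,\vecu)$ for any prescribed $\vecu\in\{\bn\}\times\scrA^\circ$ expressed via $M_g^{-1}$. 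In more detail: the element $\varphi_g((1_d,\vecnull))\cdot\bigl(\text{product of }\Gamma\text{-translations by multiples of }\veca\bigr)$ lands in $\Gamma\backslash\Gamma\varphi_g(\ASL(d,\R))$, and its closure therefore lies in $\Gamma\backslash\Gamma\widetilde H_g$; chasing through the identification $\Gamma\backslash\Gamma\widetilde H_g \cong (\Gamma\cap\widetilde H_g)\backslash\widetilde H_g$ and using closedness of $\widetilde H_g$, the limit element $(1_n,\vecu)$ itself lies in $\widetilde H_g$. Fourth, combining $V\supseteq\R^d\times\{\bn\}$ and $V\supseteq\{\bn\}\times\scrA^\circ$ gives $V\supseteq\scrV$, which is the claim.

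An alternative, cleaner route for the third step: invoke Proposition \ref{SCRAPROP} (in the $\widetilde H_g$-version, or rather its proof idea) to see that $\overline{\pi_\intl(\Z^n h M_g)} = \scrA$ for $\mu_{\widetilde H_g}$-a.e.\ $h$; together with the fact that $\widetilde H_g = \widetilde H_{g(1_n,\vecx)}$ for $\vecx\in\R^d\times\{\bn\}$, one can transfer density statements about lattice points into membership statements for translations in $\widetilde H_g$. But I expect the direct Weyl-equidistribution argument above to be the most transparent: the key point is simply that the $\Gamma$-orbit of $\varphi_g(\ASL(d,\R))$ in $\GamG$, when intersected with the fiber over a fixed point of $\SLSL$, sweeps out a dense set of translations in the $\scrA^\circ$-directions, precisely because $\scrA^\circ$ is the connected part of $\overline{\pi_\intl(\scrL)}$.

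**Main obstacle.** The delicate point is the passage from "the relevant translations appear densely in the closure $\Gamma\backslash\Gamma\widetilde H_g$ inside $\GamG$" to "the corresponding elements $(1_n,\vecw M_g^{-1})$ genuinely lie in the group $\widetilde H_g$". This requires carefully using that $\widetilde H_g$ is closed in $G$ and that the natural map $(\Gamma\cap\widetilde H_g)\backslash\widetilde H_g\to\Gamma\backslash\Gamma\widetilde H_g$ is a homeomorphism, so that a convergent sequence downstairs lifts to a convergent sequence upstairs (after adjusting by elements of $\Gamma\cap\widetilde H_g$, which one must check can be taken trivial here since we are controlling the $\SL$-part to be near $1_n$). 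Handling this lift — ruling out that the convergence is "absorbed" by the lattice $\Gamma$ — is where one must be slightly careful, but it is standard once one notes that near the identity the covering map is injective.
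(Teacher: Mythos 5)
Your overall strategy is the right one and is essentially the paper's (translations $\varphi_g((1_d,\vecx))=(1_n,(\vecx,\bn)M_g^{-1})$ give the $\R^d\times\{\bn\}$ directions; $\Gamma$-translations plus closedness of $\Gamma\backslash\Gamma\widetilde H_g$ give the internal directions; a local lift near the identity plus the group law finishes), but the step you call the heart of the argument rests on a false claim. For a single $\vecb\in\scrL\cap\scrV$ the set $\{k\pi_\intl(\vecb)\col k\in\Z\}$ is a discrete cyclic subgroup of $\scrA^\circ$; it is never dense in $\scrA^\circ$ unless $\scrA^\circ=\{\bn\}$, and no Weyl-equidistribution statement (``mod the rational directions'') helps, because what you need is genuine density in $\scrA^\circ$ of the translation parts that survive after the $\Gamma$-translation has been absorbed into the left coset. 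Concretely, the elements you produce, $(1_n,k\veca)\varphi_g((1_d,\vecy))=\bigl(1_n,(k\vecb+(\vecy,\bn))M_g^{-1}\bigr)$, have internal components confined to the discrete set $\Z\pi_\intl(\vecb)$, so you cannot approximate a prescribed $\vecu$ with nonzero $\scrA^\circ$-component. The fix is simply to use the whole lattice, which is what the paper does: for every $\vecm\in\Z^n$ and $\vecy\in\R^d$ one has $\Gamma\bigl(1_n,(\vecm M_g+(\vecy,\bn))M_g^{-1}\bigr)=\Gamma(1_n,\vecm)\varphi_g((1_d,\vecy))=\Gamma\varphi_g((1_d,\vecy))$, a point of the closed set $\Gamma\backslash\Gamma\widetilde H_g$; since $\pi_\intl(\scrL)$ is dense in $\scrA$ by the very definition of $\scrA$ (no equidistribution input at all), the set $\scrL+(\R^d\times\{\bn\})$ is dense in $\R^d\times\scrA\supseteq\scrV$, whence $(1_n,\vecw M_g^{-1})\in\Gamma\widetilde H_g$ for every $\vecw\in\scrV$.

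The remaining parts of your outline are sound and coincide with the paper's proof: the ``main obstacle'' you flag is resolved exactly as you suggest — for $\vecw$ sufficiently near $\bn$, membership in the closed set $\Gamma\widetilde H_g$, discreteness of $\Gamma$ and the homeomorphism $(\Gamma\cap\widetilde H_g)\backslash\widetilde H_g\cong\Gamma\backslash\Gamma\widetilde H_g$ force $(1_n,\vecw M_g^{-1})\in\widetilde H_g$ itself — and the paper then passes from small $\vecw$ to all of $\scrV$ ``by linearity'', i.e.\ the set of admissible $\vecw$ is a subgroup of $\scrV$ containing a neighbourhood of $\bn$. Two small inaccuracies in your framing are worth noting here: first, your plan to hit \emph{any} prescribed $\vecu\in\{\bn\}\times\scrA^\circ$ directly cannot work for large $\vecu$, since the passage from $\Gamma\widetilde H_g$ to $\widetilde H_g$ is only available near the identity, so the subgroup/linearity step is not optional; second, the intersection of $\widetilde H_g$ with the translation group need not be connected (a closed subgroup of $\R^n$ is a direct sum of a subspace and a discrete group), but this is harmless, since a closed subgroup containing a neighbourhood of $\bn$ inside $\scrV M_g^{-1}$ necessarily contains all of $\scrV M_g^{-1}$.
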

\begin{proof}
Note that for any $\vecm\in\Z^n$ and $\vecy\in\R^d$ we have
\begin{align}
\Gamma g\bigl(1_n,\vecm M_g+(\vecy,\bn)\bigr)g^{-1}
=\Gamma (1_n,\vecm)\varphi_g\bigl((1_d,\vecy)\bigr)
=\Gamma \varphi_g\bigl((1_d,\vecy)\bigr),
\end{align}
and this point belongs to the closed subset $\Gamma\backslash\Gamma\widetilde H_g$ of $\GaG$.
Hence also for every $\vecw$ in the closure of $\scrL+(\R^d\times\{\bn\})$
we have $g(1_n,\vecw)g^{-1}\in\Gamma\widetilde H_g$,
i.e.\ $(1_n,\vecw M_g^{-1})\in\Gamma\widetilde H_g$,
and for $\vecw$ sufficiently near $\bn$ this forces $(1_n,\vecw M_g^{-1})\in \widetilde H_g$.
Hence by linearity we have $(1_n,\vecw M_g^{-1})\in\widetilde H_g$ for all $\vecw\in\scrV$.
\end{proof}

For any $g\in G$, using the defining properties of $H_g$ and $\widetilde H_g$ and noticing that
\begin{align}\label{SLEQASLFORGENERICXPROPPF1}
\varphi_{g(1_n,(\vecx,\bn))}(A) %
=\varphi_g\bigl((A,\vecx A-\vecx)\bigr),\qquad
\forall\vecx\in\R^d,\: A\in\SL(d,\R),
\end{align}
it follows %
that $H_{g(1_n,\vecx)}\subset\widetilde H_g$ for all $\vecx\in\R^d\times\{\bn\}$.
The next proposition shows that this inclusion is in fact an \textit{equality} for almost all $\vecx$.

\begin{prop}\label{SLEQASLFORGENERICXPROP}
Let $g\in G$ be fixed. Then for (Lebesgue-)almost all $\vecx\in\R^d\times\{\bn\}$ 
we have $H_{g(1_n,\vecx)}=\widetilde H_g$.
\end{prop}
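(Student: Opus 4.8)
The strategy is to show that for almost every $\vecx$, the group $H_{g(1_n,\vecx)}$ is large enough to force equality with $\widetilde H_g$. We already know two things: the inclusion $H_{g(1_n,\vecx)}\subset\widetilde H_g$ always holds (from \eqref{SLEQASLFORGENERICXPROPPF1} and the defining properties), and $\varphi_g(\SL(d,\R))\subset H_{g(1_n,\vecx)}$ is not quite true, but $\varphi_{g(1_n,\vecx)}(\SL(d,\R))\subset H_{g(1_n,\vecx)}$ is. So the plan is: (i) identify a translation subgroup of $\widetilde H_g$ that we can show lies in $H_{g(1_n,\vecx)}$ for generic $\vecx$, and (ii) check that this translation subgroup together with $\varphi_{g(1_n,\vecx)}(\SL(d,\R))$ already generates (a group whose orbit closure is) all of $\widetilde H_g$.

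First I would use the countability of the family $\{H_h : h\in G\}$ (Theorem \ref{RATNERCOUNTABLETHM}), exactly as in the proofs of Propositions \ref{SCRAPROP} and \ref{ZEROSETINVPROP}: since there are only countably many possible values, it suffices to show that the set of $\vecx$ for which $H_{g(1_n,\vecx)}$ equals the ``maximal'' candidate has full measure, or more precisely to rule out each proper subgroup on a null set. Concretely, I would write $g(1_n,(\vecx,\bn)) = (1_n,(\vecx,\bn)g^{-1})\cdot g$ after rearranging, and observe that varying $\vecx$ over $\R^d\times\{\bn\}$ amounts to translating the affine lattice $\Z^ng$ by the physical-space vector $\vecx$. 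The key point is a Fubini/ergodicity argument: consider the one-parameter unipotent $U\subset\varphi_g(\SL(d,\R))$ with $\overline{\Gamma\bs\Gamma U}=\Gamma\bs\Gamma H_g$ from \cite[Cor.\ B]{Ratner91b}. For $\vecx$ such that the relevant trajectory $\Gamma\varphi_{g(1_n,\vecx)}(U)$ equidistributes in $\Gamma\bs\Gamma\widetilde H_g$ — which, by Ratner/Shah-type equidistribution along the lines of Theorem \ref{PARTITHM5P3GEN}, or simply by the ergodicity statement in Corollary \ref{RATNERCOUNTABLETHMCOR} applied appropriately — one gets $H_{g(1_n,\vecx)}=\widetilde H_g$.

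Actually the cleanest route is to mimic Corollary \ref{RATNERCOUNTABLETHMCOR} directly. Write $\widetilde H_g$ in the form where, by Lemma \ref{SCRVINTILDEHGLEM}, it contains all translations $(1_n,\vecw M_g^{-1})$ with $\vecw\in\scrV = \R^d\times\scrA^\circ$. The orbit $\Gamma\bs\Gamma\varphi_g(\SL(d,\R))$ has closure $\Gamma\bs\Gamma H_g$, and translating by $(1_n,(\vecx,\bn))$ moves this around inside $\Gamma\bs\Gamma\widetilde H_g$. Since $\widetilde H_g$ is generated by $H_g$ together with the translations $\varphi_g((1_d,\vecx))$, $\vecx\in\R^d$, and since a unipotent one-parameter subgroup $U\subset\varphi_g(\SL(d,\R))$ acts ergodically on $(\Gamma\bs\Gamma\widetilde H_g,\mu_{\widetilde H_g})$ when combined with the $\R^d$-translation flow — here one invokes that the $\ASL(d,\R)$-action generated by $\varphi_g$ has $\Gamma\bs\Gamma\widetilde H_g$ as its orbit closure with ergodic invariant measure — one concludes: for $\mu_{\widetilde H_g}$-almost every $h\in\widetilde H_g$, and in particular writing $h$ in a slice corresponding to $\R^d\times\{\bn\}$-translation, the orbit $\Gamma h\varphi_g(U)$ is dense in $\Gamma\bs\Gamma\widetilde H_g$; for such $h = (1_n,(\vecx,\bn))$-type elements this gives $H_{g(1_n,\vecx)}\supset$ the group whose orbit closure is all of $\Gamma\bs\Gamma\widetilde H_g$, hence $H_{g(1_n,\vecx)}=\widetilde H_g$. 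Combined with countability of the $H_\bullet$ family and Fubini over the $\scrA^\circ$-directions, this promotes ``$\mu_{\widetilde H_g}$-a.e.\ $h$'' to ``Lebesgue-a.e.\ $\vecx\in\R^d\times\{\bn\}$''.

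**Main obstacle.** The delicate point is the passage from ``almost every $h\in\widetilde H_g$ with respect to $\mu_{\widetilde H_g}$'' to ``almost every $\vecx\in\R^d\times\{\bn\}$ with respect to Lebesgue measure'': the coset $g(1_n,(\vecx,\bn))$, as $\vecx$ ranges over $\R^d$, traces out only a $d$-dimensional submanifold of $\widetilde H_g$ (modulo $H_g$), not all of it, so one cannot literally invoke ``$\mu_{\widetilde H_g}$-a.e.'' and must instead work with the fibration of $\Gamma\bs\Gamma\widetilde H_g$ over $\Gamma\bs\Gamma H_g$-type data and apply Fubini so that the $\R^d$-translation direction is one of the fiber coordinates. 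Making this fibration precise — i.e.\ checking that $\widetilde H_g/H_g$ (or the relevant quotient) is a torus or vector group on which the $\vecx$-translations project to a full-measure-transverse family — and verifying that the exceptional null set in the fiber direction is genuinely Lebesgue-null in $\vecx$, is where the real care is needed; everything else is a routine combination of Lemma \ref{SCRVINTILDEHGLEM}, Corollary \ref{RATNERCOUNTABLETHMCOR}, and the countability Theorem \ref{RATNERCOUNTABLETHM}.
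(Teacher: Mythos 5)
There is a genuine gap, and you have in fact pointed at it yourself: the passage from ``$\mu_{\widetilde H_g}$-almost every $h\in\widetilde H_g$'' to ``Lebesgue-almost every $\vecx\in\R^d\times\{\bn\}$'' is not a routine Fubini step, and your sketch does not resolve it. The family $\{g(1_n,(\vecx,\bn))\col\vecx\in\R^d\}$ is a $d$-dimensional slice inside $\widetilde H_g g$ (typically of much higher dimension), hence it is itself a $\mu_{\widetilde H_g}$-null set; an exceptional set of measure zero for the ergodicity/density statement you invoke (density of $\Gamma h\varphi_g(U)$, or Corollary \ref{RATNERCOUNTABLETHMCOR}) could a priori contain this entire slice, and nothing in the ergodic-theoretic input controls the intersection of that exceptional set with a fixed transversal slice. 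A Fubini argument would require knowing that the exceptional set is saturated under a foliation transverse to the $\vecx$-direction (or a quantitative description of it, e.g.\ via Dani--Margulis type tubes), neither of which you establish; the remark that ``$\widetilde H_g/H_g$ is a torus or vector group'' is also not justified and is not needed for the statement. So the core of the proposition is exactly the step you defer, and as written the argument does not close.

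The paper's proof avoids ergodicity on $\widetilde H_g$ altogether and replaces the slice problem by linear algebra. Using the countability of $F=\{H_{g(1_n,\vecx)}\col\vecx\in\R^d\times\{\bn\}\}$ (Theorem \ref{RATNERCOUNTABLETHM}), one sets, for each $H\in F$,
\begin{equation*}
V_H=\bigl\{\vecx\in\R^d\times\{\bn\}\col H_{g(1_n,\vecx)}\subset H\bigr\},
\end{equation*}
and observes that $\vecx\in V_H$ if and only if $d\varphi_{g(1_n,\vecx)}(Y_j)\in\ih$ for a basis $Y_1,\ldots,Y_{d^2-1}$ of $\lsl(d,\R)$. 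A direct computation of $\Ad(M_g,\vecv_g+\vecx)$ shows that $d\varphi_{g(1_n,\vecx)}(Y_j)$ depends \emph{affinely} on $\vecx$, so each $V_H$ is an affine subspace of $\R^d\times\{\bn\}$. If $V_H$ were all of $\R^d\times\{\bn\}$, then by \eqref{SLEQASLFORGENERICXPROPPF1} the closed group $H$ would contain $\varphi_g(\ASL(d,\R))$ and hence equal $\widetilde H_g$; thus for every $H\in F\setminus\{\widetilde H_g\}$ the set $V_H$ is a proper affine subspace, hence Lebesgue-null, and the countable union of these null sets is null. This is both more elementary and sharper than the dynamical route you propose: the only dynamical input is the countability of the family $\{H_\bullet\}$, and the measure-zero conclusion in the $\vecx$-variable comes for free from the affine structure, with no fibration or slice-versus-Haar comparison needed.
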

\begin{proof}
By Theorem \ref{RATNERCOUNTABLETHM}, the following family is countable:
\begin{align}
F:=\bigl\{H_{g(1_n,\vecx)}\col\vecx\in\R^d\times\{\bn\}\bigr\}.
\end{align}
As we noted above we have $H\subset\widetilde H_g$ for all $H\in F$.

Given any $H\in F$ we set
\begin{align}
V_H:=\bigl\{\vecx\in\R^d\times\{\bn\}\col H_{g(1_n,\vecx)}\subset H\bigr\}.
\end{align}
Then $\vecx\in\R^d\times\{\bn\}$ lies in $V_H$ if and only if
$\varphi_{g(1_n,\vecx)}(\SL(d,\R))\subset H$,
or in other words if and only if
$d\varphi_{g(1_n,\vecx)}(Y_j)\in\ih$ for each $j=1,\ldots,d^2-1$,
where $Y_1,\ldots,Y_{d^2-1}$ is a fixed basis of $\lsl(d,\R)$,
and $\ih$ is the Lie subalgebra of $\ig=\lasl(n,\R)$ corresponding to $H$.
Writing $g=(M_g,\vecv_g)$ we compute
\begin{align}
d\varphi_{g(1_n,\vecx)}(Y_j)&=\bigl(\Ad(M_g,\vecv_g+\vecx)\bigr)\left(\matr{Y_j}000,\bn\right)
\\\notag
&=\left(M_g\matr{Y_j}000 M_g^{-1},(\vecv_g+\vecx)\matr{Y_j}000 M_g^{-1}\right),
\end{align}
where we have identified $\ig$ in the natural way with $\lsl(n,\R)\oplus\R^n$.
It follows that for each $j$ the set of $\vecx\in\R^d\times\{\bn\}$ satisfying $d\varphi_{g(1_n,\vecx)}(Y_j)\in\ih$ 
is an affine linear subspace (i.e.\ a translate of a linear subspace) of $\R^d\times\{\bn\}$.
Hence also $V_H$ is an affine linear subspace of $\R^d\times\{\bn\}$.

Note also that if $H\in F$ satisfies $V_H=\R^d\times\{\bn\}$
then $\varphi_{g(1_n,\vecx)}(\SL(d,\R))\subset H$ for each $\vecx\in\R^d\times\{\bn\}$,
and by \eqref{SLEQASLFORGENERICXPROPPF1} this implies that $H$ contains a dense subset of
$\varphi_g(\ASL(d,\R))$; hence $\varphi_g(\ASL(d,\R))\subset H$ since $H$ is closed,
and this forces $\widetilde H_g\subset H$, i.e.\ $H=\widetilde H_g$.
We have thus proved that for each
$H\in F\setminus\{\widetilde H_g\}$, $V_H$ is an affine linear subspace of $\R^d\times\{\bn\}$,
not equal to the full set $\R^d\times\{\bn\}$.
Using the fact that $F$ is countable we conclude that $\cup_{H\in F\setminus\{\widetilde H_g\}}V_H$ has
Lebesgue measure zero in $\R^d\times\{\bn\}$.
It follows from our definitions that
for any $\vecx\in\R^d\times\{\bn\}$ outside this set we have $H_{g(1_n,\vecx)}=\widetilde H_g$.
\end{proof}

\begin{thm}\label{SLEQASLFORGENERICXPROPCOR1}
Given $g\in G$ there is a subset $\fS\subset\R^d$ of Lebesgue measure zero such that
for any $\vecq\in\R^d\setminus\fS$, any $f\in \C_b(\Gamma\backslash\Gamma \widetilde H_g)$ and any
Borel probability measure $\lambda$ on $\S_1^{d-1}$ which is absolutely continuous with respect to Lebesgue measure,
we have
\begin{equation}\label{SLEQASLFORGENERICXPROPCOR1RES}
\lim_{t\to\infty} \int_{\S_1^{d-1}} f\circ\varphi_g\big((1_d,\vecq) K(\vecv)\Phi^t\big) \,d\lambda(\vecv)
= \int_{\Gamma\backslash\Gamma \widetilde H_g} f \, d\mu_{\widetilde H_g} .
\end{equation}
\end{thm}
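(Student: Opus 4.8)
The plan is to deduce this from Theorem~\ref{equi1} by moving the translation $(1_d,\vecq)$ into the base point, so that the governing Ratner group becomes $\widetilde H_g$ rather than $H_g$. Identifying $\R^d$ with $\R^d\times\{\bn\}\subset\R^n$, I would set
\[
\fS:=\bigl\{\vecq\in\R^d\col H_{g(1_n,(\vecq,\bn))}\neq\widetilde H_g\bigr\};
\]
this depends only on $g$, and by Proposition~\ref{SLEQASLFORGENERICXPROP} it has Lebesgue measure zero. Then I fix $\vecq\in\R^d\setminus\fS$ and write $g_\vecq:=g(1_n,(\vecq,\bn))$, so that $H_{g_\vecq}=\widetilde H_g$; in particular $\Gamma\backslash\Gamma H_{g_\vecq}=\Gamma\backslash\Gamma\widetilde H_g$ and $\mu_{H_{g_\vecq}}=\mu_{\widetilde H_g}$ (both being the unique right-invariant probability measure on that space).

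The algebraic input is the following identity. Since $\varphi_g$ is a group homomorphism and $g_\vecq$ is obtained from $g$ by right multiplication by the image of $(1_d,\vecq)$ under the standard embedding $\ASL(d,\R)\hookrightarrow G$, formula \eqref{SLEQASLFORGENERICXPROPPF1} yields $\varphi_{g_\vecq}(A)=\varphi_g((1_d,\vecq))\,\varphi_g(A)\,\varphi_g((1_d,\vecq))^{-1}$ for every $A\in\SL(d,\R)$, and hence
\[
\varphi_g\bigl((1_d,\vecq)\,A\bigr)=\varphi_g((1_d,\vecq))\,\varphi_g(A)=\varphi_{g_\vecq}(A)\,\varphi_g((1_d,\vecq)).
\]
Taking $A=K(\vecv)\Phi^t$ gives $\varphi_g((1_d,\vecq)K(\vecv)\Phi^t)=\varphi_{g_\vecq}(K(\vecv)\Phi^t)\,\varphi_g((1_d,\vecq))$. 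As $\varphi_{g_\vecq}(K(\vecv)\Phi^t)\in H_{g_\vecq}=\widetilde H_g$ and $\varphi_g((1_d,\vecq))\in\varphi_g(\ASL(d,\R))\subset\widetilde H_g$, this element lies in $\widetilde H_g$, so $f$ really is being evaluated on $\Gamma\backslash\Gamma\widetilde H_g$.

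It then remains to substitute. I would introduce $\tilde f\in\C_b(\Gamma\backslash\Gamma\widetilde H_g)$ defined by $\tilde f(p):=f(p\,\varphi_g((1_d,\vecq)))$, which is bounded and continuous because right translation by $\varphi_g((1_d,\vecq))\in\widetilde H_g$ is a homeomorphism of $\Gamma\backslash\Gamma\widetilde H_g$. Applying Theorem~\ref{equi1} with $g_\vecq$ in place of $g$, with $X=\Gamma\backslash\Gamma H_{g_\vecq}=\Gamma\backslash\Gamma\widetilde H_g$, and with the $\vecv$-independent function $(\vecv,p)\mapsto\tilde f(p)$, one obtains
\[
\lim_{t\to\infty}\int_{\S_1^{d-1}}\tilde f\bigl(\varphi_{g_\vecq}(K(\vecv)\Phi^t)\bigr)\,d\lambda(\vecv)=\int_{\Gamma\backslash\Gamma\widetilde H_g}\tilde f\,d\mu_{\widetilde H_g}.
\]
By the displayed identity, $\tilde f(\varphi_{g_\vecq}(K(\vecv)\Phi^t))=f\circ\varphi_g((1_d,\vecq)K(\vecv)\Phi^t)$, so the left-hand side is the limit in \eqref{SLEQASLFORGENERICXPROPCOR1RES}; and by right-invariance of $\mu_{\widetilde H_g}$ the right-hand side equals $\int_{\Gamma\backslash\Gamma\widetilde H_g}f\,d\mu_{\widetilde H_g}$, which is the desired conclusion.

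This is essentially a bookkeeping reduction, so I do not expect a genuine obstacle. The only substantive ingredient — and the reason the null exceptional set $\fS$ cannot be dropped — is the identity $H_{g_\vecq}=\widetilde H_g$ for $\vecq\notin\fS$ from Proposition~\ref{SLEQASLFORGENERICXPROP}; this is also precisely what makes $\tilde f$ a legitimate bounded continuous function on the same homogeneous space $X$ to which Theorem~\ref{equi1}, applied at the point $g_\vecq$, refers.
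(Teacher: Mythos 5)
Your proposal is correct and follows essentially the same route as the paper: invoke Proposition \ref{SLEQASLFORGENERICXPROP} to get the null set $\fS$ with $H_{g(1_n,(\vecq,\bn))}=\widetilde H_g$ off $\fS$, apply Theorem \ref{equi1} at the shifted base point $g_\vecq=g(1_n,(\vecq,\bn))$, and transfer the result back via the identity $\varphi_g\bigl((1_d,\vecq)(A,\bn)\bigr)=\varphi_{g_\vecq}(A)\,\varphi_g\bigl((1_d,\vecq)\bigr)$ together with right $\widetilde H_g$-invariance of $\mu_{\widetilde H_g}$. The substitution via $\tilde f(p)=f\bigl(p\,\varphi_g((1_d,\vecq))\bigr)$ that you spell out is exactly the ``easy substitution argument'' the paper alludes to.
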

\begin{proof}
By Proposition \ref{SLEQASLFORGENERICXPROP} there is a set $\fS\subset\R^d$ of Lebesgue measure zero such that
$H_{g(1_n,(\vecq,\bn))}=\widetilde H_g$ holds for every $\vecq\in\R^d\setminus\fS$.
Hence by Theorem \ref{equi1}, for any $\vecq\in\R^d\setminus\fS$,
$f\in \C_b(\Gamma\backslash\Gamma \widetilde H_g)$ and any 
Borel probability measure $\lambda$ on $\S_1^{d-1}$ which is absolutely continuous with respect to Lebesgue measure,
we have
\begin{align}
\lim_{t\to\infty} \int_{\S_1^{d-1}} f\circ\varphi_{g(1_n,(\vecq,\bn))}(K(\vecv)\Phi^t ) \,d\lambda(\vecv)
= \int_{\Gamma\backslash\Gamma\widetilde H_g} f \, d\mu_{\widetilde H_g} .
\end{align}
Now the desired result follows by 
an easy substitution argument (similar to what we did in the proof of Theorem \ref{PARTITHM5P3GEN}),
using
\begin{align}
\varphi_{g(1_n,(\vecq,\bn))}(A)
=\varphi_g\bigl((1_d,\vecq)(A,\bn)\bigr)\varphi_g\bigl((1_d,-\vecq)\bigr),
\qquad\forall A\in\SL(d,\R).
\end{align}
\end{proof}

\begin{thm}\label{equi2}
Fix $g\in G$, $f\in \C_b(\Gamma\backslash\Gamma \widetilde H_g)$, a Borel probability measure $\Lambda$ on $\T^1(\RR^d)$ 
which is absolutely continuous with respect to Lebesgue measure, and $s_0>0$. Then
\begin{equation}\label{equi2RES}
\int_{\T^1(\RR^d)} f\circ\varphi_g\big((1_d,s\vecq) K(\vecv)\Phi^t\big) \,d\Lambda(\vecq,\vecv)
\to\int_{\Gamma\backslash\Gamma \widetilde H_g} f \, d\mu_{\widetilde H_g}
\qquad\text{as }\: t\to\infty,
\end{equation}
uniformly with respect to all $s\geq s_0$.
\end{thm}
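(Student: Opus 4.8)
The plan is to deduce Theorem~\ref{equi2} from Theorem~\ref{SLEQASLFORGENERICXPROPCOR1}: first the convergence for each fixed $s$ (a short consequence of disintegration and dominated convergence), and then the upgrade to uniformity in $s\geq s_0$, which is the substantive part. Write $L:=\int_{\Gamma\backslash\Gamma\widetilde H_g} f\,d\mu_{\widetilde H_g}$ for the right-hand side of \eqref{equi2RES}. For fixed $s$, disintegrate $\Lambda$ over the position variable, $d\Lambda(\vecq,\vecv)=d\Lambda_1(\vecq)\,d\lambda_\vecq(\vecv)$, where $\Lambda_1$ is the (absolutely continuous) first marginal and, for $\Lambda_1$-almost every $\vecq$, the conditional measure $\lambda_\vecq$ on $\S_1^{d-1}$ is absolutely continuous with respect to Lebesgue measure. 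Since $\fS$ has Lebesgue measure zero, so does $s^{-1}\fS$, and hence $s\vecq\notin\fS$ for $\Lambda_1$-almost every $\vecq$; for each such $\vecq$, Theorem~\ref{SLEQASLFORGENERICXPROPCOR1}, applied with $\lambda_\vecq$ in place of $\lambda$ and with the point $s\vecq$, gives $\int_{\S_1^{d-1}} f\circ\varphi_g\bigl((1_d,s\vecq)K(\vecv)\Phi^t\bigr)\,d\lambda_\vecq(\vecv)\to L$ as $t\to\infty$. These inner integrals are bounded in absolute value by $\|f\|_\infty$, so dominated convergence yields $\int_{\T^1(\RR^d)} f\circ\varphi_g\bigl((1_d,s\vecq)K(\vecv)\Phi^t\bigr)\,d\Lambda(\vecq,\vecv)\to L$.

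To obtain uniformity in $s$ I would first reduce to $\Lambda$ of product type. Both sides of \eqref{equi2RES} are linear in $\Lambda$, and
\[
\Bigl|\,\int f\circ\varphi_g\bigl((1_d,s\vecq)K(\vecv)\Phi^t\bigr)\,d\Lambda-\int f\circ\varphi_g\bigl((1_d,s\vecq)K(\vecv)\Phi^t\bigr)\,d\Lambda'\,\Bigr|\leq\|f\|_\infty\,\|\Lambda-\Lambda'\|_{\mathrm{var}}
\]
uniformly in $s,t$, with a similar bound for the constant $L\cdot\Lambda(\T^1(\RR^d))$; since finite linear combinations of product densities are dense in $\L^1(\RR^d\times\S_1^{d-1})$, it suffices to treat $\Lambda=\nu\times\lambda$ with $\nu$ an absolutely continuous probability measure on $\RR^d$, which we may further take to have continuous, compactly supported density $\psi$, and with $\lambda$ as in the statement, which (by a partition of unity on $\S_1^{d-1}$ and the smoothness of $K$ away from one point) we may assume supported in a chart on which $\vecv=\vecv(\vecx)$, $\vecx\in D\subset\RR^{d-1}$, is a diffeomorphism with $K(\vecv(\vecx))=E_1(\vecx)$ smooth; denote by $\lambda_D$ the (absolutely continuous) image of $\lambda$. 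Now fix $s_1>s_0$. On the compact range $s_0\leq s\leq s_1$ the map $s\mapsto\nu^{(s)}$ (where $\nu^{(s)}$ has density $\vecq\mapsto s^{-d}\psi(\vecq/s)$) is continuous in total variation, by continuity of dilations on $\L^1(\RR^d)$; since $\int f\circ\varphi_g((1_d,s\vecq)K(\vecv)\Phi^t)\,d(\nu\times\lambda)=\int f\circ\varphi_g((1_d,\vecq)K(\vecv)\Phi^t)\,d(\nu^{(s)}\times\lambda)$, the functions $s\mapsto(\text{this integral})$ are equicontinuous on $[s_0,s_1]$, uniformly in $t$, so the fixed-$s$ convergence above upgrades to uniform convergence on $[s_0,s_1]$.

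For $s\geq s_1$ I would use the identity $(1_d,s\vecq)K(\vecv)\Phi^t=K(\vecv)\Phi^t\cdot(1_d,\,s\vecq K(\vecv)\Phi^t)$ together with $\varphi_g((1_d,\vecy))\in\varphi_g(\ASL(d,\RR))\subset\widetilde H_g$, so that $f\circ\varphi_g((1_d,s\vecq)K(\vecv)\Phi^t)=f\bigl(\varphi_g(K(\vecv)\Phi^t)\cdot\varphi_g((1_d,s\vecq K(\vecv)\Phi^t))\bigr)$. In coordinates $\vecq\leftrightarrow(r,\vecz)=\vecq K(\vecv)\in\RR\times\RR^{d-1}$ one has $s\vecq K(\vecv)\Phi^t=(sre^{-(d-1)t},\,s\vecz e^t)$, and the integral in \eqref{equi2RES} takes the form $\int_D f_t^{(s)}\bigl(\vecx,\varphi_g(E_1(\vecx)\Phi^t)\bigr)\,d\lambda_D(\vecx)$, where $f_t^{(s)}(\vecx,p)$ is an average, against the density induced by $\psi$, of $f\bigl(p\cdot\varphi_g((1_d,(sre^{-(d-1)t},\bn)))\cdot\varphi_g((1_d,(0,s\vecz e^t)))\bigr)$. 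As $t\to\infty$ the first translation factor tends to the identity, while the second runs, as $(r,\vecz)$ ranges over a fixed bounded set, over right translates of $p$ by the expanding unipotent subgroup $\{\varphi_g((1_d,(0,\vecw))):\vecw\in\RR^{d-1}\}$ of $\widetilde H_g$, over a region of size $\asymp se^t$; in particular $\{f_t^{(s)}\}$ is uniformly bounded by $\|f\|_\infty$. I would then apply Corollary~\ref{PARTICOR5P4GEN} in the form incorporating an auxiliary family of test functions (as provided by Theorem~\ref{PARTITHM5P3GEN}); a standard diagonal argument shows that the conclusion holds uniformly in $s\geq s_1$ provided $f_t^{(s)}\to f_\infty$ as $t\to\infty$, uniformly on compacta and uniformly in $s\geq s_1$, for some $f_\infty$ independent of $s$. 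Since $f_\infty$ is then $s$-independent, the limit of the left-hand side of \eqref{equi2RES} is $s$-independent, and by the fixed-$s$ case it equals $L$; combining the ranges $[s_0,s_1]$ and $[s_1,\infty)$ gives the theorem.

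The hard part is exactly this last point: the convergence $f_t^{(s)}\to f_\infty$, uniformly on compacta in $(\vecx,p)$ and uniformly in $s\geq s_1$, with $f_\infty$ continuous and independent of $s$. For a fixed base point $p$ this is the equidistribution of an expanding piece of a unipotent orbit, which follows from Ratner's theorems; the content is the two uniformities together with the continuity of $f_\infty$. Uniformity over $p$ in compact sets should be handled by the methods of \cite[Sec.~5]{partI}, being of the same nature as the uniform-in-initial-data statements there; and uniformity in $s$ is favourable rather than obstructive, since enlarging $s$ only enlarges the averaging region $\asymp se^t$, so that the slowest convergence occurs at $s=s_1$, where the dilation by $\Phi^t$ alone already produces the required spreading in the $d-1$ expanding translation directions — the same spreading already exploited in Theorem~\ref{SLEQASLFORGENERICXPROPCOR1}. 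The continuity of $f_\infty$ (equivalently, control of the $p$-dependence of the relevant unipotent orbit closures along $\varphi_g(E_1(\vecx)\Phi^t)$) is the point that requires the most care, and is where I expect the main technical effort to go.
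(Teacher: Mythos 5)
Your fixed-$s$ argument and the treatment of the compact range $[s_0,s_1]$ are fine and match the standard route. The genuine gap is in the range $s\geq s_1$, and it is precisely the step you defer: you need $f_t^{(s)}(\vecx,p)$ --- the average of $f\bigl(p\,\varphi_g((1_d,\vecy))\bigr)$ over $\vecy$ in the box $s(\supp\psi)K(\vecv(\vecx))\Phi^t$ --- to converge as $t\to\infty$, uniformly over $p$ in compacta of $\Gamma\backslash\Gamma\widetilde H_g$ and uniformly in $s\geq s_1$, to a \emph{continuous}, $s$-independent limit $f_\infty$, since Theorem \ref{PARTITHM5P3GEN} and Corollary \ref{PARTICOR5P4GEN} require exactly that of the auxiliary family. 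This is not just unproven; in the form needed it is false. For fixed $s$ the averaging box has width $\asymp s e^{-(d-1)t}$ in one direction and $\asymp s e^{t}$ in the others, so the $t\to\infty$ limit of $f_t^{(s)}(\vecx,p)$, when it exists, is the average of $f$ over the closure of the orbit of $p$ under the $(d-1)$-parameter translation subgroup $\{\varphi_g((1_d,(0,\vecw)))\col\vecw\in\R^{d-1}\}$ alone. Every $p$ lies on a closed orbit (a torus) of the full translation group $\{\varphi_g((1_d,\vecy))\col\vecy\in\R^d\}$, and the closure of the $(d-1)$-parameter suborbit inside that torus jumps discontinuously with $p$: for special $p$ it is a proper subtorus, for generic $p$ it is everything. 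Hence the pointwise limit depends discontinuously on $p$ and in general differs from $\int f\,d\mu_{\widetilde H_g}$, so no continuous $s$-independent $f_\infty$ with locally uniform convergence exists, and the family $\{f_t^{(s)}\}$ cannot be fed into Corollary \ref{PARTICOR5P4GEN}. (A smaller point: your claim that the contracted translation component $s r e^{-(d-1)t}$ ``tends to the identity'' is not uniform over unbounded $s$.) What you would need is equidistribution of expanding unipotent translates uniformly in the base point on compacta, which is exactly what fails; your heuristic that larger $s$ only helps misses that the obstruction is the $p$-dependence, not the size of the averaging region.

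For comparison, the paper obtains the uniformity in $s$ by a different mechanism which avoids base-point issues altogether: it decomposes the spatial integral over unit cubes $\vecm+[0,1]^d$, $\vecm\in\Z^d$, and folds each cube back into a fixed fundamental parallelogram $C$ of $\scrV/(\scrL\cap\scrV)$ by an element $(1_n,-\veca)\in\Gamma$ with $\veca M_g\in\scrV$; the leftover internal translation $(1_n,(\bn,\vecc_2)M_g^{-1})$ commutes with $\varphi_1(\ASL(d,\R))$ and lies in $\widetilde H_g$ (Lemma \ref{SCRVINTILDEHGLEM}), so it can be absorbed into the test function, producing a compact (Arzela--Ascoli) family $\scrF$ of translated test functions and a compact family $\scrK\subset\L^1$ of localized rescaled densities. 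The fixed-$s$ statement (with $s=1$), applied uniformly over these two compact families, then gives \eqref{equi2RES} with error $O(\ve)$ uniformly in $s\geq s_0$, because the number of contributing cubes is $O(s^d)$ against the weight $s^{-d}$. Some device of this kind, eliminating the dependence on the base point, is what your argument is missing.
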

\begin{proof}
For fixed $s>0$, \eqref{equi2RES} follows from Corollary \ref{SLEQASLFORGENERICXPROPCOR1} by a standard argument
using Lebesgue's Bounded Convergence Theorem, cf.\ the proof of \cite[Cor.\ 9.4]{partI}.
In order to prove uniformity with respect to $s$ we will use a compactness argument together with the fact that
$\varphi_1(\ASL(d,\R))$ commutes with all $(1_n,(\bn,\R^m))$ (cf.\ \eqref{equi2PF2} below).

Let us write $g=(M_g,\vecv_g)\in G$, set $\scrL=\Z^nM_g$, and let $\scrA,\scrA^\circ,\scrV$ be as in the introduction.
Let $C\subset\scrV$ be a closed fundamental parallelogram for $\scrV/(\scrL\cap\scrV)$.
Note that we may assume without loss of generality that $f$ has compact support,
since the extension to the more general case of bounded continuous $f$ can then be done by a standard approximation argument.
For $\vecw\in\scrV$ we define the function $f_\vecw\in\C_c(\Gamma\backslash\Gamma\widetilde H_g)$ through
\begin{align}
f_\vecw(p)=f\bigl(p(1_n,\vecw M_g^{-1})\bigr).
\end{align}
This definition is ok by Lemma \ref{SCRVINTILDEHGLEM}.
Let $\scrF$ be the closure of %
$\{f_\vecw\col\vecw\in\{\bn\}\times\pi_\intl(C)\}$ 
in $\C_b(\Gamma\backslash\Gamma\widetilde H_g)$ (with the supremum norm);
then by the Arzela-Ascoli Theorem and using the compactness of $\supp f$ and of $\pi_\intl(C)$,
we see that $\scrF$ is compact.

By the Radon-Nikodym Theorem we have $d\Lambda(\vecq,\vecv)=\lambda(\vecq,\vecv)\,d\vecq\,d\vecv$
for some $\lambda\in\L^1(\T^1(\R^d))$, and since $\C_c(\T^1(\R^d))$ is dense in $\L^1(\T^1(\R^d))$
we may assume without loss of generality that $\lambda\in \C_c(\T^1(\R^d))$.
For $\vecm\in\R^d$, $s>0$ and $\vecc\in\R^d$ we define the function
$\nu_{\vecc,\vecm,s}\in\L^1(\T^1(\R^d))$ through
\begin{align}
\nu_{\vecc,\vecm,s}(\vecq,\vecv)=\begin{cases}\lambda\bigl(s^{-1}(\vecq-\vecc+\vecm),\vecv\bigr)
&\text{if }\: \vecq-\vecc\in[0,1]^d
\\
0&\text{otherwise.}
\end{cases}
\end{align}
Let $\scrK$ be the closure of the family 
\begin{align}\label{equi2PF1}
\{\nu_{\vecc,\vecm,s}\col\vecc\in\pi(C),\:\vecm\in\R^d,\: s\geq s_0\}
\end{align}
in $\L^1(\T^1(\R^d))$.
We claim that $\scrK$ is compact.
To see this we first note that since $\lambda\in \C_c(\T^1(\R^d))$, 
the family $\scrK'=\{\nu_{\bn,\vecm,s}|_{[0,1]^d\times\S_1^{d-1}}\col\vecm\in\R^d,\: s\geq s_0\}$
is uniformly bounded and equicontinuous, and hence by the Arzela-Ascoli Theorem the closure of $\scrK'$ in
$\C([0,1]^d\times\S_1^{d-1})$ (with the supremum norm) is compact.
But every function $\mu$ in the family \eqref{equi2PF1} is given by the formula
$\mu(\vecq,\vecv)=I(\vecq-\vecc\in[0,1]^d)\nu(\vecq-\vecc,\vecv)$ for some
$\vecc\in\pi(C)$ and some $\nu\in\scrK'$,
where $I(\cdot)$ is the indicator function,
and the fact that $\scrK$ is compact follows easily from the compactness of $\pi(C)$,
the compactness of $\overline{\scrK'}$, and the fact that the $\L^1$-norm is subsumed by the supremum norm
for our compactly supported functions.

Now let $\ve>0$ be given. We have already noted that \eqref{equi2RES} holds for fixed $s$,
and applying this with $s=1$ and using the compactness of the families $\scrF$ and $\scrK$
and the fact that $\int_{\Gamma\backslash\Gamma \widetilde H_g} f_\vecw \, d\mu_{\widetilde H_g}
=\int_{\Gamma\backslash\Gamma \widetilde H_g} f \, d\mu_{\widetilde H_g}$ for each $\vecw\in\scrV$,
we conclude that there is some $T>0$ such that for all 
$t\geq T$, $\vecw\in\{\bn\}\times\pi_\intl(C)$ and $\nu\in\scrK$, we have
\begin{align}\label{equi2PF4}
\left|\int_{\T^1(\RR^d)} f_\vecw\circ\varphi_g\big((1_d,\vecq) K(\vecv)\Phi^t\big) \,\nu(\vecq,\vecv)\,
d\vecq\,d\vecv
-\int_{\T^1(\RR^d)} \nu(\vecq,\vecv)\,d\vecq\,d\vecv
\int_{\Gamma\backslash\Gamma \widetilde H_g} f \, d\mu_{\widetilde H_g}\right|<\ve.
\end{align}

Now for given $t\geq T$ and $s\geq s_0$ we note that 
\begin{align}\notag
&\int_{\T^1(\RR^d)} f\circ\varphi_g\big((1_d,s\vecq) K(\vecv)\Phi^t\big) \,d\Lambda(\vecq,\vecv)
\\\label{equi2PF3}
&=s^{-d}\sum_{\vecm\in\Z^d}\int_{[0,1]^d}\int_{\S_1^{d-1}}
f\circ\varphi_g\bigl((1_d,\vecq+\vecm)K(\vecv)\Phi^t\big)\,\lambda(s^{-1}(\vecq+\vecm),\vecv)\,d\vecv\,d\vecq.
\end{align}
For each $\vecm\in\Z^d$, since $C$ is a fundamental region for $\scrV/(\scrL\cap\scrV)$,
there is some $\veca\in\Z^n$ such that $\veca M_g\in\scrV$ and $\vecc:=(\vecm,\bn)-\veca M_g\in C$.
Let us write $\vecc=(\vecc_1,\vecc_2)\in\R^d\times\R^m=\R^n$. 
Using the fact that $(1_n,(\bn,\vecc_2))$ commutes with all $\varphi_1(\ASL(d,\R))$ we find that
\begin{align}\label{equi2PF2}
(1_n,-\veca)\,\varphi_g\bigl((1_d,\vecq+\vecm)K(\vecv)\Phi^t\bigr)
=\varphi_g\bigl((1_d,\vecq+\vecc_1)K(\vecv)\Phi^t\bigr)(1_n,(\bn,\vecc_2)M_g^{-1}).
\end{align}
Hence since $(1_n,-\veca)\in\Gamma$ we get that \eqref{equi2PF3} is equal to
\begin{multline}
s^{-d}\sum_{\vecm\in\Z^d}\int_{[0,1]^d}\int_{\S_1^{d-1}}
f_{(\bn,\vecc_2)}\circ\varphi_g\bigl((1,\vecq+\vecc_1)K(\vecv)\Phi^t\big)
\,\lambda(s^{-1}(\vecq+\vecm),\vecv)\,d\vecv\,d\vecq
\\
=s^{-d}\sum_{\vecm\in\Z^d}\int_{\T^1(\R^d)}f_{(\bn,\vecc_2)}\circ\varphi_g\bigl((1,\vecq)K(\vecv)\Phi^t\big)
\,\nu_{\vecc_1,\vecm,s}(\vecq,\vecv)\,d\vecv\,d\vecq.
\end{multline}
Here remember that $\vecc_1,\vecc_2$ depend on $\vecm$. %
By construction we have $\vecc\in C$, and thus $\vecc_1\in\pi(C)$ and $\vecc_2\in\pi_\intl(C)$, for each $\vecm\in\Z^d$.
Hence \eqref{equi2PF4} applies, and using this for each $\vecm\in\Z^d$ with $\nu_{\vecc_1,\vecm,s}\not\equiv0$ we conclude that
\begin{align}\notag
\biggl|\int_{\T^1(\RR^d)} f\circ\varphi_g\big((1,s\vecq) K(\vecv)\Phi^t\big) \,d\Lambda(\vecq,\vecv)
\hspace{150pt}
\\\label{equi2PF5}
-s^{-d}\sum_{\substack{\vecm\in\Z^d\\(\nu_{\vecc_1,\vecm,s}\not\equiv0)}}
\int_{\T^1(\R^d)}\nu_{\vecc_1,\vecm,s}(\vecq,\vecv)\,d\vecq\,d\vecv
\int_{\Gamma\backslash\Gamma \widetilde H_g} f \, d\mu_{\widetilde H_g}\biggr|
\hspace{30pt}
\\\notag
\leq s^{-d}\cdot\#\bigl\{\vecm\in\Z^d\col\nu_{\vecc_1,\vecm,s}\not\equiv0\bigr\}\cdot \ve.
\end{align}
Here we obviously have
\begin{align}
s^{-d}\sum_{\substack{\vecm\in\Z^d\\(\nu_{\vecc_1,\vecm,s}\not\equiv0)}}
\int_{\T^1(\R^d)}\nu_{\vecc_1,\vecm,s}(\vecq,\vecv)\,d\vecq\,d\vecv
=s^{-d}\int_{\T^1(\R^d)}\lambda(s^{-1}\vecq,\vecv)\,d\vecq\,d\vecv=1.
\end{align}
Furthermore we note that $\nu_{\vecc_1,\vecm,s}\not\equiv0$ can only hold when
$\vecm\in -[0,1]^d+s\cdot C_\lambda$,
where $C_\lambda\subset\R^d$ is the image of $\supp(\lambda)\subset\T^1(\R^d)$ under the %
projection $\T^1(\R^d)\to\R^d$.
Hence  \eqref{equi2PF5} implies that for all $t\geq T$,
\begin{align}
\biggl|\int_{\T^1(\RR^d)} f\circ\varphi_g\big((1,s\vecq) K(\vecv)\Phi^t\big) \,d\Lambda(\vecq,\vecv)
-\int_{\Gamma\backslash\Gamma \widetilde H_g} f \, d\mu_{\widetilde H_g}\biggr|\leq K\ve,
\end{align}
where $K$ is a constant which only depends on $\supp(\lambda)$ and $s_0$.
This concludes the proof.
\end{proof}

\section{Proof of the Siegel integral formula for quasicrystals}
\label{secSiegel}

\subsection{The Siegel integral formula}

We state the Siegel integral formula \eqref{SIF} in a slightly more general form, using affine lattices in $\RR^n$ rather than quasicrystals in $\RR^d$. Let $g\in G$ be given, and set $\scrL=\Z^ng$ and $\scrA=\overline{\pi_\intl(\scrL)}$.
For $f\in\L^1(\RR^d\times\scrA,\vol_{\R^d}\times\mu_\scrA)$,
we define the Siegel transform $\widehat f:\Gamma\backslash\Gamma H_g\to\R$ through
\begin{align}
\widehat f(\Gamma h)=\sum_{\vecm\in\ZZ^nhg\setminus\{\vecnull\}} f(\vecm).
\end{align}
(Recall that $\Z^nhg\subset\R^d\times\scrA$ for all $h\in H_g$; cf.\ Prop.\ \ref{SCRAPROP}.
It follows from the proof of the following theorem that the sum is absolutely convergent for $\mu_g$-almost every 
$\Gamma h\in\Gamma\backslash\Gamma H_g$.)

\begin{thm}\label{SIEGELTHM}
For any $f\in\L^1(\RR^d\times\scrA,\vol_{\R^d}\times\mu_\scrA)$,
\begin{equation}\label{SIF2}
\int_{\Gamma\backslash\Gamma H_g}  \widehat f(p) \,d\mu_g(p) = 
\delta_{d,m}(\scrL)\int_{\RR^d\times\scrA} f(\vecx,\vecy)\, d\!\vol_{\RR^d}(\vecx)\,d\mu_\scrA(\vecy).
\end{equation}
\end{thm}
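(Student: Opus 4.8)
The plan is to reduce the statement, via an unfolding argument, to the classical Siegel formula for the space of lattices (or rather a version of it adapted to the Ratner subgroup $H_g$). The first reduction is a standard density argument: since both sides of \eqref{SIF2} are linear in $f$ and continuous with respect to the $\L^1(\R^d\times\scrA)$-norm (for the right-hand side this is immediate, for the left-hand side it follows once we have an $\L^1$-bound, which the computation below will also supply), it suffices to prove \eqref{SIF2} for $f$ in a convenient dense subclass, for instance $f\in\C_c(\R^d\times\scrA)$, or even just for $f$ of the product form $f(\vecx,\vecy)=f_1(\vecx)f_2(\vecy)$ with $f_1\in\C_c(\R^d)$ and $f_2\in\C_c(\scrA)$ whose span is dense.

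The heart of the matter is the unfolding. Write $\Gamma_{H_g}=\Gamma\cap H_g$, so that $\Gamma\backslash\Gamma H_g\cong\Gamma_{H_g}\backslash H_g$ and $\mu_g$ is the $H_g$-invariant probability measure on this quotient. Integrating the Siegel transform, we get formally
\begin{align*}
\int_{\Gamma_{H_g}\backslash H_g}\widehat f(h)\,d\mu_g(h)
=\int_{\Gamma_{H_g}\backslash H_g}\sum_{\vecm\in\Z^nhg\setminus\{\bn\}}f(\vecm)\,d\mu_g(h).
\end{align*}
Now I would split the inner sum over $\Z^n\setminus\{\bn\}$ into orbits under the subgroup $\Gamma_1:=\{\gamma\in\Gamma^1\col\Gamma^1\gamma\subset\text{stab of the relevant structure}\}$ — more precisely, one should organize $\Z^n\setminus\{\bn\}$ according to the action of $\Gamma_{H_g}$ (acting on the left on $\Z^n$, via $h\mapsto\gamma h$ for $\gamma\in\Gamma_{H_g}\cap G^1$) and unfold the integral over $\Gamma_{H_g}\backslash H_g$ to an integral over a union of orbit-stabilizers against $H_g$. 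The key structural input is that $H_g$ contains $\varphi_g(\SL(d,\R))$ and that $\Gamma\backslash\Gamma H_g$ is precisely the closure of $\Gamma\backslash\Gamma\varphi_g(\SL(d,\R))$; combined with the description in Proposition \ref{SCRAPROP} that $\Z^nhg\subset\R^d\times\scrA$, this should let us identify, after unfolding, each orbit contribution with an integral of the shape $\int_{\R^d\times\scrA}f\,d(\vol_{\R^d}\times\mu_\scrA)$ times a combinatorial/volume factor. The point set $\Z^nhg\cap(\{\bn\}\times\R^m)$ — which by Proposition \ref{ZEROSETINVPROP} is $\mu_g$-a.e.\ equal to the fixed lattice $\scrL_0\cap(\{\bn\}\times\R^m)$ inside $\scrV^\perp$ — is what forces the appearance of $\mu_\scrA$ rather than full $m$-dimensional Lebesgue measure, and tracking its contribution correctly is what produces the constant $\delta_{d,m}(\scrL)=1/\vol(\scrV/(\scrL\cap\scrV))$.

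I expect the main obstacle to be making the unfolding rigorous despite the fact that $H_g$ is a general (possibly complicated) Ratner subgroup rather than all of $\SL(n,\R)$: one cannot simply quote Siegel's classical formula, and one must instead use the equidistribution/invariance properties of $\mu_g$ directly. The cleanest route is probably to exploit that $\mu_g$ is $H_g$-invariant and that $H_g$ acts transitively enough on the relevant orbits of lattice points in $\R^d\times\scrA$; concretely, one uses the $\varphi_g(\SL(d,\R))$-action to move any primitive vector of $\Z^nhg$ with nonzero physical-space projection to a standard position, and absorbs the internal-space coordinate using the fact (Lemma \ref{SCRVINTILDEHGLEM}-type reasoning, or Proposition \ref{SCRAPROP}) that $\overline{\pi_\intl(\Z^nhg)}=\scrA$ for a.e.\ $h$. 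A secondary technical point, needed to justify interchanging sum and integral and to get the $\L^1$-bound, is a tail estimate showing $\sum_{\vecm}\,|f(\vecm)|$ is $\mu_g$-integrable; this should follow from applying the formula first to $|f|$ replaced by a nonnegative bump and a monotone-convergence argument, exactly as in the classical proof of the Siegel mean-value theorem. Once the unfolding and the constant are in place, the general $f\in\L^1(\R^d\times\scrA,\vol_{\R^d}\times\mu_\scrA)$ case follows by the density argument above, and the absolute convergence of the Siegel transform for $\mu_g$-a.e.\ $h$ is read off from finiteness of $\int\widehat{|f|}\,d\mu_g$.
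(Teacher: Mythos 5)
Your overall strategy (classical Siegel-type unfolding over $\Gamma_{H_g}$-orbits of $\Z^n\setminus\{\bn\}$) is genuinely different from the paper's argument, and it is precisely at the step you flag as "the heart of the matter" that the proposal has a real gap rather than a technicality. For a general Ratner subgroup $H_g$ nothing like the classical orbit picture is available: $\Gamma\cap H_g$ need not act with finitely many (or in any way classifiable) orbits on the relevant lattice vectors, there is no reason the stabilizer $\mathrm{Stab}_{H_g}(\vecm_0)$ of an orbit representative intersects $\Gamma$ in a lattice (which the unfolding needs in order to factor each orbit term as a covolume times an orbit integral), and $H_g$ need not act transitively on a full-measure subset of $\R^d\times\scrA$ in a way that identifies each orbit measure with $\vol_{\R^d}\times\mu_\scrA$ -- when $\scrA$ has several connected components the orbits come in countably many "sheets" whose weights are exactly the quantity in question. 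Most importantly, even granting all of that, your proposal gives no mechanism for evaluating the sum of the resulting combinatorial/volume factors; "tracking its contribution correctly is what produces the constant $\delta_{d,m}(\scrL)$" is an assertion, not an argument, and it is not clear it can be made into one without re-deriving the counting input that the constant really comes from. Proposition \ref{ZEROSETINVPROP}, which you invoke for the appearance of $\mu_\scrA$, does not do this job: it controls the lattice vectors with vanishing physical component, not the distribution of $\pi_\intl(\Z^nhg)$ inside $\scrA$.

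The paper avoids unfolding entirely. It regards $E\mapsto\int_{\Gamma\backslash\Gamma H_g}\widehat\chi_E\,d\mu_g$ as a Borel measure on $\R^d\times\scrA$, uses $\varphi_g(\SL(d,\R))\subset H_g$ to see that for each fixed window $\scrW$ this measure is $\SL(d,\R)$-invariant in the physical variable, hence equal to $\kappa(\scrW)\vol_{\R^d}$ by the uniqueness of the invariant measure, and then pins down $\kappa(\scrW)$ by two counting arguments: an upper bound via lower semicontinuity of $\widehat\chi_E$, the Portmanteau theorem and the equidistribution of expanding translates $\varphi_g(k\Phi^{\log R})$ (Corollary \ref{PARTICOR5P4GEN}, resting on Shah/Ratner) combined with the uniform counting asymptotics of Proposition \ref{HOFWEYLEXPLPROP}; and a matching lower bound via the $\mu_g$-a.e.\ asymptotics $\widehat{\chi_{\scrB_R^d\times\scrW}}(\Gamma h)\sim\delta_{d,m}(\scrL)\vol(\scrB_1^d)\mu_\scrA(\scrW)R^d$ (Propositions \ref{HOFWEYLEXPLPROP} and \ref{SCRAPROP}) together with dominated convergence on a large compact subset of $\Gamma\backslash\Gamma H_g$. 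In other words, the constant $\delta_{d,m}(\scrL)\mu_\scrA(\scrW)$ is extracted from the density of the quasilattices $\Z^nhg$ in $\R^d\times\scrA$, not from structure theory of orbits. If you want to salvage your approach, you would need to supply exactly this kind of counting/equidistribution input anyway, at which point the unfolding machinery becomes superfluous; your auxiliary points (reduction to a dense class of $f$, monotone convergence for the $\L^1$-bound and a.e.\ absolute convergence) are fine and are consistent with how the paper handles them.
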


Let us note that Theorem \ref{SIEGELRDTHM} is an immediate consequence of Theorem \ref{SIEGELTHM}.
Indeed, after a simultaneous rescaling of $\scrL$, $\scrW$ and $f$ we may assume $\delta=1$;
furthermore by taking $f(\vecx,\vecy)=I(\vecy\in\partial\scrW)$ in Theorem \ref{SIEGELTHM} and
using $\mu_\scrA(\partial\scrW)=0$ we see that we may replace $\scrW$ by $\scrW^\circ$
without affecting either side of \eqref{SIF}; thus from now on we may assume that $\scrW$ is open.
We now obtain \eqref{SIF} for a given $f_0\in\L^1(\R^d)$, by setting in \eqref{SIF2} 
\begin{equation}
f(\vecx,\vecy)=f_0(\vecx)\; I(\vecy\in\scrW),
\end{equation}
and using the fact that the restriction of $\pi$ to $\{\vecy\in\Z^nhg\col\pi_\intl(\vecy)\in\scrW\}$ is injective
for $\mu_g$-almost all $h\in H_g$ (cf.\ Proposition \ref{INJECTIVITYPRESERVEDPROP}).

\begin{proof}[Proof of Theorem \ref{SIEGELTHM}]
For $E$ running through the family of Borel sets of $\R^d\times\scrA$, the map
$E\mapsto\int_{\Gamma\backslash\Gamma H_g}\widehat\chi_E\,d\mu_g$
defines a Borel measure on $\R^d\times\scrA$, and the theorem is equivalent to the statement that %
this Borel measure equals $\delta_{d,m}(\scrL)\vol_{\R^d}\times\mu_\scrA$.
We start by considering sets of the form
$E=\scrB_r^d\times\scrW$, where $\scrW$ is any bounded open subset of $\scrA$ with $\mu_\scrA(\partial\scrW)=0$.
Note that $\widehat\chi_E$ is nonnegative and lower semicontinuous, since $E$ is open.
Hence, by Corollary \ref{PARTICOR5P4GEN} and the Portmanteau theorem 
(cf.,\ e.g., \cite[Thm.\ 1.3.4(iv)]{wellner}),
\begin{align*}
\int_{\Gamma\backslash\Gamma H_g}\widehat\chi_E(h)\,d\mu_g(h)
\leq\liminf_{R\to\infty}\int_{\SO(d)}\widehat\chi_E\bigl(\varphi_g(k\Phi^{\log R})\bigr)\,dk,
\end{align*}
where $dk$ denotes Haar measure on $\SO(d)$, normalized by $\int_{\SO(d)}dk=1$.
However, for any $R>0$ we have, by the Monotone Convergence Theorem,
\begin{align*}
\int_{\SO(d)}\widehat\chi_E\bigl(\varphi_g(k\Phi^{\log R})\bigr)\,dk
=\sum_{\vecm\in\Z^ng\setminus\{\bn\}} F_{E,R}(\vecm),
\end{align*}
where $F_{E,R}:\R^d\times\scrA\to[0,\infty]$ is given by
\begin{align*}
F_{E,R}(\vecx,\vecy)=\chi_\scrW(\vecy)\int_{\SO(d)}I\Bigl(\vecx\in \scrB_r^d\Phi^{-\log R}k^{-1}\Bigr)\,dk
=\chi_\scrW(\vecy) A_R(r^{-1}\|\vecx\|),
\end{align*}
where for $\tau>0$, $A_R(\tau)\in[0,1]$ is given by
\begin{align*}
A_R(\tau)=\frac{\vol_{\S_1^{d-1}}(\S_1^{d-1}\cap\tau^{-1}\scrB_1^d\Phi^{-\log R})}{\vol_{\S_1^{d-1}}(\S_1^{d-1})}.
\end{align*}
Let us assume $R>1$ from now on.
Note that $\scrB_1^d\Phi^{-\log R}$ is the ellipsoid $\{\vecx\col R^{-2(d-1)}x_1^2+R^2x_2^2+\ldots+R^2x_d^2<1\}$;
using this we see that $A_R(\tau)=1$ for $0<\tau\leq R^{-1}$, $A_R(\tau)=0$ for $\tau\geq R^{d-1}$,
and $A_R(\tau)$ is continuous and decreasing.
($A_R(\tau)$ may be computed explicitly in terms of an incomplete Beta function; however we do not need this.)
It follows from the above formula for $F_{E,R}(\vecx,\vecy)$ that 
\begin{align*}
\sum_{\vecm\in\Z^n g\setminus\{\bn\}} F_{E,R}(\vecm)
=\sum_{\vecm\in\scrL\setminus\{\bn\}} F_{E,R}(\vecm)
=\int_{R^{-1}}^{R^{d-1}}\#\Bigl((\scrB_{r\tau}^d\times\scrW)\cap\scrL\setminus\{\bn\}\Bigr)\,(-dA_R(\tau)),
\end{align*}
where the last integral is a Riemann-Stieltjes integral.

However because $\scrP(\scrW,\scrL)$ is uniformly discrete, there exists some $\tau_0>0$ (which depends on $\scrL$, $r$, $\scrW$) 
such that $\#((\scrB_{r\tau}^d\times\scrW)\cap\scrL\setminus\{\bn\})=0$ for all $\tau<\tau_0$.
Also, by Proposition \ref{HOFWEYLEXPLPROP}, for any given $\ve>0$ there is some $\tau_1>\tau_0$ such that for all 
$\tau\geq\tau_1$,
\begin{align*}
\#((\scrB_{r\tau}^d\times\scrW)\cap\scrL\setminus\{\bn\})\leq(1+\ve)C_\scrW r^d\tau^d,
\quad\text{where }\: C_\scrW=\delta_{d,m}(\scrL)\vol(\scrB_1^d)\mu_\scrA(\scrW).
\end{align*}
Hence for $R$ sufficiently large we have
\begin{align*}
\sum_{\vecm\in\Z^n\setminus\{\bn\}} F_{E,R}(\vecm g) &\leq O(1)\int_{\tau_0}^{\tau_1}(-dA_R(\tau))
+(1+\ve) C_\scrW r^d\int_{\tau_1}^{R^{d-1}}\tau^d(-dA_R(\tau))
\\
&=O\Bigl(A_R(\tau_0)\Bigr)+(1+\ve) C_\scrW r^d\biggl(\tau_1^d A_R(\tau_1)+
d\int_{\tau_1}^{R^{d-1}} \tau^{d-1}A_R(\tau)\,d\tau\biggr).
\end{align*}
It is clear from the definition of $A_R(\tau)$ that $A_R(\tau)\ll (R\tau)^{1-d}$ for all $\tau\geq2R^{-1}$,
and $A_R(\tau)\leq1$ for all $\tau$; hence for large $R$ the above is
\begin{align*}
=O(R^{1-d})+(1+\ve)C_\scrW dr^d\int_0^{R^{d-1}}\tau^{d-1}A_R(\tau)\,d\tau.
\end{align*}
But it is clear from the definition of $A_R(\tau)$ that 
\begin{align*}
\int_0^{R^{d-1}}\tau^{d-1}A_R(\tau)\,d\tau=\frac{\vol_{\R^d}(\scrB_1\Phi^{-\log R})}{\vol_{\S_1^{d-1}}(\S_1^{d-1})}=d^{-1}.
\end{align*}
Taking now $R\to\infty$ and then $\ve\to0$, we conclude
\begin{align}\label{SIEGELTHMPF5}
\int_{\Gamma\backslash\Gamma H_g}\widehat\chi_E(h)\,d\mu_g(h)\leq\delta_{d,m}(\scrL)(\vol_{\R^d}\times\mu_\scrA)(E),
\end{align}
for any set $E$ of the form $E=\scrB_r^d\times\scrW$.

Next, using $\varphi_g(\SL(d,\R))\subset H_g$ we see that our Borel measure 
$E\mapsto\int_{\Gamma\backslash\Gamma H_g}\widehat\chi_E\,d\mu_g$ is invariant under $\{\smatr A00{1_m}\col A\in\SL(d,\R)\}$;
also \eqref{SIEGELTHMPF5} shows that the measure is finite on any compact set.
Hence by \cite[Lemma 1.4]{Raghunathan}, for any fixed bounded Borel set $\scrW\subset\scrA$,
the Borel measure $V\mapsto\int_{\Gamma\backslash\Gamma H_g}\widehat{\chi_{V\times\scrW}}\,d\mu_g$ on $\R^d$
equals $\kappa(\scrW)\vol_{\R^d}$ for some finite constant $\kappa(\scrW)\geq0$.
Clearly our task is to prove $\kappa(\scrW)=\delta_{d,m}(\scrL)\mu_\scrA(\scrW)$,
and it suffices to prove that this holds for any bounded open subset $\scrW\subset\scrA$ with $\mu_\scrA(\partial\scrW)=0$.
Let us fix such a set $\scrW$.
By \eqref{SIEGELTHMPF5} we have $\kappa(\scrW)\leq\delta_{d,m}(\scrL)\mu_\scrA(\scrW)$.

Let $\ve>0$ be given. Let $K$ be a compact subset of $\Gamma\backslash\Gamma H_g$ with $\mu_g(K)>1-\ve$.
Since $K$ is compact, there is some $\delta>0$ such that $\|\vecm_1-\vecm_2\|\geq\delta$ for all $\Gamma h\in K$ and 
any $\vecm_1\neq\vecm_2\in\Z^nhg$.
It follows that there is a constant $C>0$ such that
\begin{align}\label{SIEGELTHMPF6}
0\leq\widehat{\chi_{\scrB_R^d\times\scrW}}(\Gamma h)\leq C(1+R)^d,\qquad\forall R>0, \: \Gamma h\in K.
\end{align}
Using also Proposition \ref{HOFWEYLEXPLPROP} and Proposition \ref{SCRAPROP} 
we conclude that for $\mu_g$-almost every $\Gamma h\in K$,
\begin{align}\label{SIEGELTHMPF7}
\frac{\widehat{\chi_{\scrB_R^d\times\scrW}}(\Gamma h)}{R^d}
\to\delta_{d,m}(\scrL)\vol(\scrB_1^d)\mu_\scrA(\scrW),
\qquad\text{as }\: R\to\infty.
\end{align}
Using \eqref{SIEGELTHMPF6} and \eqref{SIEGELTHMPF7} and the Lebesgue Dominated Convergence Theorem,
we conclude
\begin{align*}
\lim_{R\to\infty}R^{-d}\int_K\widehat{\chi_{\scrB_R^d\times\scrW}}(\Gamma h)\,d\mu_g(h)
=\mu_g(K)\delta_{d,m}(\scrL)\vol(\scrB_1^d)\mu_\scrA(\scrW).
\end{align*}
But here $\mu_g(K)>1-\ve$ and $\ve$ is arbitrarily small.
Hence we conclude that $\kappa(\scrW)\geq\delta_{d,m}(\scrL)\mu_\scrA(\scrW)$, and we are done.
\end{proof}

Let us note that Theorem \ref{SIEGELTHM} immediately implies a similar formula for $\widetilde H_g$:
\begin{cor}\label{SIEGELTILDECOR}
For any $f\in\L^1(\RR^d\times\scrA,\vol_{\R^d}\times\mu_\scrA)$,
\begin{equation}\label{SIEGELTILDECORRES}
\int_{\Gamma\backslash\Gamma \widetilde H_g} \sum_{\vecm\in\Z^nhg} f(\vecm) \,d\mu_{\widetilde H_g}(h) = 
\delta_{d,m}(\scrL)\int_{\RR^d\times\scrA} f(\vecx,\vecy)\, d\!\vol_{\RR^d}(\vecx)\,d\mu_\scrA(\vecy).
\end{equation}
\end{cor}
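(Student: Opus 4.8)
The plan is to deduce Corollary \ref{SIEGELTILDECOR} from Theorem \ref{SIEGELTHM} by replacing $g$ with a perturbation $g'$ for which $H_{g'}=\widetilde H_g$. By Proposition \ref{SLEQASLFORGENERICXPROP} we may fix $\vecx_0\in\R^d\times\{\bn\}$ such that $H_{g'}=\widetilde H_g$, where $g':=g(1_n,\vecx_0)$. Since $(1_n,\vecx_0)$ acts on $\R^n$ by translation by $\vecx_0$ and $\pi_\intl(\vecx_0)=\bn$, we have $\Z^ng'=\Z^ng+\vecx_0$, whence $\overline{\pi_\intl(\Z^ng')}=\scrA$ and $(\Z^ng')_0=(\Z^ng)_0$; in particular $\delta_{d,m}(\scrL')=\delta_{d,m}(\scrL)$ and the Haar measure $\mu_\scrA$ entering the formula is unchanged. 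Also $\mu_{g'}=\mu_{H_{g'}}=\mu_{\widetilde H_g}$.

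Applying Theorem \ref{SIEGELTHM} with $g'$ in place of $g$, to an arbitrary $f\in\L^1(\R^d\times\scrA,\vol_{\R^d}\times\mu_\scrA)$, gives
\[
\int_{\Gamma\backslash\Gamma\widetilde H_g}\Bigl(\sum_{\vecm\in\Z^nhg'\setminus\{\vecnull\}}f(\vecm)\Bigr)\,d\mu_{\widetilde H_g}(h)=\delta_{d,m}(\scrL)\int_{\R^d\times\scrA}f(\vecx,\vecy)\,d\!\vol_{\R^d}(\vecx)\,d\mu_\scrA(\vecy).
\]
Now $\Z^nhg'=\Z^nhg(1_n,\vecx_0)=\Z^nhg+\vecx_0$ for every $h$, so after the substitution $\vecm\mapsto\vecm+\vecx_0$ the inner sum becomes $\sum_{\vecm\in\Z^nhg,\:\vecm\neq-\vecx_0}f(\vecm+\vecx_0)$. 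Granting (see the next paragraph) that $-\vecx_0\notin\Z^nhg$ for $\mu_{\widetilde H_g}$-almost every $h$, this equals $\sum_{\vecm\in\Z^nhg}f(\vecm+\vecx_0)$ for a.e.\ $h$. Finally, given any $\phi\in\L^1(\R^d\times\scrA)$, apply the displayed identity with $f=\phi(\,\cdot\,-\vecx_0)$; this is merely a shift in the physical coordinate, which preserves $\L^1$-membership and, since $\vol_{\R^d}$ is translation-invariant, leaves the right-hand integral unchanged. The result is exactly \eqref{SIEGELTILDECORRES}.

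It remains to check that $-\vecx_0\notin\Z^nhg$ for $\mu_{\widetilde H_g}$-a.e.\ $h$; this is the only nontrivial point, and it is where the extra translational freedom of $\widetilde H_g$ over $H_g$ is used. The map $\Gamma h\mapsto\Z^nhg$ from $\Gamma\backslash\Gamma\widetilde H_g$ to discrete subsets of $\R^n$ is well defined because $\Z^n\gamma=\Z^n$ for $\gamma\in\Gamma=\ASL(n,\Z)$. For $\vecv\in\R^d$ we have $\varphi_g((1_d,\vecv))\in\widetilde H_g$ and $\varphi_g((1_d,\vecv))\,g=g(1_n,(\vecv,\bn))$, so right translation by $\varphi_g((1_d,\vecv))$ sends $\Z^nhg$ to $\Z^nhg+(\vecv,\bn)$; since $\mu_{\widetilde H_g}$ is right-$\widetilde H_g$-invariant, the quantity $\mu_{\widetilde H_g}(\{h:\vecz\in\Z^nhg\})$ depends only on $\vecz$ modulo translations of $\R^d\times\{\bn\}$. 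Each $\Z^nhg$ is a coset of a full-rank lattice, hence discrete in $\R^n$, so it meets the compact slab $-\vecx_0+([0,1]^d\times\{\bn\})$ in a finite set; thus $\int_{-\vecx_0+([0,1]^d\times\{\bn\})}I(\vecz\in\Z^nhg)\,d\vecz=0$ identically in $h$, and integrating over $h$ and applying Tonelli's theorem together with the invariance just noted forces $\mu_{\widetilde H_g}(\{h:-\vecx_0\in\Z^nhg\})=0$. (Alternatively, this bad set is $\bigcup_{\vecm\in\Z^n}\{h:\vecm hg=-\vecx_0\}$, a countable union of proper real-analytic subsets of $\widetilde H_g$, hence $\mu_{\widetilde H_g}$-null.) Everything else in the argument is a direct substitution, so this measure-zero verification is the main obstacle.
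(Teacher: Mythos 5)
Your proposal is correct and follows essentially the same route as the paper: pick $\vecz\in\R^d\times\{\bn\}$ with $H_{g(1_n,\vecz)}=\widetilde H_g$ via Proposition \ref{SLEQASLFORGENERICXPROP}, apply Theorem \ref{SIEGELTHM} to the translated test function, and restore the term at $\vecnull$ using that $-\vecz\notin\Z^nhg$ for $\mu_{\widetilde H_g}$-almost every $h$, which the paper likewise deduces from the invariance under $\varphi_g((1_d,\vecv))\in\widetilde H_g$. The only difference is that you spell out this last measure-zero verification (via Tonelli, or the countable union of proper analytic subsets) in more detail than the paper does.
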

(Recall that $\Z^nhg\subset\R^d\times\scrA$ for all $h\in\widetilde H_g$; cf.\ Propositions \ref{SCRAPROP}
and \ref{SLEQASLFORGENERICXPROP}.)
\begin{proof}
Let $g\in G$ be given.
By Proposition \ref{SLEQASLFORGENERICXPROP} we can find $\vecz\in\R^d\times\{\bn\}$ such that $H_{g'}=\widetilde H_g$
with $g'=g(1_n,\vecz)$. Set $\scrL'=\Z^ng'=\scrL+\vecz$; then $\overline{\pi_\intl(\scrL')}=\scrA$,
since $\vecz\in\R^d\times\{\bn\}$.
Define $f_0\in\L^1(\RR^d\times\scrA,\vol_{\R^d}\times\mu_\scrA)$ through $f_0(\vecx,\vecy)=f((\vecx,\vecy)-\vecz)$.
Now by Theorem \ref{SIEGELTHM} applied to $g'$ and $f_0$ we have
\begin{align}\label{SIEGELTILDECORPF1}
\int_{\Gamma\backslash\Gamma \widetilde H_g}\sum_{\vecm\in(\Z^nhg+\vecz)\setminus\{\bn\}} f_0(\vecm) \,d\mu_{\widetilde H_g}(h) 
=\delta_{d,m}(\scrL)\int_{\RR^d\times\scrA} f_0(\vecx,\vecy)\, d\!\vol_{\RR^d}(\vecx)\,d\mu_\scrA(\vecy).
\end{align}
But using the fact that $\varphi_g((1_d,\vecx))\in\widetilde H_g$, $\forall\vecx\in\R^d$, we see that
for $\mu_{\widetilde H_g}$-almost every $h\in\widetilde H_g$ we have
$\bn\notin\Z^nhg+\vecz$. Hence the left hand side of \eqref{SIEGELTILDECORPF1} remains
unchanged if we replace $\sum_{\vecm\in(\Z^nhg+\vecz)\setminus\{\bn\}}$ by $\sum_{\vecm\in\Z^nhg+\vecz}$.
After this modification, the formula \eqref{SIEGELTILDECORPF1} is exactly the same as \eqref{SIEGELTILDECORRES}.
\end{proof}

\begin{remark}\label{XIINFTYLIMREM}
As we noted in Section \ref{SIEGELANNSEC},
the continuity for $\xi<\infty$ of the limit distributions $F_\scrP$ and $F_{\scrP,\vecq}$ in
Theorems \ref{Thm0}, \ref{Thm1} and \ref{Thm2} is an immediate consequence of 
Theorem \ref{SIEGELTHM}, Corollary \ref{SIEGELTILDECOR}, and the formulas in Theorem \ref{THM4}.
Let us now also prove the continuity at $\xi=\infty$, i.e.\ the fact that
$F_\scrP(\xi)\to0$ and $F_{\scrP,\vecq}(\xi,r)\to0$ as $\xi\to\infty$:
Since $\mu_g$ and $\mu_{\widetilde H_g}$ are $\SL(d,\R)$-invariant,
we may replace $\fZ_\xi$ by $\xi^{1/d}\fZ_1$ in \eqref{Thm3eq}, and 
replace $\fZ_\xi+r\vece_d$ by $\xi^{1/d}(\fZ_1+r\vece_d)$ in \eqref{Thm3eqA}.
As before we write $\scrP=\scrP(\scrW,\scrL)$, $\scrL=\Z^n g$, $g\in G$.
By Proposition \ref{HOFWEYLEXPLPROP} and Proposition \ref{SCRAPROP}, for $\mu_g$-almost every 
$h\in H_g$ there is some $\xi_0=\xi_0(h)>0$ such that
$\xi^{1/d}(\fZ_1+r\vece_d)\cap\scrP(\scrW,\Z^nhg)\neq\emptyset$ for all $\xi\geq\xi_0$ and all $r\in\R$.
By \eqref{Thm3eqA} this implies that $F_{\scrP,\vecq}(\xi,r)\to0$ as $\xi\to\infty$, 
uniformly with respect to $r\in\R_{\geq0}$.
The fact that $F_\scrP(\xi)\to0$ as $\xi\to\infty$ is proved in the same way, using also
the fact that $\widetilde H_g=H_{g'}$ for an appropriate $g'$, cf.\ Proposition \ref{SLEQASLFORGENERICXPROP}.

We will present detailed estimates of the tail of the limit distributions $F_\scrP$ and $F_{\scrP,\vecq}$ elsewhere;
cf.\ \cite{partIII} for the case when $\scrP$ is a lattice.
\end{remark}

\section{Proof of the limit theorems for the free path lengths}
\label{MAINPROOFSSEC}

\subsection{Proof of Theorem \ref{Thm1}}

Assume $\scrP=\scrP(\scrW,\scrL)$ and fix $g\in G^1$ and $\delta>0$ so that $\scrL=\delta^{1/n}\Z^ng$.
In fact, by an appropriate scaling of the length units, we can assume without loss of generality that $\delta=1$.

Given $(\vecq,\vecv)\in\T^1(\scrK_\rho)$ and $\xi>0$ we have
$\rho^{d-1}\tau_1(\vecq,\vecv;\rho)\geq\xi$ if and only if there is no $\scrP$-point
in the open $\rho$-neighbourhood in $\R^d$ of the line segment from
$\vecq$ to $\vecq+\rho^{1-d}\xi\vecv$. The last statement {\em implies} that $\scrP$ is disjoint from the
open cylinder $\underline{\fZ}$ of radius $\rho$ about the line segment from $\vecq$ to $\vecq+\rho^{1-d}\xi\vecv$,
and is {\em implied} whenever $\scrP$ is disjoint from the 
open cylinder $\widetilde{\fZ}$ of radius $\rho$ about the line segment from $\vecq$ to $\vecq+(\rho^{1-d}\xi+\rho)\vecv$. 
Therefore
\begin{multline}\label{mainThm1strongerPF1}
\lambda(\{ \vecv\in\S_1^{d-1}  : \widetilde{\fZ}\cap\scrP=\emptyset\})
\\ \leq \lambda(\{ \vecv\in\S_1^{d-1} : \rho^{d-1} \tau_1(\vecq,\vecv;\rho) \geq \xi \})
\\ 
\leq\lambda(\{ \vecv\in\S_1^{d-1} : \underline{\fZ}\cap\scrP=\emptyset\}) .
\end{multline}
Thus it suffices to prove that the left and right hand side of the inequality \eqref{mainThm1strongerPF1} converge to $F_\scrP(\xi)$ as $\rho\to0$ for almost every fixed $\vecq$. We will only discuss the right hand case. The left hand side can be reduced to the right hand case: we bound the left hand side from below by replacing $\widetilde{\fZ}$ by a slightly longer $\underline{\fZ}$ of length $\rho^{1-d}(\xi+\ve)$, for any $\ve>0$,
and then use $\lim_{\ve\to0}F_\scrP(\xi+\ve)=F_\scrP(\xi)$;
recall that the continuity of $F_\scrP(\xi)$ is an immediate consequence of Theorem \ref{SIEGELRDTHM}.

We have $\underline{\fZ}=\fZ_\xi \Phi^{\log\rho}K(\vecv)^{-1}(1_d,\vecq)$,
where $\fZ_\xi$ is the open cylinder of radius $1$ about the line segment from $\bn$ to $\xi\vece_1$ as defined in \eqref{zyl}.
Now %
\begin{align}\notag
&\underline{\fZ}\cap\scrP=\emptyset
\\ \label{mainThm1strongerPF1a}
&\Longleftrightarrow \Bigl(\bigl(\fZ_\xi \Phi^{\log\rho}K(\vecv)^{-1}(1_d,\vecq)\bigr)\times\scrW\Bigr)\cap\scrL=\emptyset
\\ \notag
&\Longleftrightarrow \bigl(\fZ_\xi\times\scrW\bigr)\cap\scrL \varphi_1((1_d,-\vecq)K(\vecv)\Phi^{-\log\rho})=\emptyset.
\\\notag
&\Longleftrightarrow \bigl(\fZ_\xi\times\scrW\bigr)\cap\Z^n\varphi_g((1_d,-\vecq)
K(\vecv)\Phi^{-\log\rho})g
=\emptyset.
\end{align}
Since $\scrW$ is bounded and $\mu_\scrA(\partial\scrW)=0$, $\scrW$ is Jordan measurable,
and so is the product set $\fZ_\xi\times\scrW$ as a subset of $\R^d\times\scrA$.
Hence given any $\ve>0$ there exist nonnegative continuous functions $a^-$ and $a^+$ on $\R^d\times\scrA$ 
satisfying $a^-\leq\chi_{\fZ_\xi\times\scrW}\leq a^+$
and 
\begin{align}\label{THM1pf4}
\vol_{\R^d}\times\mu_\scrA\bigl(\supp(a^+-a^-)\bigr)
<\frac{\ve}{\delta_{d,m}(\scrL)}.
\end{align}
Now define $f^+$ and $f^-\in\C_b(\Gamma\backslash\Gamma\widetilde H_g)$ through
\begin{align*}
f^\pm(\Gamma h)=\max\biggl(0,1-\sum_{\vecm\in\Z^nhg}a^\pm(\vecm)\biggr).
\end{align*}
(Again recall that $\Z^nhg\subset\R^d\times\scrA$ for all $h\in\widetilde H_g$, by
Propositions \ref{SCRAPROP} and \ref{SLEQASLFORGENERICXPROP}.)
By construction, %
\begin{align}\label{THM1pf3}
f^+(\Gamma h)\leq I\Bigl(\bigl(\fZ_\xi\times\scrW\bigr)\cap\Z^n hg=\emptyset\Bigr)\leq f^-(\Gamma h),
\qquad\forall h\in\widetilde H_g.
\end{align}
Hence by \eqref{mainThm1strongerPF1a} and Theorem \ref{SLEQASLFORGENERICXPROPCOR1},
for $\vecq$ outside a set of Lebesgue measure zero,
\begin{align}\label{THM1pf1}
\limsup_{\rho\to0}\lambda(\{ \vecv\in\S_1^{d-1} : \underline{\fZ}\cap\scrP=\emptyset\})
\leq\int_{\Gamma\backslash\Gamma\widetilde H_g} f^-(\Gamma h)\,d\mu_{\widetilde H_g}(h)
\end{align}
and
\begin{align}\label{THM1pf2}
\liminf_{\rho\to0}\lambda(\{ \vecv\in\S_1^{d-1} : \underline{\fZ}\cap\scrP=\emptyset\})
\geq\int_{\Gamma\backslash\Gamma\widetilde H_g} f^+(\Gamma h)\,d\mu_{\widetilde H_g}(h).
\end{align}
But note that we have equality throughout in \eqref{THM1pf3} for any $\Gamma h\in\Gamma\backslash\Gamma\widetilde H_g$ 
such that $a^+(\vecm)=a^-(\vecm)$ holds for all $\vecm\in\Z^nhg$. %
By \eqref{THM1pf4} and Corollary \ref{SIEGELTILDECOR}, the set of $\Gamma h\in\Gamma\backslash\Gamma\widetilde H_g$
for which this fails has measure less than $\ve$.
Note also that $f^-(\Gamma h)-f^+(\Gamma h)\leq 1$ for \textit{all} $h\in\widetilde H_g$.
Therefore the right hand sides of \eqref{THM1pf1} and \eqref{THM1pf2} are both within $\ve$ of
\begin{align*}
\mu_{\widetilde H_g}\Bigl(\Bigl\{\Gamma h\in\Gamma\backslash\Gamma\widetilde H_g\col
(\fZ_\xi\times\scrW\bigr)\cap\Z^n hg=\emptyset\Bigr\}\Bigr).
\end{align*}
Hence, since $\ve>0$ is arbitrary, and since $(\fZ_\xi\times\scrW\bigr)\cap\Z^n hg=\emptyset$ 
if and only if $\scrP(\scrW,\Z^nhg)\cap\fZ_\xi=\emptyset$, we conclude:
\begin{align*}
\lim_{\rho\to0}\lambda(\{ \vecv\in\S_1^{d-1} : \underline{\fZ}\cap\scrP=\emptyset\})
=\mu_{\widetilde H_g}\bigl(\bigl\{\Gamma h\in\Gamma\backslash\Gamma \widetilde H_g\col
\scrP(\scrW,\Z^nhg)\cap\fZ_\xi=\emptyset\bigr\}\bigr)
=F_\scrP(\xi).
\end{align*}
Cf.\ \eqref{Thm3eq} regarding the last equality.
This completes the proof of Theorem \ref{Thm1}.

\subsection{Proof of Theorem \ref{Thm0}}

The proof is virtually the same as for Theorem \ref{Thm1}, with Theorem \ref{SLEQASLFORGENERICXPROPCOR1} replaced by Theorem \ref{equi2}.

\subsection{Proof of Theorem \ref{Thm2}}

We again assume $\scrP=\scrP(\scrW,\scrL)$ and $\scrL=\Z^ng$ with $g\in G^1$.
By mimicking the argument leading to \eqref{mainThm1strongerPF1} we get:
\begin{multline}\label{mainThm2PF1}
\lambda(\{ \vecv\in\S_1^{d-1}\col \widetilde\fZ \cap\scrP=\emptyset\}) \\
\leq\lambda(\{ \vecv\in\S_1^{d-1}\col \rho^{d-1} \tau_1(\vecq+\rho\vecbeta(\vecv),\vecv;\rho) \geq \xi \})
\\
\leq\lambda(\{ \vecv\in\S_1^{d-1}\col \underline\fZ\cap\scrP=\emptyset\}),
\end{multline}
where now $\underline\fZ$ is the open cylinder of radius $\rho$ about the line segment from 
$\vecq+\rho\vecbeta(\vecv)$ to $\vecq+\rho\vecbeta(\vecv)+\rho^{1-d}\xi\vecv$
and $\widetilde\fZ$ is the open cylinder of radius $\rho$ about the line segment from 
$\vecq+\rho\vecbeta(\vecv)$ to $\vecq+\rho\vecbeta(\vecv)+(\rho^{1-d}\xi+\rho)\vecv$.
Hence, as in the proof of Theorem \ref{Thm1}, it suffices to prove that 
$\lambda(\{ \vecv\in\S_1^{d-1}\col \underline\fZ\cap\scrP=\emptyset\})$
converges to the right hand side of \eqref{Thm2eq} in Theorem \ref{Thm2},
and that the convergence is uniform with respect to the choice of $\vecq\in\scrP$.
Furthermore
we may here replace $\underline\fZ$ by the open cylinder $\fZ'$ of radius $\rho$ about the line segment from 
$\vecq+\rho\Proj_{\{\vecv\}^\perp}\vecbeta(\vecv)$ to
$\vecq+\rho\Proj_{\{\vecv\}^\perp}\vecbeta(\vecv)+\rho^{1-d}\xi\vecv$.
Using $(\Proj_{\{\vecv\}^\perp}\vecbeta(\vecv))K(\vecv)=(\vecbeta(\vecv)K(\vecv))_{\perp}$
we compute
\begin{align}
\fZ'=\fZ_{\xi,\vecv}\Phi^{\log\rho}K(\vecv)^{-1}(1_d,\vecq)
\end{align}
where
\begin{align}\label{FZXIVDEF}
\fZ_{\xi,\vecv}:=
\fZ_\xi+\bigl(\vecbeta(\vecv)K(\vecv)\bigr)_{\perp},
\end{align}
with $\fZ_\xi$ as before.
From this we get, just as in the proof of Theorem \ref{Thm1}:
\begin{align}\label{mainThm2PF3}
\fZ'\cap\scrP=\emptyset
\Longleftrightarrow
\bigl(\fZ_{\xi,\vecv}\times\scrW\bigr)\cap\Z^n\varphi_g((1_d,-\vecq)K(\vecv)\Phi^{-\log\rho})g
=\emptyset.
\end{align}
Since $\vecq\in\scrP$, there is some $\vecy\in\scrL$ such that %
$\vecq=\pi(\vecy)$.
Now $\vecy=\vecm g$ for some $\vecm\in\Z^n$,
and $(\vecq,\bn)=\vecy-(\bn,\pi_\intl(\vecy))=\vecm g-(\bn,\pi_\intl(\vecy))$.
Hence for any $h\in\ASL(d,\R)$ we have:
\begin{align}
\varphi_g((1_d,-\vecq)h)g
&=(1_n,-\vecm)\varphi_g(h)g\bigl(1_n,(\bn,\pi_\intl(\vecy))\bigr),
\end{align}
and we can rewrite \eqref{mainThm2PF3} as:
\begin{align}\label{FZPCAPPINTERP}
\fZ'\cap\scrP=\emptyset
\Longleftrightarrow
\bigl(\fZ_{\xi,\vecv}\times\scrW_\vecy\bigr)\:\cap\: 
\Z^n\varphi_g(K(\vecv)\Phi^{-\log\rho})g=\emptyset,
\end{align}
where $\scrW_\vecy:=\scrW-\pi_\intl(\vecy)$.
(Note that $\scrW_\vecy\subset\scrA$ since $\pi_\intl(\vecy)\in\scrA$.)

Now let $\ve>0$ be given, and let $a^-$ and $a^+$ be as in the proof of Theorem \ref{Thm1}.
For any $\vecv\in\S_1^{d-1}$ and $\vecz\in\scrA$ we define $a^-_{\vecv,\vecz}$ and $a^+_{\vecv,\vecz}$ to be
the appropriate translates of $a^-$ and $a^+$:
\begin{align*}
a^\pm_{\vecv,\vecz}(\vecx)=a^\pm\Bigl(\vecx+\bigl(-(\vecbeta(\vecv)K(\vecv))_\perp,\vecz\bigr)\Bigr).
\end{align*}
Note that $a^-_{\vecv,\vecz}(\vecx)$ and $a^+_{\vecv,\vecz}(\vecx)$ are jointly continuous in $\vecv,\vecz,\vecx$,
and for any $\vecv\in\S_1^{d-1}$ we have
$a^-_{\vecv,\pi_\intl(\vecy)}\leq\chi_{\fZ_{\xi,\vecv}\times\scrW_\vecy}\leq a^+_{\vecv,\pi_\intl(\vecy)}$.
We now define $f^+$ and $f^-\in\C_b(\S_1^{d-1}\times\Gamma\backslash\Gamma H_g)$ through
\begin{align*}
f^\pm(\vecv,\Gamma h)=\max\biggl(0,1-\sum_{\vecm\in\Z^nhg\setminus\{\bn\}}a_{\vecv,\pi_\intl(\vecy)}^\pm(\vecm)\biggr).
\end{align*}
Then, using the fact that $\bn\notin\fZ_{\xi,\vecv}$ for all $\vecv\in\S_1^{d-1}$,
\begin{align}\label{THM2pf3}
f^+(\vecv,\Gamma h)\leq I\Bigl(\bigl(\fZ_{\xi,\vecv}\times\scrW_\vecy\bigr)\cap\Z^nhg=\emptyset\Bigr)
\leq f^-(\vecv,\Gamma h),\qquad\forall (\vecv,h)\in \S_1^{d-1}\times H_g.
\end{align}
Hence by \eqref{FZPCAPPINTERP} and Theorem \ref{equi1} we have
\begin{align}\label{THM2pf1}
\limsup_{\rho\to0}\lambda(\{ \vecv\in\S_1^{d-1} : \fZ'\cap\scrP=\emptyset\})
\leq\int_{\S_1^{d-1}\times\Gamma\backslash\Gamma H_g} f^-(\vecv,\Gamma h)\,d\lambda(\vecv)\,d\mu_g(h)
\end{align}
and
\begin{align}\label{THM2pf2}
\liminf_{\rho\to0}\lambda(\{ \vecv\in\S_1^{d-1} : \fZ'\cap\scrP=\emptyset\})
\geq\int_{\S_1^{d-1}\times\Gamma\backslash\Gamma H_g} f^+(\vecv,\Gamma h)\,d\lambda(\vecv)\,d\mu_g(h).
\end{align}
But note that we have equality throughout in \eqref{THM2pf3} for any $(\vecv,\Gamma h)$ 
such that $a_{\vecv,\pi_\intl(\vecy)}^-(\vecm)=a_{\vecv,\pi_\intl(\vecy)}^+(\vecm)$
for all $\vecm\in\Z^n hg\setminus\{\bn\}$.
The set of $(\vecv,\Gamma h)$ for which this fails has measure bounded from above by
\begin{align*}
\int_{\S_1^{d-1}}\int_{\Gamma\backslash\Gamma H_g}\sum_{\vecm\in\Z^n hg\setminus\{\bn\}}
I\Bigl(\vecm\in\supp\bigl(a_{\vecv,\pi_\intl(\vecy)}^+-a_{\vecv,\pi_\intl(\vecy)}^-\bigr)\Bigr)\,d\mu_g(h)\,d\lambda(\vecv)
\\
=\delta_{d,m}(\scrL)\int_{\S_1^{d-1}}\vol_{\R^d}\times\mu_\scrA
\Bigl(\supp\bigl(a_{\vecv,\pi_\intl(\vecy)}^+-a_{\vecv,\pi_\intl(\vecy)}^-\bigr)\Bigr)\,d\lambda(\vecv)<\ve,
\end{align*}
where we used Theorem \ref{SIEGELTHM} and \eqref{THM1pf4} together with obvious translational invariance.
Therefore the right hand sides of \eqref{THM2pf1} and \eqref{THM2pf2} are both within $\ve$ of
\begin{align*}
\int_{\S_1^{d-1}}\mu_g\Bigl(\Bigl\{\Gamma h\in\Gamma\backslash\Gamma H_g\col
\bigl(\fZ_{\xi,\vecv}\times\scrW_\vecy\bigr)\cap\Z^nhg=\emptyset\Bigr\}\Bigr)\,d\lambda(\vecv).
\end{align*}
Hence, since $\ve>0$ is arbitrary, and since
$(\fZ_{\xi,\vecv}\times\scrW_\vecy\bigr)\cap\Z^nhg=\emptyset$ if and only if 
$\scrP(\Z^nhg,\scrW_\vecy)\cap\fZ_{\xi,\vecv}=\emptyset$, we conclude:
\begin{align}\label{THM2pf5}
\lim_{\rho\to0}\lambda(\{ \vecv\in\S_1^{d-1} : \fZ'\cap\scrP=\emptyset\})
=\int_{\S_1^{d-1}}\mu_g\Bigl(\Bigl\{\Gamma h\in\Gamma\backslash\Gamma H_g\col\scrP(\Z^nhg,\scrW_\vecy)\cap
\fZ_{\xi,\vecv}=\emptyset\Bigr\}\Bigr)\,d\lambda(\vecv).
\end{align}

Recall that we have fixed $g\in G^1$ so that $\scrL=\Z^ng$.
Now set $g'=g(1_n,(-\vecq,\bn))$, so that $\scrL-(\vecq,\bn)=\Z^ng'$
(i.e.\ $g'$ corresponds to ``$g$'' in Theorem \ref{THM4}).
Then
\begin{align*}
g'=(1_n,-\vecm)\,g\,\bigl(1_n,(\bn,\pi_\intl(\vecy))\bigr),
\end{align*}
with $\vecy,\vecm$ as above. Using %
also the fact that $(1_n,(\bn,\vecz))$ commutes with $\varphi_1(\SL(d,\R))$ for any $\vecz\in\R^m$, we now have
$\varphi_{g'}(\SL(d,\R))=\phi_\vecm(\varphi_g(\SL(d,\R)))$,   %
where $\phi_\vecm$ denotes conjugation with $(1_n,-\vecm)$,
i.e.\ $\phi_\vecm(h)=(1_n,-\vecm)h(1_n,\vecm)$ for $h\in G$.
Using $(1_n,\vecm)\in\Gamma$ it follows that $H_{g'}=\phi_\vecm(H_g)$.
Note also that for any $h\in H_g$ we have
\begin{align*}
\Z^n\phi_\vecm(h)g'=\Z^nhg+\pi_\intl(\vecy),
\end{align*}
and hence the right hand side of \eqref{THM2pf5} equals
\begin{align*}
&\int_{\S_1^{d-1}}\mu_{g'}\Bigl(\Bigl\{\Gamma h\in\Gamma\backslash\Gamma H_{g'}\col\scrP(\Z^nhg',\scrW)\cap
\fZ_{\xi,\vecv}=\emptyset\Bigr\}\Bigr)\,d\lambda(\vecv).
\end{align*}
Here the integrand is unchanged if $\fZ_{\xi,\vecv}$ is replaced by $\fZ_{\xi,\vecv}A$ for any $A\in\SL(d,\R)$
(which may depend on $\vecv$), since $\mu_{g'}$ is right $\varphi_{g'}(\SL(d,\R))$-invariant.
In particular, using this with %
an appropriate $A\in\SO(d)$, we see that $\fZ_{\xi,\vecv}$
may be replaced by $\fZ_\xi+\|(\vecbeta(\vecv)K(\vecv))_\perp\|\vece_d$, %
and therefore using \eqref{Thm3eqA} we see that the above expression equals the right hand side of \eqref{Thm2eq}.
Hence we have proved that \eqref{Thm2eq} holds for any \textit{fixed} $\vecq\in\scrP$.
Note that the proof in fact works more generally to show that \eqref{Thm2eq} holds for any point
$\vecq\in\pi(\scrL)$.   %

\vspace{5pt}

Finally we will prove that the convergence in \eqref{Thm2eq} holds uniformly over all $\vecq\in\scrP$,
and in fact more generally holds uniformly over all $\vecq\in\pi(\scrL\cap\pi_\intl^{-1}(B))$,
where $B$ is any given bounded subset of $\scrA$.
It follows from the previous discussion that
it suffices to prove that \eqref{THM2pf5} holds
uniformly over all $\vecq\in\pi(\scrL\cap\pi_\intl^{-1}(B))$.
(We now understand %
$\vecy$ to denote any point in $\scrL\cap(\{\vecq\}\times B)$;
this point is not necessarily uniquely determined by $\vecq$, but if there are more than one such $\vecy$ these all
yield the same value for the right hand side of \eqref{THM2pf5}.)

Because of the Jordan measurability of $\scrW$,
for any given $\ve>0$ we may choose the functions $a^-$ and $a^+$ on $\R^d\times\scrA$ in such a way that
\eqref{THM1pf4} holds, while the condition $a^-\leq\chi_{\fZ_\xi\times\scrW}\leq a^+$ is strengthened to
$a^-\leq\chi_{\fZ_\xi\times\scrW^-_\eta}$ and $\chi_{\fZ_\xi\times\scrW^+_\eta}\leq a^+$ for some $\eta=\eta(\ve)>0$,
where $\scrW^-_\eta=\scrW\setminus B(\partial\scrW,\eta)$ and $\scrW^+_\eta=\scrW\cup B(\partial\scrW,\eta)$,
with $B(\partial\scrW,\eta)$ denoting the $\eta$-neighbourhood of $\partial\scrW$ in $\scrA$.
Now since $B$ is bounded there is a finite set of points $\vecz_1,\ldots,\vecz_s\in\scrA$ such that
each $\vecz\in B$ lies in the $\eta$-neighborhood of some $\vecz_j$.
For each $j\in\{1,\ldots,s\}$ we define $f_j^\pm\in\C_b(\S_1^{d-1}\times\Gamma\backslash\Gamma H_g)$ by
\begin{align*}
f_j^\pm(\vecv,\Gamma h)=\max\biggl(0,1-\sum_{\vecm\in\Z^nhg\setminus\{\bn\}}a_{\vecv,\vecz_j}^\pm(\vecm)\biggr).
\end{align*}
By Theorem \ref{equi1} there is some $\rho_0>0$ such that for every $\rho\in(0,\rho_0]$ and every $j\in\{1,\ldots,s\}$,
\begin{align}\label{THM2unifpf2}
\biggl|\int_{\S_1^{d-1}}f_j^\pm\Bigl(\vecv,\varphi_g(K(\vecv)\Phi^{-\log\rho})\Bigr)\,d\lambda(\vecv)
-\int_{\S_1^{d-1}\times\Gamma\backslash\Gamma H_g}f_j^\pm(\vecv,\Gamma h)\,d\lambda(\vecv)\,d\mu_g(h)\biggr|<\ve.
\end{align}

We now claim that for every $\vecq\in\pi(\scrL\cap\pi_\intl^{-1}(B))$ and every $\rho\in(0,\rho_0]$,
\begin{align}\label{THM2unifpf1}
\left|\lambda(\{ \vecv\in\S_1^{d-1} : \fZ'\cap\scrP=\emptyset\})
-\int_{\S_1^{d-1}}\mu_g\Bigl(\Bigl\{\Gamma h\in\Gamma\backslash\Gamma H_g\col\scrP(\Z^nhg,\scrW_\vecy)\cap
\fZ_{\xi,\vecv}=\emptyset\Bigr\}\Bigr)\,d\lambda(\vecv)\right|<2\ve.
\end{align}
To prove this let $\vecq\in\pi(\scrL\cap\pi_\intl^{-1}(B))$ be given,
and fix a point $\vecy\in\scrL\cap(\{\vecq\}\times B)$.
We may now take $j\in\{1,\ldots,s\}$ such that $\|\pi_\intl(\vecy)-\vecz_j\|<\eta$,
and then by construction,
\begin{align}\label{THM2unifpf3}
f_j^+(\vecv,\Gamma h)\leq I\Bigl(\bigl(\fZ_{\xi,\vecv}\times\scrW_\vecy\bigr)\cap\Z^nhg=\emptyset\Bigr)
\leq f_j^-(\vecv,\Gamma h),\qquad\forall (\vecv,h)\in \S_1^{d-1}\times H_g.
\end{align}
Also for all $\vecv\in\S_1^{d-1}$ the equivalence \eqref{FZPCAPPINTERP} holds.
Combining these facts with \eqref{THM2unifpf2} we conclude that for all $\rho\in(0,\rho_0]$, 
\begin{align*}
\lambda(\{ \vecv\in\S_1^{d-1} : \fZ'\cap\scrP=\emptyset\})<
\int_{\S_1^{d-1}\times\Gamma\backslash\Gamma H_g}f_j^-(\vecv,\Gamma h)\,d\lambda(\vecv)\,d\mu_g(h)+\ve
\end{align*}
and
\begin{align*}
\lambda(\{ \vecv\in\S_1^{d-1} : \fZ'\cap\scrP=\emptyset\})>
\int_{\S_1^{d-1}\times\Gamma\backslash\Gamma H_g}f_j^+(\vecv,\Gamma h)\,d\lambda(\vecv)\,d\mu_g(h)-\ve.
\end{align*}
However by the same argument as before, using \eqref{THM2unifpf3},
both the last two integrals differ by at most $\ve$ from the right hand side of
\eqref{THM2pf5}; hence \eqref{THM2unifpf1} is proved.
Since $\ve>0$ was arbitrary, we have proved the desired uniformity.

\begin{appendix}
\setcounter{section}{1}
\setcounter{equation}{0}
\section*{Appendix: Directions in quasicrystals}

The methods developed in this paper can also be applied to understand the fine-scale statistics of directions in a cut-and-project set $\scrP$. In analogy with the problem of directions in affine lattices discussed in \cite[Sect.~2]{partI}, we consider the set $\scrP_T=\scrP\cap \scrB^d_T(c)\setminus\{\vecnull\}$ of points in $\scrP$ inside the spherical shell
\begin{equation}\label{shell}
	\scrB^d_T(c)=\{ \vecx\in\RR^d : c T \leq \|\vecx\| < T \} , \qquad 0\leq c < 1.
\end{equation}
In view of \eqref{density000}, there are asymptotically $C\vol(\scrB^d_1)\,T^d$ such points as $T\to\infty$, where 
\begin{equation}
C=\frac{(1-c^d)\mu_\scrA(\scrW)}{\vol(\scrV/(\scrL\cap\scrV))}
\end{equation}
and $\vol(\scrB^d_1)=\pi^{d/2}/\Gamma(\tfrac{d+2}{2})$ is the volume of the unit ball. 
For each $T$, we study the corresponding directions $\|\vecy\|^{-1}\vecy\in\S_1^{d-1}$ with $\vecy \in\scrP_T$, counted {\em with} multiplicity.
Again the asymptotics \eqref{density000} implies that, as $T\to\infty$, these points become uniformly distributed on $\S_1^{d-1}$. That is, for any set $\fU\subset\S_1^{d-1}$ with boundary of measure zero (with respect to the volume element $\vol_{\S_1^{d-1}}$ on $\S_1^{d-1}$) we have
\begin{equation}\label{udi}
\lim_{T\to\infty} \frac{\# \{\vecy\in\scrP_T\col\|\vecy\|^{-1}\vecy\in\fU\}}
{\#\scrP_T} 
= \frac{\vol_{\S_1^{d-1}}(\fU)}{\vol_{\S_1^{d-1}}(\S_1^{d-1})} . 
\end{equation}
Recall that $\vol_{\S_1^{d-1}}(\S_1^{d-1})=d \vol(\scrB^d_1)$.

To analyse the fine-scale statistics of the directions to points
in $\scrP_T$, we consider the probability of finding $r$ directions in a small 
open disc $\fD_T(\sigma,\vecv)\subset\S_1^{d-1}$ with random center $\vecv\in\S_1^{d-1}$.
Denote by
\begin{equation}\label{disc00}
	\scrN_{c,T}(\sigma,\vecv)=\#\{\vecy\in\scrP_T \col
\|\vecy\|^{-1}\vecy\in \fD_T(\sigma,\vecv)\}
\end{equation}
the number of points in $\fD_T(\sigma,\vecv)$.
The radius of $\fD_T(\sigma,\vecv)$ is chosen so that it has volume $\frac{\sigma d}{C T^{d}}$
with $\sigma>0$ fixed.
The reason for this volume scaling is that 
the expectation value for the counting function is asymptotically equal to 
$\sigma$: For any probability measure $\lambda$ on $\S_1^{d-1}$ with continuous density
\begin{equation}\label{expval}
\lim_{T\to\infty}	\int_{\S_1^{d-1}} \scrN_{c,T}(\sigma,\vecv)\, d\lambda(\vecv) =\sigma .
\end{equation}
This follows directly from \eqref{density000}.

\begin{thm}\label{visThm0}
Let $\scrP=\scrP(\scrL,\scrW)$ be a regular cut-and-project set for some (possibly affine) lattice $\scrL$. Choose $g\in G$ and $\delta>0$ so that $\scrL=\delta^{1/n}(\ZZ^n g)$. 
Let $\lambda$ be a Borel probability measure on $\S_1^{d-1}$ which is absolutely continuous with respect to Lebesgue measure. Then, for every $\sigma \geq 0$ and $r\in\ZZ_{\geq 0}$, the limit 
\begin{equation} \label{EDEF}
	E_{c,\scrP}(r,\sigma):=\lim_{T\to\infty} \lambda(\{ \vecv\in\S_1^{d-1} : \scrN_{c,T}(\sigma,\vecv)=r \})
\end{equation}
exists, and is given by
\begin{equation}
E_{c,\scrP}(r,\sigma)=
\mu_g(\{ \scrP'\in\fQ_g: \#( \scrP' \cap \fC(c,\sigma))= r \}) 
\end{equation}
where
\begin{equation} \label{FCCSDEF}
	\fC(c,\sigma) =\bigg\{(x_1,\ldots,x_d)\in\RR^d \col c < x_1 < 1, \: \|(x_2,\ldots,x_d)\|\leq 
\Bigl(\frac{\sigma d}{C\vol(\scrB_1^{d-1})}\Bigr)^{1/(d-1)}x_1\biggr\}.
\end{equation}
In particular, $E_{c,\scrP}(r,\sigma)$ is continuous in $\sigma$ and independent of $\lambda$.
\end{thm}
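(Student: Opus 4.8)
The plan is to follow the scheme of the proofs of Theorems \ref{Thm0}--\ref{Thm2}, with the cylinder $\fZ_\xi$ replaced by the truncated cone $\fC(c,\sigma)$ of \eqref{FCCSDEF} and the contraction $\Phi^{\log\rho}$ replaced by $\Phi^{t(T)}$ with $t(T):=\tfrac1{d-1}\log T\to\infty$. After rescaling the length unit we may assume $\delta=1$, so $\scrL=\ZZ^ng$, and we recall the smooth section $K:\S_1^{d-1}\to\SO(d)$ with $\vecv K(\vecv)=\vece_1$. Since $\pi$ transforms by right multiplication while $\pi_\intl$ is left unchanged, the linear bijection $\vecx\mapsto\vecx\,K(\vecv)\Phi^{t(T)}$ of $\RR^d$ carries $\scrP$ onto $\scrP'_{T,\vecv}:=\scrP(\scrW,\ZZ^n\varphi_g(K(\vecv)\Phi^{t(T)})g)$, and hence $\scrN_{c,T}(\sigma,\vecv)=\#(\scrP'_{T,\vecv}\cap\fC_T)$, where $\fC_T\subset\RR^d$ is the image under this map of the spherical‐shell sector $\{\vecx:cT\le\|\vecx\|<T,\ \|\vecx\|^{-1}\vecx\in\fD_T(\sigma,\vecv)\}$. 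A direct computation --- using that $\fD_T(\sigma,\vecv)$ is a geodesic disc of radius $\sim(\tfrac{\sigma d}{C\vol(\scrB_1^{d-1})})^{1/(d-1)}T^{-d/(d-1)}$ about $\vecv$, and that $\Phi^{t(T)}$ contracts the $\vece_1$-direction by the factor $T^{-1}$ and dilates the orthogonal directions by the factor $T^{1/(d-1)}$ --- shows that $\fC_T\to\fC(c,\sigma)$ in the following precise sense: for every $\ve>0$ there are bounded, Jordan measurable sets $\fC^-_\ve\subseteq\fC^+_\ve$, with $\vol_{\RR^d}(\partial\fC^\pm_\ve)=0$, such that $\fC^-_\ve\subseteq\fC_T\subseteq\fC^+_\ve$ for all sufficiently large $T$, such that $\fC^\pm_\ve$ both tend to $\fC(c,\sigma)$ as $\ve\to0$ (so that $\chi_{\fC^\pm_\ve}\to\chi_{\fC(c,\sigma)}$ Lebesgue-a.e.), and such that $\vol_{\RR^d}(\fC^+_\ve\setminus\fC^-_\ve)\to0$ as $\ve\to0$. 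This is exactly where the normalising constant in \eqref{FCCSDEF} is pinned down; when $c=0$ one needs only the extra remark that the origin is a measure-zero boundary point of these sets and is excluded from $\scrP_T$.

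Now fix $r\in\ZZ_{\ge0}$ and $\ve>0$, and for a locally finite set $\scrP'\subset\RR^d$ put
\[
u^-_{\ve,r}(\scrP'):=I(\#(\scrP'\cap\fC^-_\ve)=r=\#(\scrP'\cap\fC^+_\ve)),\qquad
u^+_{\ve,r}(\scrP'):=I(\#(\scrP'\cap\fC^-_\ve)\le r\le\#(\scrP'\cap\fC^+_\ve)),
\]
so $0\le u^-_{\ve,r}\le u^+_{\ve,r}\le1$. Viewing $u^\pm_{\ve,r}$ as functions on $X:=\Gamma\backslash\Gamma H_g$ via $\scrP'=\scrP(\scrW,\ZZ^nhg)$, the Siegel formula (Theorem \ref{SIEGELTHM}), applied to indicator functions of $\partial\fC^\pm_\ve\times\scrA$ and of $\RR^d\times\partial\scrW$ and using $\vol_{\RR^d}(\partial\fC^\pm_\ve)=0$ and $\mu_\scrA(\partial\scrW)=0$, shows that $u^\pm_{\ve,r}$ are continuous at $\mu_g$-almost every point of $X$ --- the exceptional set being the $\Gamma h$ for which $\ZZ^nhg$ has a point with $\pi$-part in $\partial\fC^\pm_\ve$ or $\pi_\intl$-part in $\partial\scrW$. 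Moreover, once $\fC^-_\ve\subseteq\fC_T\subseteq\fC^+_\ve$, monotonicity of the intersection count gives $u^-_{\ve,r}(\scrP'_{T,\vecv})\le I(\scrN_{c,T}(\sigma,\vecv)=r)\le u^+_{\ve,r}(\scrP'_{T,\vecv})$ for all large $T$.

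By Theorem \ref{equi1} with test function $f(\vecv,p)=F(p)$, $F\in\C_b(X)$, the push-forward of $\lambda$ under $\vecv\mapsto\Gamma\varphi_g(K(\vecv)\Phi^{t(T)})$ converges weakly to $\mu_{H_g}$ on $X$ as $T\to\infty$; since $u^\pm_{\ve,r}$ are bounded and $\mu_g$-a.e.\ continuous, the Portmanteau theorem (cf.\ \cite[Thm.\ 1.3.4(iv)]{wellner}) yields $\int_{\S_1^{d-1}}u^\pm_{\ve,r}(\scrP'_{T,\vecv})\,d\lambda(\vecv)\to\int_X u^\pm_{\ve,r}\,d\mu_g$. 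Combining this with the sandwich of the previous paragraph,
\[
\int_X u^-_{\ve,r}\,d\mu_g\ \le\ \liminf_{T\to\infty}\lambda(\{\vecv:\scrN_{c,T}(\sigma,\vecv)=r\})\ \le\ \limsup_{T\to\infty}\lambda(\{\vecv:\scrN_{c,T}(\sigma,\vecv)=r\})\ \le\ \int_X u^+_{\ve,r}\,d\mu_g.
\]
Letting $\ve\to0$: since $0\le u^+_{\ve,r}-u^-_{\ve,r}\le\#(\scrP'\cap(\fC^+_\ve\setminus\fC^-_\ve))$, Theorem \ref{SIEGELTHM} gives $\int_X(u^+_{\ve,r}-u^-_{\ve,r})\,d\mu_g\le\delta_{d,m}(\scrL)\mu_\scrA(\scrW)\vol_{\RR^d}(\fC^+_\ve\setminus\fC^-_\ve)\to0$, while for $\mu_g$-almost every $\scrP'$ (namely those with $\scrP'\cap\partial\fC(c,\sigma)=\emptyset$) we have $u^\pm_{\ve,r}(\scrP')\to I(\#(\scrP'\cap\fC(c,\sigma))=r)$; dominated convergence then gives $\int_X u^\pm_{\ve,r}\,d\mu_g\to\mu_g(\{\scrP'\in\fQ_g:\#(\scrP'\cap\fC(c,\sigma))=r\})$. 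Hence the limit \eqref{EDEF} exists and equals this last quantity, which plainly does not depend on $\lambda$. For continuity in $\sigma$: $\fC(c,\sigma)$ increases with $\sigma$ and $\vol_{\RR^d}(\fC(c,\sigma)\triangle\fC(c,\sigma'))\to0$ as $\sigma'\to\sigma$, so by Theorem \ref{SIEGELTHM} the $\mu_g$-measure of those $\scrP'$ whose intersection number with $\fC(c,\,\cdot\,)$ changes between $\sigma$ and $\sigma'$ is $O(|\sigma-\sigma'|)$, whence $|E_{c,\scrP}(r,\sigma)-E_{c,\scrP}(r,\sigma')|\to0$. (As a consistency check, Theorem \ref{SIEGELTHM} gives expected intersection number $\delta_{d,m}(\scrL)\mu_\scrA(\scrW)\vol_{\RR^d}(\fC(c,\sigma))=\sigma$, in agreement with \eqref{expval}.)

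The main obstacle is the geometric step of the first paragraph --- identifying the limit of the transformed spherical-shell sectors $\fC_T$ with the straight cone \eqref{FCCSDEF}, uniformly enough to supply the sandwiching sets $\fC^\pm_\ve$. This parallels the computation for directions in affine lattices in \cite[Sect.~2]{partI} and is routine once organised, but it is the place where the exact normalisation is forced. The only other point needing care is the upgrade from the $\C_b$-valued statement of Theorem \ref{equi1} to the bounded, $\mu_g$-a.e.\ continuous integrands $u^\pm_{\ve,r}$, which is handled by weak convergence together with the Portmanteau theorem, the relevant boundary sets being null by the Siegel formula.
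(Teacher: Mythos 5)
Your proposal is correct and follows essentially the route the paper intends: it carries out the proof of Theorem \ref{Thm1} with the cylinder $\fZ_\xi$ replaced by the cone $\fC(c,\sigma)$, i.e.\ the rescaling $K(\vecv)\Phi^{t(T)}$ with $t(T)=\frac{1}{d-1}\log T$, the geometric identification of the rescaled spherical-shell sector with \eqref{FCCSDEF}, equidistribution via Theorem \ref{equi1} on $\Gamma\backslash\Gamma H_g$, and the Siegel formula (Theorem \ref{SIEGELTHM}) to control boundary effects and the $\ve$-sandwich. The only deviations are cosmetic: you test against bounded, $\mu_g$-a.e.\ continuous indicators via the Portmanteau theorem instead of the paper's continuous sandwich functions $f^\pm$ built from $a^\pm$, and the Siegel formula should be applied to $\partial\fC^\pm_\ve\times\overline{\scrW}$ and $\overline{\fC^+_\ve}\times\partial\scrW$ (bounded, hence integrable) rather than to $\partial\fC^\pm_\ve\times\scrA$ and $\RR^d\times\partial\scrW$.
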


The proof of this theorem is analogous to that of Theorem \ref{Thm1}, with the cylinder $\fZ_\xi$ replaced by the cone $\fC(c,\sigma)$. 

Theorem \ref{visThm0} considers the set of directions in $\scrP$ {\em with} multiplicity. Although for generic $\scrP$ the multiplicity is typically one, there are important examples where this is not the case. The Penrose tiling and other cut-and-project sets which are based on the construction in Section \ref{NUMFIELDsec} fall into this category, cf.~\cite{Pleasants03}. It would therefore be natural to also consider the statistics of directions {\em without} multiplicity, in analogy with the discussion of primitive lattice points in \cite[Sect.~2.4]{partI}.
\end{appendix}

\section*{Acknowledgments} We thank Michael Baake for instructive discussions on quasicrystals, in particular for bringing Pleasants' paper \cite{Pleasants03} to our attention.
We are also grateful to Manfred Einsiedler for helpful discussions.

\end{document}